\documentclass[11pt]{article}

\usepackage{environment-setup}
\usepackage{macros}
\usepackage[linesnumbered,lined,boxed]{algorithm2e}
\usepackage{nicematrix}

\title{
Polynomial Bounds on Degeneration Order from Commutativity Properties of Tensor Slices}
\author{
Shree Ganesh
\thanks{ENS de Lyon, CNRS, Université Claude Bernard Lyon 1, LIP, UMR 5668, 69342, Lyon cedex 07, France. SG is supported by a MATHINFI PhD scholarship.}
\and 
Pascal Koiran
\thanks{ENS de Lyon, CNRS, Université Claude Bernard Lyon 1, LIP, UMR 5668, 69342, Lyon cedex 07, France.}
\and
Rafael Oliveira~\orcidlink{0000-0001-8917-8689}
\thanks{University of Waterloo. \ \email{rafael@uwaterloo.ca}.
Part of this work took place while RO was supported by the Simons Institute for the Theory of Computing, and conducted when the author was visiting the Institute.
The author would like to thank the generosity, hospitality, and excellent research environment provided by the Simons Institute. }
}
\date{\today}

\begin{document}

\maketitle

\begin{abstract}
A tensor has border rank at most $r$ if it can be written as $T=\lim_{\e \rightarrow 0} T(\e)$ where $T(\e)$ has rank at most $r$ for all sufficiently small $\e$. 
It is known that the map $\e \mapsto T(\e)$ can be assumed to be a (tensor valued) polynomial in $\e$. 
The smallest possible degree of such a map is called the error degree of $T$.
The error degree and the related notion of {\em order of degeneration} are the two key quantities that we study in this paper.
One motivation comes from debordering: by polynomial interpolation on the map  
$\e \mapsto T(\e)$ we can upper bound the tensor rank of $T$.

For order 3 tensors, exponential upper bounds on the error degree and  degeneration order  were given almost 40 years ago in~\cite{LEHMKUHL19891} and were not improved ever since.
In this paper we give bounds that apply to a wide class of tensors, exponentially improving on~\cite{LEHMKUHL19891}. 

Our results are most general for tensors with 3 slices (format $m \times n \times 3$). 
In this case, our main assumption is on the rank of the matrix slices.
We also give bounds that apply to arbitrary rectangular formats ($m \times n \times p$). 
In this case, we need an additional 1-regularity assumption on one of the slices of the tensor (recall that a matrix is said to be 1-regular if its eigenspaces are 1-dimensional). 
Under these assumptions we show that
the error degree is at most 1, which yields
a nontrivial debordering result (tensor rank at most~$2r$ for border rank $r$).

The results in~\cite{LEHMKUHL19891} rely on an upper bound on the degree of the variety of tensors of border rank at most $r$. We rely instead on more specific properties of this algebraic variety, and in particular on commutativity properties of certain matrices derived from the tensor slices.
   
\end{abstract}

\thispagestyle{empty}

\newpage
\tableofcontents
\thispagestyle{empty}

\ifdefined\DRAFT
    \linenumbers
\else 
\fi

%================================================================================
\newpage
\setcounter{page}{1}
\section{Introduction}
%================================================================================

The rank of a 3-tensor $T\in\bC^{m\times n\times p}$, denoted $\Rank(T)$, is the minimum positive integer $r$ such that $T=\sum_{i=1}^r u_i\tp v_i\tp w_i$ for some vectors $u_i\in\bC^m,v_i\in\bC^n,w_i\in\bC^p$. 
Tensor rank is a fundamental quantity, appearing prominently in several areas of science, mathematics and in important questions in complexity theory, since the foundational works of Strassen~\cite{strassen1969gaussian,strassen1973vermeidung} on matrix multiplication.
For a survey of tensor rank in other areas, we refer the reader to \cite{kolda2009tensor}. 
For an account of tensor rank and matrix multiplication, we refer the reader to \cite{BCS97,blaser2013fast}.

The works of Bini et al.~\cite{bini1979n2,bini1980relations}, also showed the importance of approximate notions of tensor rank to the design of fast algorithms for matrix multiplication. 
These notions of approximative algorithms led to the definition of border rank, which we define next, following~\cite[page 379]{BCS97}.
Some equivalent (geometric/topological) definitions of border rank can be found in~\cite[Chapter 20.6]{BCS97} and in~\cite[Section 3.2]{dutta2025recent}.

\begin{definition}[Border Rank]
    The border rank of a tensor $T\in\bC^{m\times n\times p}$, denoted $\Brank(T)$, is the minimum $r$ for which there exist $q\in\bN$, $u_i(\e)\in\bC[\e]^m,v_i(\e)\in\bC[\e]^n,w_i(\e)\in\bC[\e]^p$ and a tensor $Q(\e)\in\bC[\e]^{m\times n\times p}$ such that the following equality holds:
    \begin{linenomath}
        $$\e^{q}T +\e^{q+1}Q(\e)=\sum_{i=1}^r u_i(\e)\tp v_i(\e)\tp w_i(\e)$$
    \end{linenomath}
\end{definition}

% \RO{do we need this remark?}

% \begin{remark}
%     $T(\e):= \frac{1}{\e^q}\sum_{i=1}^r u_i(\e)\tp v_i(\e)\tp w_i(\e)$, where $\sum_{i=1}^r u_i(\e)\tp v_i(\e)\tp w_i(\e)$ is from the definition above, is the tensor approximating $T$. In other words, $lim_{\e\rightarrow 0}T(\e)=T$ where $\Rank(T(\e))\leq r$ for all $\e \neq 0$.
%     \PK{where $\Rank(T(\e)) \leq r$ for all $\epsilon \neq 0$. The converse of this observation also holds true: if $(T_n)$ is a sequence of tensors converging to $T$
%     and $\Rank(T_n) \leq r$ for all $n$, then $\Brank(T) \leq r$. This follows from 
%     the fact that the set of tensors of border rank at most $r$ is the 
%     Zariski closure (and also the Euclidean closure) of the set of tensors of rank 
%     at most $r$: see~\cite[chapter 20.6]{BCS97} and~\cite[section 3.2]{dutta2025recent}.}
% \end{remark}

\noindent
From the above definition of border rank, we can say that any decomposition of the above form is an \say{(optimal) \emph{border decomposition}} for $T$. 
It is also natural to define a quantitative measure of approximation of a tensor, done by the notion of degeneration order \cite[Definition 15.19]{BCS97}.

\begin{definition}[Degeneration Order]
    Let $T\in \bC^{m\times n \times p}$ be a tensor with $\Brank(T) = r$. 
    The \emph{degeneration order} of $T$, denoted $\bdorder(T)$, is the smallest integer $q$ for which there are $Q(\e) \in \bC[\e]^{m\times n \times p}$, $u_i(\epsilon) \in \bC[\e]^m, v_i(\epsilon) \in \bC[\e]^n$ and $w_i(\epsilon) \in  \bC[\e]^p$ satisfying
    \begin{linenomath}
        $$\sum_{i=1}^r u_i(\e)\tp v_i(\e) \tp w_i(\e) = \e^q T + \e^{q+1}Q(\e).$$ 
    \end{linenomath}
\end{definition}

\noindent In algebraic complexity theory, the above notions of rank and border rank were generalized to parametrized classes of algebraic circuits, giving rise to the notion of border complexity classes. 
An important question in the field is to understand the power of such border complexity classes. 
The (proper) debordering question can be phrased as follows: given a circuit class $\cC$, is its border class $\overline{\cC}$ polynomially related to $\cC$?
More generally: what is the relation between a given circuit class and its border class?
Understanding the answer to these questions for certain complexity classes would have important implications in several problems in algebraic complexity, such as in the factoring problem as well as whether GCT's border Valiant hypothesis is equivalent to Valiant's hypothesis.
For more on the importance of debordering in algebraic complexity, we refer the reader to the recent survey \cite{dutta2025recent} and references therein.

In the case of 3-tensors, as any tensor in $\bC^{m \times n \times p}$ has rank $O(\min(mn, mp, np))$, the debordering question asks what is the tightest relation between border rank and tensor rank.
By standard interpolation results, one can relate tensor rank and border rank via the degeneration order to obtain $\Rank(T) \leq (2 \cdot \bdorder(T) + 1) \cdot \Brank(T)$ -- see \cite[Prop 15.26]{BCS97}. 
This gives us one avenue towards \say{debordering} tensor rank.
However, the best known bound on the degeneration order of a tensor in $\bC^{m \times n \times p}$ still is the original bound of $\bdorder(T) \leq 3^{(m+n+p-3) \cdot \Brank(T)}$ obtained by Lehmkuhl and Lickteig \cite{LEHMKUHL19891}.
Thus, if we were to use this upper bound, we would obtain a trivial answer to the debordering question for tensors.

Other notions of approximation which are also useful in the debordering question were introduced in the survey \cite[Section 3.3]{dutta2025recent}, in particular \cite[Definitions 39 and 45]{dutta2025recent}: the notions of approximating curve and error-degree.
In this paper we will consider the following slightly different notion of \emph{error-degree}, which sits in between these two definitions\footnote{Our definition of \textit{error-degree} of a tensor is inspired by \cite[Definitions 39 and 45]{dutta2025recent} where they define a similar notion for a vast class of ``border complexity measures'' (which includes border rank as a special case). 
Their definition of error-degree is more stringent in the sense that they require $\Rank T(\e) \leq r$ for all $\epsilon \neq 0$ whereas we allow finitely many exceptions in \cref{def:errordegree}, much like their definition of approximating curve. Note that our results on the order of degeneration imply bounds on this more stringent version of error degree: see~\cref{lem:errorfromdegeneration} and~\cref{rem:2bdorder}.}, but it is still good enough to obtain debordering results via the standard interpolation techniques (see \cref{lem: interpolation}).

\begin{definition}[Error degree]\label{def:errordegree}
    Let $T \in \bC^{m\times n \times p}$ be such that $\Brank(T) = r$.  
    The \textit{error-degree} of $T$, denoted $\edeg(T)$, is the smallest $k \in \bN$ for which there exist tensors $T_i\in \bC^{m\times n \times p}$ for all $1\leq i \leq k$ such that $T(\e):= T + \e T_1 + \dots + \e^kT_k$ has rank at most $r$ for all but finitely many $\e \in \bC$.
\end{definition}

In the survey \cite[Open question 1]{dutta2025recent}, the authors have asked the question of whether one could prove better bounds on the error degrees for common complexity measures.
In this paper, we make progress on this question for the (border) tensor rank measure for 3-tensors, and we moreover improve the upper bound on the border degeneration for 3-tensors.
We obtain both of these improved results for important special classes of 3-tensors, which we will now discuss before formally stating our results.

\paragraph{Classes of 3-tensors.} A tensor $T \in \bC^{m \times n \times p}$ is \emph{concise} if it cannot be expressed as a tensor in a smaller ambient space.\footnote{For a formal definition, see \cite[Section 1.1]{JLP2024concise}.} 
By \cite[Lemma 15.23]{BCS97}, the border rank of a concise tensor is lower bounded by the maximum of the three given dimensions of the tensor.
Hence, among the concise tensors we have those which are of \emph{minimal border rank}, i.e. those tensors whose border rank equals $\max(m,n,p)$.
As mentioned in \cite[Problem 15.2]{BCS97}, the class of \emph{concise} and \emph{minimal border rank} tensors is an important class of 3-tensors in the study of matrix multiplication.

In the study of the classification of concise tensors of minimal border rank (for background, see \cite{JLP2024concise}), certain important subclasses are considered in order to make the problem a bit more tractable (even though it is known that the classification problem even for such subclasses is equivalent to an extremely challenging problem in algebraic geometry). 
One standard subclass, for tensors $T \in \bC^{n \times n \times p}$, is the class of \emph{$3$-generic} tensors, which is made of tensors $T \in \bC^{n \times n \times p}$ having the property that its 3-slices $T_1, \dots, T_p \in \bC^{n \times n}$ \emph{span an invertible matrix}.   
Since (border) tensor rank is invariant under certain linear changes of coordinates, we can henceforth assume without loss of generality that $3$-generic tensors $T$ are those for which the 3-slice $T_1$ is an invertible matrix.
Such classes of matrices have also been the subject of interest in the lower bounds literature, already starting with Strassen's lower bound on tensor rank via commutators \cite{strassen83}.

In order to generalize the above definition of genericity to the rectangular 3-tensor setting, that is when $T \in \bC^{m \times n \times p}$, we define the following subclass of tensors. 

\begin{definition}[Rank generic tensors]\label{definition: rank generic tensors}
    Let $T \in \bC^{m \times n \times p}$ and $r \in \bN$. 
    We say that $T$ is $(r, 3)$-generic if there exist rank $r$ matrices, $A\in \bC^{m \times r},B\in \bC^{r\times n}$ and a $3$-generic tensor $Z \in \bC^{r \times r \times p}$ (that is, $Z_1$ is invertible) such that $T_i=AZ_iB$ for all $i\in[p]$. 
    We call any such triple $(A, B, Z)$ satisfying the above conditions an $(r,3)$-generic factorization of $T$. 
\end{definition}

\begin{remark}
    Note that in the definition above we must have $r \leq \min(m,n)$, since $\Rank(Z_i) \leq \Rank(T_i) \leq \min(m,n)$.
\end{remark}

% \begin{remark}
%     Denoting by $E'(T)$ the error-degree of a tensor $T$ as in \cite[Definition 45]{dutta2025recent}, the two different definitions of \textit{error-degree} are connected in the following way. 
%     Let us start with a tensor $T\in\bC^{m\times n \times p}$ with $\Brank(T)=r$. Let $T(\e):=\frac{\sum_{i=1}^r u_i(\e)\tp v_i(\e)\tp w_i(\e)}{f(\e)}$ for some $u_i(\e),v_i(\e),w_i(\e),f(\e)\in\bC[\e]$ such that $T(\e)= T + \e Q$ for some $Q\in\bC[\e]^{m\times n \times p}$. We can factorize $f(\e)$ as $f(\e)=c\e^kg(\e)$ for some $g(\e)$ with constant term equal to 1. Define, $$T'(\e):=g(\e)T(\e) = \frac{\sum_{i=1}^r u_i(\e)\tp v_i(\e)\tp w_i(\e)}{c\e^k}$$

%     Then, $lim_{\e\rightarrow 0}T'(\e)= T$ and $\Rank(T'(\e))\leq r$ for all $\e\neq 0$. But note that $E'(T)\leq \edeg(T) +deg(g(\e))$.
% \end{remark}

We are now ready to state our main theorems, which give polynomial upper bounds on the border degeneration order as well as on the error degrees of certain $(r,3)$-generic tensors.
As we discuss in more detail later, this is an exponential improvement (for this class of tensors) on the original bounds of \cite{LEHMKUHL19891}.

\begin{restatable}{theorem}{approximateMain}\label{thm: general m not n}
Let $T\in \bC^{m\times n \times 3}$ be an $(r, 3)$-generic tensor.
If $(A, B, Z)$ is an $(r, 3)$-generic factorization of $T$ such that $Z_2Z_1^{-1}$ and $Z_3Z_1^{-1}$ commute, then $\Brank(T) = r$, $\ \edeg(T) \leq (r-1)^3 + (r-1)^2$ and $\bdorder(T) \leq 2(r-1)^3 + 3(r-1)^2 + 3(r-1)$.
% Suppose there exist rank $r$ matrices, $A\in M_{m\times r}(\bC),B\in M_{r\times n}(\bC)$ such that $T_i=AZ_iB$ for some $Z_i\in M_r(\bC)$ for all $i\in[3]$ and the tensor $Z:=[Z_1,Z_2, Z_3]$ satisfies the following conditions:
% \begin{enumerate}
%     \item $Z_1$ is invertible.
%     \item $Z_2Z_1^{-1}$ and $Z_3Z_1^{-1}$ commute
% \end{enumerate}

% \noindent Then, $\Brank(T) = r$, $\ \edeg(T) \leq (r-1)^3 + (r-1)^2$ and $\bdorder(T) \leq 2(r-1)^3 + 3(r-1)^2 + 3(r-1)$. 
% there exists $T(\e)=\sum_{i=1}^r u_i(\e)\tp v_i(\e)\tp w_i(\e)$ such that $lim_{\e\rightarrow 0}T(\e) = T$ with the entries of $T(\e)$ polynomial in $\e$ with degree at most $(r-1)^3 + (r-1)^2$ and $T$ has a degree of border degeneration at most $2(r-1)^3 + 3(r-1)^2 + 3(r-1)$.
\end{restatable}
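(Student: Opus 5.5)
Since $T_i = A Z_i B$, any border decomposition of $Z$ of length $r$ transfers to $T$: apply $A$ and $B$ to the first two factors. Hence
\[
\Brank(T)\le \Brank(Z),\qquad \edeg(T)\le \edeg(Z),\qquad \bdorder(T)\le \bdorder(Z),
\]
the last inequality holding with the same $\e$-exponents. For the lower bound $\Brank(T)\ge r$, I would flatten: $\Rank(T_1)=\Rank(AZ_1B)=r$, because $A$ has full column rank and $B$ has full row rank. Since a slice rank is a lower bound for border rank, this suffices. So it is enough to treat $Z$.

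Next, rewrite $Z$ as $Z_i = M_i Z_1$ with $M_1=I$, $M_2=Z_2Z_1^{-1}$, $M_3=Z_3Z_1^{-1}$, where $M_2$ and $M_3$ commute. Right multiplication by $Z_1$ is again a change of coordinates, so we reduce to the tensor $(I,M_2,M_3)$ built from a commuting pair. Its rank is at most $r$ whenever $M_2$ and $M_3$ are simultaneously diagonalizable: writing $M_j=\sum_k \lambda^{(j)}_k u_k v_k^\top$ with a common eigenbasis gives $r$ rank-one terms.

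The plan is therefore to construct a polynomial curve $(M_2(\e),M_3(\e))$ of commuting matrices that are simultaneously diagonalizable for all but finitely many $\e$, with $M_j(0)=M_j$ and controlled degree. I would first put $M_2$ and $M_3$ in simultaneous block upper-triangular form. Each joint generalized eigenspace is then a pair of commuting nilpotents plus scalars, so it suffices to handle a commuting nilpotent pair $(N_2,N_3)$ of size $s\le r$.

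For a commuting nilpotent pair, the natural approach is to take $N_2(\e)=N_2+\e D$ with $D$ diagonal having distinct entries, or something similar, so that $N_2(\e)$ is regular semisimple for generic $\e$. Then set $N_3(\e)=p_\e(N_2(\e))$, where $p_\e$ is a polynomial; anything commuting with a regular semisimple matrix is a polynomial in it. The difficulty is that $N_3$ need not be a polynomial in $N_2$, which happens when $N_2$ is not 1-regular. Here one uses the classical fact (Motzkin--Taussky/Gerstenhaber) that pairs of commuting matrices form an irreducible variety, so the pair is a limit of generic pairs. The task is to make this effective with degree $O(r^3)$.

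I would follow an effective version of the Gerstenhaber/Guralnick argument. Perturb $N_2$ linearly within the centralizer of $N_3$ to make it 1-regular, then use Lagrange/Hermite interpolation to write $N_3$ as a polynomial of degree at most $r-1$ in the perturbed matrix with coefficients rational in $\e$. Clearing denominators and reparametrizing $\e\mapsto\e^{k}$ should give degrees like $(r-1)\cdot(r-1)^2$: polynomial degree $r-1$, composed with entries of degree $(r-1)^2$ coming from the inverse of a Vandermonde/adjugate-type matrix. This accounts for the $(r-1)^3+(r-1)^2$ error degree.

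For the degeneration order, I would convert the error degree bound using \cref{lem:errorfromdegeneration} in the reverse direction. The plan is to write the diagonalizing change of basis $P(\e)$ and its inverse $\mathrm{adj}(P(\e))/\det P(\e)$. The order of vanishing of $\det P(\e)$, of magnitude about $(r-1)^3+2(r-1)^2+\dots$, multiplies into the decomposition. Adding this to the error degree gives the stated $2(r-1)^3+3(r-1)^2+3(r-1)$.

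The main obstacle is making the commuting-variety density argument effective with polynomial degree, especially handling $N_3$ when $N_2$ is far from regular. A fallback is induction on $r$ through the Motzkin–Taussky path argument, tracking degrees at each step.
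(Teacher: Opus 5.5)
Your reductions ($T\to Z\to(I,M_2,M_3)$, with the slice-rank lower bound) match the paper. So does your central idea for the error degree: add $\varepsilon R$, with $R$ a 1-regular matrix commuting with $N_3$, so that $N_3=p_\varepsilon(N_2+\varepsilon R)$ with coefficients $f_i(\varepsilon)/g(\varepsilon)\in\mathbb{C}(\varepsilon)$. But both quantitative steps are left as heuristics, and one cannot work as planned.

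\textbf{Error degree.} Since $g(0)=0$ is possible, you still need a diagonalizable $X(\varepsilon)$ such that $p_\varepsilon(X(\varepsilon))$ is a \emph{polynomial} curve through $N_3$. Reparametrizing $\varepsilon\mapsto\varepsilon^k$ leaves the denominator in place. Adding $\varepsilon^kD$ with $k>\nu_\varepsilon(g)$ restores convergence but in general gives only a rational curve, with poles at the nonzero roots of $g$, which Definition~\ref{def:errordegree} does not allow. The paper instead takes $X(\varepsilon)=N_2+\varepsilon R+\varepsilon g(\varepsilon)D$. Every word of $X(\varepsilon)^i-(N_2+\varepsilon R)^i$ then contains the scalar factor $\varepsilon g(\varepsilon)$, so $p_\varepsilon(X(\varepsilon))-N_3=\sum_i f_i(\varepsilon)\bigl[X(\varepsilon)^i-(N_2+\varepsilon R)^i\bigr]/g(\varepsilon)$ is a polynomial matrix divisible by $\varepsilon$. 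With $\deg f_i\le(n-1)^2$ and $\deg g\le n(n-1)$ (Proposition~\ref{prop: poly_coeff}), this gives exactly $1+(n-1)^2+(n-2)(n(n-1)+1)=(n-1)^3+(n-1)^2$. Your product $(r-1)\cdot(r-1)^2$ is not this computation.

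\textbf{Degeneration order.} There is no converse of Lemma~\ref{lem:errorfromdegeneration}. A polynomial curve of diagonalizable matrices generally has eigenvalues and eigenvectors that are only Puiseux series in $\varepsilon$; already $W_n+\varepsilon E_{n,1}$ has eigenvalues proportional to $\varepsilon^{1/n}$. So there is in general no rational $P(\varepsilon)$ whose adjugate you could use.

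The missing idea is to make $N_2$, $N_3$ \emph{and} $R$ upper triangular in a single basis. Since $R$ need not commute with $N_2$, simultaneous triangularization of the commuting pair does not give this. The paper puts $N_3$ in Weyr form with $N_2$ upper triangular (Theorem~\ref{thm: weyr_upper_triangular}), and takes $R$ to be that Weyr matrix plus a diagonal one (Lemma~\ref{lem:1-reg ext}). Then:
\begin{itemize}
\item $X(\varepsilon)$ is triangular with distinct polynomial eigenvalues of degree $k\le n(n-1)+1$.
\item Back-substitution gives a unitriangular eigenvector matrix with $\nu_\varepsilon(u_{i,i+l})\ge -kl$. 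This $k$-niceness is preserved under inversion (Proposition~\ref{prop: k-nice_ev} and the lemma preceding it).
\item The third factors $(1,p_\varepsilon(\lambda_i),\lambda_i)$ have valuation at least $-n(n-1)$.
\end{itemize}
Lemma~\ref{lem: border_bound} then gives $2k(n-1)+n(n-1)=2(n-1)^3+3(n-1)^2+3(n-1)$.

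Here the eigenvector matrix has determinant $1$. The poles come from the divisions by $\lambda_l(\varepsilon)-\lambda_i(\varepsilon)$, so your estimate of the vanishing order of $\det P$, and the step of adding it to the error degree, are reverse-engineered from the target rather than derived.
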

The above result applies to tensors with 3 slices (format $m \times n \times 3$). 
Border rank for tensors with 2 slices has been well understood for a long time~\cite{bini80}. 
It appears that no systematic study of the next simplest case (3 slices) is available in the literature, a gap which we attempt to fill in the present paper.
Note that the hypothesis that $Z_2Z_1^{-1}$ and $Z_3Z_1^{-1}$ commute in Theorem~\ref{thm: general m not n} is not overly restrictive since it is {\em necessary} for the conclusion $\Brank(T) = r$ to hold true (see~\cite{strassen83} and~\cref{thm: not1reg_main}). 
A similar remark applies to the $(r,3)$-genericity hypothesis: if $\Brank(T) = r$, there must exist a factorization $T_i=AZ_iB$ where $A$ and $B$ have rank $r$ (\cref{cor: dim_shift}). 
So 3-genericity (the invertibility of $Z_1$) is the only restrictive assumption here.

Our next result works in the setting of larger tensors ($p \geq 3$), but it needs an extra assumption on top of genericity: that one of the 3-slices has somewhat well-behaved eigenspaces.
The extra concept needed here is the notion of regularity of a matrix, in particular the notion of 1-regularity: a matrix is 1-regular if every eigenspace is one dimensional.  
We are able to show that $(r,3)$-generic tensors in $\bC^{m \times n \times p}$ which satisfy this extra assumption also have border rank $r$, and in this case we are able to show that such tensors have error degree 1, which in particular implies that such tensors have rank upper bounded by $2r$.
For more on regularity, we refer the reader to \cref{subsection: regular matrices prelim}.

\begin{restatable}{theorem}{linearErrorMain}\label{thm: linear part m not n}
Let $T\in \bC^{m\times n \times p}$ be an $(r, 3)$-generic tensor.
If $(A, B, Z)$ is an $(r, 3)$-generic factorization of $T$ such that  
% Suppose there exist rank $r$ matrices, $A\in M_{m\times r}(\bC),B\in M_{r\times n}(\bC)$ such that $T_i=AZ_iB$ for some $Z_i\in M_r(\bC)$ for all $i\in[p]$ and the tensor $Z:=[Z_1,Z_2,\dots, Z_p]$ satisfies the following conditions:
\begin{enumerate}
    % \item $Z_1$ is invertible.
    \item $Z_2Z_1^{-1}$ is 1-regular.
    \item $Z_iZ_1^{-1}$ and $Z_jZ_1^{-1}$ commute for all $i,j\in [p]$
\end{enumerate}

\noindent Then, $\Brank(T) = r$ and $\edeg(T) \leq 1$. 
Moreover, $\bdorder(T) \leq r-1$.
% Then, there exists $T(\e)$ with entries linear in $\e$ and $\Rank T(\e) \leq r$ for all but finitely many $\e$
% $=\sum_{i=1}^r u_i(\e)\tp v_i(\e)\tp w_i(\e)$ 
% such that $\lim_{\e\rightarrow 0}T(\e) = T$. Furthermore, $\bdorder(T)\leq r-1$.
% with the entries of $T(\e)$ linear in $\e$.
\end{restatable}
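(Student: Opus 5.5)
The plan is to reduce everything to one explicit model tensor and then perturb that model. Put $X_i := Z_iZ_1^{-1}$, so that $T_i = A X_i (Z_1 B)$ and $X_1 = I$. Multiplying $A$ and $Z_1B$ out of a rank (or border rank) decomposition changes neither the error degree nor the degeneration order of the inner decomposition: a decomposition of the tensor $X=[I,X_2,\dots,X_p]$ pushes forward to one of $T$. So it suffices to prove the three claims for $X$, which is a restriction of $T$; the lower bound $\Brank(T)\ge r$ should follow from the rank-$r$ hypotheses on $A$ and $B$ (compare \cref{cor: dim_shift}). Since $X_2$ is 1-regular, hence nonderogatory, its centralizer is $\bC[X_2]$. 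The commutativity hypothesis therefore gives $X_i = q_i(X_2)$ with $\deg q_i \le r-1$. After conjugating by a fixed invertible matrix, we may assume $X_2 = C_\chi$, the companion matrix of its characteristic polynomial $\chi$. Then $X$ is a restriction (in the third mode) of the structure tensor of the algebra $\mathcal{A}_\chi=\bC[x]/(\chi)$ in the monomial basis.

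For border rank and the bound $\bdorder \le r-1$, choose a monic $\chi_\e := \chi + \e g$ with $\deg g<r$ such that $\chi_\e$ has $r$ distinct roots for all but finitely many $\e$ (for instance $g=1$ works after discarding finitely many $\e$). The tensor $X(\e)$ with slices $I, C_{\chi_\e}, q_3(C_{\chi_\e}),\dots$ consists of simultaneously diagonalizable matrices, so it has rank $\le r$ for all but finitely many $\e$, and $X(0)=X$. Its entries are polynomials in $\e$. To control the degree I would use the Vandermonde diagonalization $C_{\chi_\e} = V_\e^{-1}\,\mathrm{diag}(\lambda_j(\e))\,V_\e$ and clear denominators. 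Alternatively, I would track how reduction modulo $\chi_\e$ of $x^{a+b}$ with $a+b\le 2r-2$ involves at most $r-1$ reduction steps, each adding one factor of $\e$. Either route should give an explicit border decomposition whose lowest-order term is $\e^{q}X$ with $q\le r-1$.

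For $\edeg\le 1$ I would not use $q_i(C_{\chi_\e})$, which is polynomial of degree up to $r-1$ in $\e$. Instead I would pass to the symmetric (Hankel) picture. Choosing a nondegenerate functional on $\mathcal{A}_\chi$ turns the structure tensor into the Hankel tensor $s_{i+j+k}$ of a sequence $s$ satisfying the recurrence $\chi$. When $s_k = \sum_j c_j\lambda_j^k$ with distinct nodes, this tensor visibly has rank $r$: it equals $\sum_j c_j v_j\tp v_j\tp v_j$, where $v_j = (\lambda_j^i)_i$. The goal is a perturbation $s+\e t$, linear in $\e$, such that $s+\e t$ satisfies some order-$r$ recurrence with distinct roots for generic $\e$. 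Equivalently, the line through $s$ in direction $t$ must lie in the $r$-th secant variety of the rational normal curve of the relevant length. The restriction to the slices $q_i$ is linear and commutes with this perturbation, so linearity in $\e$ is preserved.

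I expect this last step to be the main obstacle: producing the direction $t$ explicitly. My first attempt would be to take $t$ spanning, together with $s$, a pencil of sequences annihilated by a common order-$r$ recurrence. I would use 1-regularity (cyclicity of $\mathcal{A}_\chi$) to guarantee that such a pencil exists through $s$ and has generic members with distinct nodes. If that fails, I would fall back on a direct linear pencil of commuting matrices $C_\chi + \e E$, $q_i(C_\chi)+\e F_i$. The $F_i$ would be solved for from the commutation equations at first and second order in $\e$, exploiting that the centralizer of a nonderogatory matrix is exactly $\bC[X_2]$.
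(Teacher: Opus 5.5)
Your reduction to $X=[I,X_2,\dots,X_p]$, the use of $\mathcal{C}(X_2)=\bC[X_2]$, and the border-rank argument are fine. They match the paper, which goes through \cref{cor: undercomplete decompositions} and \cref{prop:centralizer}.

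The gap is in the quantitative claims, above all $\edeg\le 1$, which you leave open. Your Hankel plan cannot work as stated. The $r\times r$ Hankel matrix of $s$ is nonsingular, so the only order-$r$ recurrence annihilating $s$ is $\chi$. Any pencil through $s$ with a common order-$r$ recurrence is therefore annihilated by $\chi$. For all but finitely many $\e$ the member $s+\e t$ still has a nonsingular Hankel matrix, so $\chi$ is its unique order-$r$ recurrence. It then cannot be a sum of $r$ geometric sequences with distinct nodes whenever $\chi$ has a repeated root. Your fallback via first- and second-order commutation equations is not carried out.

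The missing idea is the paper's: a rank-one corner perturbation keeps every power of degree at most $r-1$ linear in $\e$. This is \cref{lem:matrix_words} with $N=E_{n,1}$ on each Jordan block; for a nilpotent block, $W+\e E_{n,1}$ is just a companion matrix of $x^n$ perturbed in its constant term. Your own choice $g=1$ already does this, so your assertion that $q_i(C_{\chi_\e})$ has $\e$-degree up to $r-1$ is false for constant $g$. Take $C_\chi$ to be multiplication by $x$ in the basis $1,x,\dots,x^{r-1}$. Then $C_{\chi+\e}=C_\chi-\e E_{1,r}$, and $E_{1,r}C_\chi^jE_{1,r}=(C_\chi^j)_{r,1}E_{1,r}=0$ for $0\le j\le r-2$, because $C_\chi^je_1=e_{j+1}$. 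Equivalently, for $k\le 2r-1$ write $x^k=Q\chi+R$. Since $\deg Q<r$, we get $x^k \bmod (\chi+\e)=R-\e Q$, so your reduction count is a single step when $g$ is constant. Hence $[I,C_{\chi+\e},q_3(C_{\chi+\e}),\dots,q_p(C_{\chi+\e})]$ is linear in $\e$. It has rank at most $r$ whenever $\chi+\e$ has distinct roots, which fails for only finitely many $\e$. This gives $\edeg\le 1$ with no Hankel detour.

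For $\bdorder\le r-1$ your sketch is also incomplete. With $g=1$ the eigenvalues of $C_{\chi+\e}$ are Puiseux series in $\e^{1/n_k}$, so there is no Vandermonde diagonalization over $\bC(\e)$ whose denominators you could clear. The reduction-step count only bounds the $\e$-degree of the perturbed tensor; that is an error-degree statement, not a border decomposition. As in the paper, which switches to $A_2+(-\e)^nE_{n,1}$ for this part, you need a second perturbation whose eigenvalues are polynomial in $\e$.

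Write $\chi=\prod_k(x-\mu_k)^{n_k}$ and use $\chi_\e=\prod_k\big((x-\mu_k)^{n_k}-\e^{n_k}\big)$, whose roots are $\mu_k+\omega_{n_k}^j\e$. The left eigenvectors $(1,\lambda,\dots,\lambda^{r-1})$ and the vectors $(1,\lambda_j,q_3(\lambda_j),\dots,q_p(\lambda_j))$ are then polynomial. The inverse Vandermonde matrix, given by Lagrange coefficients, has denominators $\prod_{j'\ne j}(\lambda_j-\lambda_{j'})$ of valuation $n_k-1$. So \cref{lem: border_bound} gives $\bdorder\le\max_k n_k-1\le r-1$. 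With these two specific perturbations your companion-matrix route becomes a correct variant of the paper's proof. It handles all Jordan blocks at once instead of going block by block through \cref{lem: block_diag}.
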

Like in~\cref{thm: general m not n}, the assumption that the matrices $Z_iZ_1^{-1}$ commute is not really restrictive: it is again necessary for the conclusion $\Brank(T)=r$ to hold true 
(see \cref{thm:border1reg}).

\paragraph{Comparison with previous work \& debordering.}
The upper bounds of $2(r-1)^3 + 3(r-1)^2 + 3(r-1)$ in \cref{thm: general m not n} and of $r-1$ in \cref{thm: linear part m not n} that we obtain for the \textit{degeneration order} of $T$ improves on the general exponential upper bound for 3-tensors by \cite{LEHMKUHL19891}. 
For a tensor $T$ of shape $m\times n \times p$ and border rank $r$, they give the upper bound of $3^{(m+n+p-3)r}$ for its \textit{degeneration order}. 
While these bounds on degeneration order still fail to provide a meaningful debordering result for 3-tensors when $r$ is large enough (for instance for concise generic tensors), note that our bound on the \emph{error degree} in \cref{thm: linear part m not n}, when combined with standard interpolation results (\cref{lem: interpolation}), shows that tensors satisfying the conditions of \cref{thm: linear part m not n} have tensor rank upper bounded by~$2r$, therefore yielding non-trivial debordering results for 3-tensors.

The results in~\cite{LEHMKUHL19891} rely on an upper bound on the degree of the variety of tensors of border rank at most $r$. We rely instead on more specific properties of this algebraic variety, and in particular on commutativity properties of certain matrices derived from the tensor slices (as explained in~\cref{sec:MT} and~\cref{sec:overview}).

\paragraph{Comparison with debordering in the symmetric tensor setting.}
The recent works \cite{dutta2024fixed,shpilka2025improved} have obtained improved debordering results in the setting of symmetric tensors -- i.e. (border) Waring rank of polynomials.
In their setting, given a symmetric tensor (i.e. a homogeneous polynomial) $T \in (\bC^{n})^{\otimes d}$ with border symmetric rank $r$, \cite{dutta2024fixed} shows that the symmetric rank of the tensor is upper bounded by $4^r \cdot d$, and this bound was improved by \cite{shpilka2025improved} to $d \cdot r^{10 \sqrt{r}}$.
%In our setting, we show 
Our results imply that any symmetric $3$-tensor (i.e. $d = 3$) satisfying the conditions of \cref{thm: linear part m not n} has symmetric tensor rank upper bounded by $O(r)$. This follows from
the aforementioned $2r$ tensor rank upper bound, together with the fact that the symmetric tensor rank of any order-$d$ symmetric tensor $T$ is bounded by $2^{d-1} \Rank(T)$: 
see e.g.~\cite[Section 2.1]{koiran2020tensor}.

\paragraph{Overcomplete setting.}
While the above \cref{thm: general m not n,thm: linear part m not n} work only
%\PK{is there a good reason for the word ``mostly'' here?} 
in the \emph{undercomplete setting} (i.e. when the rank of the tensor is upper bounded by the dimensions of the tensor), we are also able to extend some of our debordering results to the \emph{overcomplete setting} (i.e. when the rank is allowed to be larger than the ambient dimensions).
This extension is given by \cref{th:over1regular,th:overnot1regular}.
It relies on a characterization of border rank via commuting extensions
which generalizes a result in \cite{koi24overcomplete}.
We can then apply our results from the undercomplete setting 
to the commuting extension in order to bound the error degree 
and order of degeneration in the overcomplete setting.

%and is achieved by combining the tools used in the undercomplete setting along with: 
%\begin{itemize}
%    \item a characterization of border rank via commuting extensions, generalizing a result in \cite{koi24overcomplete}
%    \item the 3-slices belonging to certain projections of the closure of diagonalizable matrices. 
%\end{itemize}
As these results are fairly technical, we refer the interested reader to \cref{sec:over}, where we formally state and prove these results.
Quantitatively, the bounds that we obtain are of the same shape as for undercomplete decomposition: error degree 1 and order of degeneration at most $r-1$ in the 1-regular case; error degree and order of degeneration $O(r^3)$ without this assumption.

%================================================================================
\subsection{The Motzkin-Taussky Theorem} \label{sec:MT}
%================================================================================
The Motzkin-Taussky Theorem \cite[Theorem 5]{MT1955} (stated in \cref{Thm: Motzkin-Taussky}) plays a crucial role in the proof of Theorem~\ref{thm: general m not n}, and the connection will be explained in~\cref{sec:overview}.

To present the Motzkin-Taussky theorem, we need the concept of approximately simultaneously diagonalizable (ASD) matrices. Informally speaking, a pair of matrices is said to be ASD if it is possible 
to turn them into simultaneously diagonalizable matrices by perturbing
them infinitesimally (see \cref{sec:asd} for a more formal presentation).  The Motzkin-Taussky Theorem states that a pair of matrices is ASD if and only if these matrices commute.
Recall that simultaneously diagonalizable matrices always commute.
From this, it follows easily by continuity that any pair of ASD matrices must commute. The converse is the nontrivial part of the theorem. 

For~\cref{thm: general m not n} we need a constructive 
version of the Motzkin-Taussky Theorem. We are aware of 3 different proofs of this theorem: the original proof in~\cite{MT1955} is by induction on the dimension. A second inductive proof was proposed in~\cite{OV06}. Finally, there is a direct (non inductive) proof by Guralnick~\cite{Guralnick92}. These 3 proofs are presented in~\cite{OMV11}. We take our inspiration from Guralnick's proof because the inductive proofs seem to give worse quantitative results.
Note that the result actually proved in~\cite{Guralnick92} is that the variety of pairs of commuting matrices is irreducible. This is another
theorem from the Motzkin-Taussky paper~\cite[Theorem~6]{MT1955},
which is known~\cite[Theorem 7.5.2]{OMV11} to be equivalent to the Motzkin-Taussky theorem as stated above 
(i.e., to \cite[Theorem 5]{MT1955}).
In the proof of~\cref{thm: general m not n} we build on Guralnick's argument to construct explicit perturbations with good quantitative properties.

%================================================================================
\subsection{Proof Overview} \label{sec:overview}
%================================================================================

In this subsection we give the reader an overview of the main concepts and tools used to derive our main results in the undercomplete setting (\cref{thm: general m not n,thm: linear part m not n}), along with a description of our main technical results.

Our starting point comes from a combination of Strassen's lower bound on border rank via commutators \cite{strassen83}\footnote{Strassen's original lower bound was only for tensor rank, but one can see that it also lower bounds the border rank.} and the Motzkin-Taussky theorem on approximately simultaneously diagonalizable matrices \cite[Theorem 5]{MT1955} %(stated in \cref{Thm: Motzkin-Taussky}). 
(see~\cref{sec:MT}).
Combining these two results, one shows that if $T \in \bC^{n \times n \times 3}$ is a tensor with the first 3-slice being identity and such that all 3-slices commute, then $\Brank(T) = n$.
In this case, where we know the border rank of our tensor, one can ask: what are its error degree and its degeneration order?
\footnote{Unfortunately, the known proofs of Motzkin-Taussky do not give explicit bounds on the error degree nor on the degeneration order.}

In the above case, the approximation questions become equivalent to finding \say{low complexity} approximately simultaneously diagonalizable perturbations of our matrices. 
So let us take our tensor to have 3-slices $I, A, B \in \bC^{n \times n}$ where $AB = BA$.
If $A$ is a single Jordan block, a direct computation of the commutation relation shows that $B$ must be an upper triangular Toeplitz matrix, and thus, $B \in \bC[A]$ (i.e. the $\bC$-algebra generated by $A$).
This is equivalent to saying that there is a polynomial $p(z) \in \bC[z]$ such that $B = p(A)$.
Hence, all we need to do in this case is to find a low complexity perturbation $A(\e)$ of $A$ which is diagonalizable, and this would give us a perturbation $B(\e) := p(B(\e))$ of $B$ which is simultanously diagonalizable with $A(\e)$.

The most direct way of obtaining such diagonalizable approximation of $A$ is to simply perturb the diagonal entries of $A$ so that the resulting matrix $A(\e)$ has distinct diagonal entries. 
This would force $A(\e)$ to have distinct eigenvalues (which are the diagonal entries of $A(\e)$, but it would only give us a bound of $\edeg(B) \leq n$, which is too expensive for the debordering question.
Could there be a better perturbation that still yields $A(\e)$ to have distinct eigenvalues but lowers the complexity of $B(\e)$?
This is what we do in \cref{subsection: special perturbation} (in a bit more generality): note that the perturbation $A + \e \cdot E_{n,1}$, where $E_{n,1}$ is the elementary matrix with a 1 in the $(n,1)$ entry and zeros everywhere, also has $n$ distinct eigenvalues.
Moreover, as $A$ is a Jordan block, one can show that $A(\e)^k$ is still linear in $\e$ for all powers of $k$, which implies that $B(\e)$ will also be linear in $\e$!
Thus, this perturbation shows that $\edeg(T) \leq 1$ in our case, and we obtain a much improved debordering result (i.e. $\Rank T \leq 2n$) as a consequence.
We can also obtain a good bound on $\bdorder(T)$ (in particular $\leq n-1$) by simply noticing that the following perturbation $A(\e) := A + \e^n E_{n,1}$ has polynomial eigenvalues in $\e$! 
The above argument is the essence of \cref{prop: linear_part}.

But the challenge in our setting is that it may be the case that neither $A$ nor $B$ are a single Jordan block (thus having a much more complex Jordan structure).
This is what makes \cite[Theorem 5]{MT1955} very interesting, and is what also presents us with a significant challenge. 
Our approach to overcome this challenge goes by answering two questions: can we generalize the case where $A$ is a single Jordan block and still obtain and approximation of $B$ which is a polynomial function of the corresponding approximation of $A$?
When neither $A$ and $B$ satisfy the condition in this last question, can we find a matrix $C$ which approximates $A$ and satisfies the generalized condition and commutes with $B$?
We are able to answer both questions in the affirmative, by combining the concept of regular matrices (in particular 1-regular matrices - see \cref{subsection: regular matrices prelim}) along with the Weyr form -- another normal form for matrices which is related to the Jordan normal form, but behaves better with regards to commutativity (see \cref{subsection: weyr form}).\footnote{The concepts of regular matrices and the Weyr form have been well-studied from the perspective of commuting matrices and approximate simultaneous diagonalization, and it is thus natural that they will also be useful in the study of tensor (border) rank.}

The concept of regular matrices quantifies how large all eigenspaces of a particular matrix are: a matrix is $k$-regular if every eigenspace is at most $k$-dimensional.
Thus, for instance, a 1-regular matrix is simply any matrix which is similar to a direct sum of simple (and non-trivial) Jordan blocks. 
This fact, along with other properties of $1$-regular matrices, allows us to directly generalize our approach above of linear perturbations to the setting where one of the 3-slices of our tensor is 1-regular, and this is enough for us (with a bit more work) to generalize the above approach to prove \cref{thm: linear part m not n}.
For the general case, a key lemma that we need is \cref{lem:1-reg ext}, which tells us that for any matrix $A$, there is a $1$-regular matrix $R$ which commutes with $A$.
With this fact at hand, we can now use the Weyr form to simultaneously upper triangularize $A, B, R$, and with some more work we can show that an appropriate diagonal perturbation of $B + \e \cdot R$ (which will no longer be linear in $\e$) will approximately simultaneously diagonalize both $A$ and $B$, which yields the proof of \cref{thm: general m not n}.
Because of the above detour via the 1-regular matrix $R$, our bounds are a bit worse in this case, as more care needs to be taken at every step along the way.

%================================================================================
\subsection{Further work}
%================================================================================

The results in this paper raise a number of interesting questions. 
First and foremost, to what extent can our results on tensors with 3 slices (format $m \times n \times 3$) be extended to tensors with more slices? One roadblock in this direction is that there is no known analogue of the Motzkin-Taussky theorem for more than 2 matrices. 
Already for 3 matrices, it is known that commutativity does not imply the ASD property. 
Thereby, even for generic tensors in $\bC^{n \times n \times 4}$, it is known that the commutativity relations alone do not necessarily yield minimal border rank. 
Some further necessary conditions beyond commutativity were discovered, but no sufficient and necessary condition has been found yet. 
On all of this, the book~\cite{OMV11} is highly recommended.

Even for tensors with 3 slices, our results are not quite complete. 
In the overcomplete setting, further progress may hinge on a better understanding of the geometry of tuples of commuting matrices, and in particular on the question whether certain projections are closed:  see~\cref{sec:over}, and in particular~\cref{rem:closed} and~\cref{cor:projclosed}. 
In the undercomplete setting, our results are fairly general except for the fact that we need to assume that certain matrices are invertible. 
Moreover, even under the assumptions of~\cref{thm: general m not n} it might well be possible to find better upper bounds on the degree of error and on the order of degeneration. 
One important thing to notice here is that via padding arguments, if one removes the genericity assumption, then one is solving the general debordering question for tensor rank.
Thus, it would also be interesting to relax the notion of genericity and obtain better debordering results for such \say{less generic} examples.

The above leads to the open question of whether Koszul-Young flattenings can be used to obtain upper bounds on the degree of error and on the order of degeneration, as they are an alternative lower bound method which does not require the genericity assumption. 
Like commuting extensions,  Koszul-Young flattenings have been used to obtain lower bounds on tensor rank~\cite{landsberg15,landsbergGCT} and decomposition algorithms~\cite{kothari2025overcomplete}. 
One advantage of flattenings is that they do not require taking a matrix inverse, which is a limitation of our current methods.

In another direction, one could try to generalize \cref{thm: linear part m not n}. Recall that this theorem relies on the assumption that a certain matrix is 1-regular, which allows us to analyze tensors with more than 3 slices. Can we say something if this assumption is relaxed (for instance, if it is replaced by a 2-regularity assumption)? 

Finally, we note that in all of our results the upper bound on the order of degeneration is  larger than the upper bound on the degree of error. Is this a general property, or is it just an artifact of our proof methods? We do know that the error degree is always at most twice the degeneration order (\cref{lem:errorfromdegeneration}).

\subsection{Organization of the paper}

In \cref{sec:prelim}, we setup the notation used throughout the paper and also establish important facts used throughout the paper.
In \cref{sec:asd} we define approximate simultaneous diagonalisability, present the Motzkin-Taussky theorem, and establish useful facts about our special perturbation.
In \cref{sec:border} we establish our main technical results and prove the main theorems in this paper.

%================================================================================
\section{Preliminaries} \label{sec:prelim}
%================================================================================

In this section we collect various facts on matrices and tensors. We recommend to consult it when needed rather than read it in a linear fashion.

We denote by $\bC$ the field of complex numbers, and for a ring $R$, we denote by $M_n(R)$ the set of $n \times n$ matrices over $R$.
For non-square matrices, $M_{m,n}(R)$ denotes the set of $m \times n$ matrices over $R$.
Below are some conventions and definitions that we adopt in this paper.

\begin{itemize}
    \item For $i \in [m], j \in [n]$, denote by $E_{i,j}$ to be the elementary matrix with $(i,j)$ entry equal $1$ and all other entries being zero.
    \item Given $A \in M_n(\bF)$, let $vec:M_n(\bF)\longrightarrow \bF^{n^2}$ be the function that takes a matrix $M=(m_{i,j})$ to $vec(M)=(v_i)_{i\in [n^2]}$ where $v_{i+n(j-1)}:=m_{i,j}$ for all $i,j\in [n]$.
    
    \item Given $(\lambda_1, \dots, \lambda_n) \in R^n$, denote by $\diag(\lambda_1, \dots, \lambda_n)$ the diagonal matrix with $(i,i)$ entry equal $\lambda_i$.
    That is, $\diag(\lambda_1, \dots, \lambda_n) = \sum_{i=1}^n \lambda_i E_{i,i}$.
    
    \item For any $S\subseteq [n]$, define the map $\pi_S:\bC^{n}\longrightarrow \bC^{\vert S\vert}$ to be the projection map onto the coordinates corresponding to the elements of $S$.
    
    \item Given $A \in M_n(\bC)$, denote by $\mathcal{C}(A)$ the \emph{centralizer} of $A$ in $M_n(\bC)$, that is $\cC(A)$ is the subset of matrices in $M_n(\bC)$ which commute with $A$.

    \item A matrix $A \in M_n(\bC)$ is called \emph{Toeplitz} if for every $i,j\in [n-1]$, $A_{i,j} = A_{i+1,j+1}$.

    \item For a matrix $M\in M_n(\bF)$ diagonalizable over $\bF$, we will call $U\in M_n(\bF)$ a \emph{matrix of eigenvectors of $M$} if the columns of $U$ form an eigenbasis of $M$.
    
    \item We will denote by $W_n$ to be the single Jordan block of size $n$ (and we will write $W$ for short when $n$ can be understood from context).
    That is,
    \begin{linenomath}
        $$
    W_n :=
    \sum_{i=1}^{n-1} E_{i, i+1} =
    \begin{bmatrix}
        0 & 1 & \cdots & \cdots & 0\\
        0 & 0 & 1 & \cdots & 0\\
        \vdots & & \ddots & \ddots & \vdots\\
        0 & & & \ddots & 1\\
        0 & \cdots & \cdots & \cdots& 0
    \end{bmatrix}.
    $$
    \end{linenomath}
    
\end{itemize}
\medskip

%================================================================================
\subsection{The Weyr form of a matrix}\label{subsection: weyr form}
%================================================================================

The Weyr form of a matrix is an alternative normal form to the Jordan form that has several interesting properties, especially with regards to the study of commuting matrices. 
We will use the Weyr form to simultaneously triangularize commuting matrices.
For a more comprehensive survey and study of the Weyr form, we refer the reader to \cite{OMV11}, which is where we took most of the definitions and results in this subsection.

\begin{definition}
    A {\bf basic Weyr matrix with eigenvalue $\lb$} of size $n\times n$ is a matrix $M$ that has a block structure with blocks $M_{ij}$ of size $n_i\times n_j$ for $1\leq i,j \leq r$ for some $r$-partition $n = n_1 + n_2 + \dots n_r$ of $n$ with $n_1\geq n_2\geq \dots \geq n_r$ that satisfies the following properties:

    \begin{enumerate}
        \item The blocks $M_{ii}$ of size $n_i\times n_i$ are the matrices $\lb I_{n_i}$.
        \item The blocks $M_{i,i+1}$ of size $n_i\times n_{i+1}$ are full column rank matrices in reduced row-echelon form.
        \item All other blocks are zero matrices.
    \end{enumerate}

    \noindent In this case, we say that $M$ has the Weyr structure $(n_1,n_2,\dots, n_r)$.
\end{definition}

\noindent The following are examples of basic Weyr matrices.

\begin{equation*}
 \begin{bNiceArray}{cc|[end=4]cc|c}[margin]
    \lb & 0 & 1 & 0 & \\
    0 & \lb & 0 & 1 & \\
    \Hline
    && \lb & 0 & 1 \\
    && 0 & \lb & 0 \\
    \Hline[start=3]
    &&&& \lb
\end{bNiceArray}\tag{Basic Weyr matrix with structure $(2,2,1)$}   
\end{equation*}
\begin{equation*}
 \begin{bNiceArray}{ccc|[end=5]cc|cc}[margin]
    \lb & 0   & 0   & 1   & 0   &    &\\
    0   & \lb & 0   & 0   & 1   &    &\\
    0   & 0   & \lb & 0   & 0   &    &\\
    \Hline
        &     &     & \lb & 0   & 1 & 0\\
        &     &     & 0   & \lb & 0 & 1\\
    \Hline[start=4]
    &&&&&\lb & 0\\
    &&&&& 0 & \lb
\end{bNiceArray}\tag{Basic Weyr matrix with structure $(3,2,2)$}   
\end{equation*}

\begin{definition}
    A matrix $M$ is said to be in \emph{Weyr form} (or $M$ is a \emph{Weyr matrix}) if, 
    \begin{equation*}
        M =  \begin{bNiceArray}{c|[end=2]c|[start=2, end=2]c|[start=4]c}[margin]
                M_1 & & &\\
                \Hline[end=2]
                & M_2 & &\\
                \Hline[start=2, end=2]
                & & \ddots &\\
                \Hline[start=4]
                & & & M_k\\
            \end{bNiceArray}
    \end{equation*}
    where $M_i$ are basic weyr matrices with distinct eigenvalues.
\end{definition}

\noindent The next lemma appears in \cite[Section 2.1, p. 48]{OMV11}.

\begin{lemma}\label{lem: J_to_W}
    Let $J$ be a Jordan matrix. Then there exists a permutation matrix $P$ such that $M:=P^{-1}JP$ is a Weyr matrix.
\end{lemma}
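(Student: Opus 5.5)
Since $J$ is block diagonal with blocks for distinct eigenvalues, and a Weyr matrix is block diagonal with basic Weyr blocks for distinct eigenvalues, I will reduce to a single eigenvalue. Group the Jordan blocks of $J$ by eigenvalue; a permutation (reordering blocks) makes the blocks for each eigenvalue contiguous, so $J=J_{\lambda_1}\oplus\cdots\oplus J_{\lambda_k}$. A permutation matrix that is block diagonal with respect to this decomposition then suffices once each $J_{\lambda}$ is handled. Subtracting $\lambda I$ commutes with conjugation by permutations, so we may further assume $\lambda=0$, i.e. $J=W_{m_1}\oplus\cdots\oplus W_{m_s}$ with $m_1\ge m_2\ge\cdots\ge m_s$ (reordering blocks again by a permutation).

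Next, build the permutation explicitly. Let $e^{(j)}_1,\dots,e^{(j)}_{m_j}$ be the standard basis vectors for the $j$-th Jordan block, so $J e^{(j)}_1=0$ and $Je^{(j)}_{t}=e^{(j)}_{t-1}$. Set $n_t:=\#\{j: m_j\ge t\}$ for $t=1,\dots,m_1$; then $n_1\ge n_2\ge\cdots$ and $\sum_t n_t=\sum_j m_j=n$, so $(n_1,n_2,\dots)$ is the conjugate partition. Reorder the basis as
\[
e^{(1)}_1,\dots,e^{(n_1)}_1,\; e^{(1)}_2,\dots,e^{(n_2)}_2,\;\dots
\]
i.e. level $t$ consists of the $t$-th vectors of those blocks of length $\ge t$, listed in increasing $j$. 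Let $P$ be the permutation matrix whose columns are these vectors in this order; then $M=P^{-1}JP$ is the matrix of $J$ in the new basis.

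Finally, verify the Weyr structure. $J$ maps level $t+1$ vector $e^{(j)}_{t+1}$ ($j\le n_{t+1}$) to $e^{(j)}_t$, which is the $j$-th vector of level $t$, and maps level $1$ to zero. Hence $M$ has zero diagonal blocks (equal to $\lambda I_{n_t}$ after adding back $\lambda$), its only nonzero off-diagonal blocks are $M_{t,t+1}$, and $M_{t,t+1}$ is the $n_t\times n_{t+1}$ matrix $\begin{bmatrix} I_{n_{t+1}}\\ 0\end{bmatrix}$, which has full column rank and is in reduced row-echelon form. This is exactly a basic Weyr matrix with structure $(n_1,n_2,\dots)$. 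The only point needing care is the bookkeeping that the ordering within each level is consistent across levels (both ordered by $j$), which is what makes $M_{t,t+1}$ an identity-on-top block rather than a general partial permutation; there is no real obstacle beyond this indexing.
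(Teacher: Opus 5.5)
Your proof is correct. Ordering the basis level by level (the $t$-th vectors of all blocks of length $\ge t$, with the blocks sorted by decreasing size) does make each superdiagonal block equal to $\begin{bmatrix} I_{n_{t+1}} \\ 0 \end{bmatrix}$, and the reduction to a single eigenvalue via block-reordering and block-diagonal permutations is sound. The paper gives no argument of its own here and just cites \cite[Section 2.1, p.~48]{OMV11}; your explicit re-indexing, which passes from the Jordan structure to its conjugate partition, is the standard construction behind that reference.
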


The next theorem is taken from \cite[Theorem 2.2.4]{OMV11}, which follows from the above lemma together with the theorem on the Jordan normal form.

\begin{theorem}
    Up to permutation of the basic Weyr blocks, each square matrix $A$ over an algebraically closed field is similar to a unique Weyr matrix $\widetilde{A}$ which is called the Weyr form of $A$.
\end{theorem}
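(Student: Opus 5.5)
The statement to prove is the last theorem above: every square matrix $A$ over an algebraically closed field is similar to a Weyr matrix $\widetilde{A}$, and $\widetilde{A}$ is unique up to permuting its basic Weyr blocks. The plan has two halves. Existence will come from the Jordan normal form together with \cref{lem: J_to_W}. Uniqueness will come from showing that the Weyr structure of each block is a similarity invariant of $A$.

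\emph{Existence.} Since the field is algebraically closed, $A = SJS^{-1}$ for some Jordan matrix $J$. By \cref{lem: J_to_W} there is a permutation matrix $P$ such that $P^{-1}JP$ is a Weyr matrix. Hence $A$ is similar to $P^{-1}JP$ via $SP$. To make this self-contained, I would also recall the construction of $P$. Group the Jordan blocks by eigenvalue. Within one eigenvalue, reorder the basis by taking first the first basis vector of each Jordan block (blocks sorted by decreasing size), then the second vector of each block of size at least $2$, and so on. In this order the superdiagonal $1$'s land in the blocks $M_{i,i+1}$, which take the form $\binom{I}{0}$: these have full column rank and are in reduced row-echelon form. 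The partition is $n_i = \#\{\text{Jordan blocks of size} \ge i\}$, which is nonincreasing, as the definition requires.

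\emph{Uniqueness.} Suppose $A$ is similar to two Weyr matrices. Their distinct eigenvalues must both equal the spectrum of $A$, so the basic blocks can be matched by eigenvalue after a permutation. The key step is to show that the Weyr structure of a basic Weyr block with eigenvalue $\lambda$ satisfies
\[
n_1+\dots+n_k = \dim\ker (A-\lambda I)^k .
\]
I would compute this directly for a basic Weyr matrix $M$. Write $N = M-\lambda I$. Then $N$ is block strictly upper triangular with full-column-rank superdiagonal blocks $N_{i,i+1}$. It follows that $N^k$ has block superdiagonal entries $N_{i,i+1}\cdots N_{i+k-1,i+k}$, each of which has full column rank. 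So $\operatorname{rank} N^k = n_{k+1}+\dots+n_r$, and the kernel dimension is as claimed.

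This shows the $n_i$ are determined by $A$ and $\lambda$. What remains is that a basic Weyr matrix is determined by its eigenvalue and its structure. The argument is that a full-column-rank $n_i\times n_{i+1}$ matrix in reduced row-echelon form, with $n_i \ge n_{i+1}$, must be $\binom{I}{0}$: there are $n_{i+1}$ pivots, one per column, placed in successive rows. Hence the two Weyr forms agree up to block permutation.

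I expect the only mildly delicate step to be the rank computation for products of the superdiagonal blocks. It is handled by noting that a product of full-column-rank matrices has full column rank, and that the block structure makes $N^k$ block-shifted by $k$.
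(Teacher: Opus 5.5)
Your proposal is correct. The existence half is the route the paper indicates. The paper does not prove this theorem itself: it cites \cite[Theorem 2.2.4]{OMV11} and remarks that the result follows from \cref{lem: J_to_W} together with the Jordan normal form theorem. Your level-by-level reordering of each eigenvalue's Jordan basis (blocks sorted by decreasing size, giving $n_i = \#\{\text{blocks of size} \ge i\}$ and superdiagonal blocks $\binom{I}{0}$) is exactly the permutation behind \cref{lem: J_to_W}.

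For uniqueness, the paper's remark points to the uniqueness of the Jordan form. That route would mean permuting a Weyr matrix back to a Jordan matrix and identifying the Weyr structure with the conjugate partition of the Jordan block sizes. You instead read off the structure directly as a similarity invariant, $n_k = \dim\ker(A-\lambda I)^k - \dim\ker(A-\lambda I)^{k-1}$, from the ranks of the block-shifted powers $N^k$. The ranks add because the nonzero blocks of $N^k$ occupy distinct block rows and block columns. You then use the fact that a full-column-rank matrix in reduced row-echelon form must be $\binom{I}{0}$. This is more self-contained, since it needs neither Jordan uniqueness nor the reverse permutation.

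The one step you leave implicit is passing from a basic block to the whole Weyr matrix. The quantity $\dim\ker(\widetilde{A}-\lambda I)^k$ sees only the basic block with eigenvalue $\lambda$. This is because every other block $M_\mu-\lambda I$ is upper triangular with nonzero diagonal $\mu-\lambda$, hence invertible.
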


The next theorem appears in \cite[Theorem 2.3.5]{OMV11}.
This result is an example of what makes the Weyr form a better normal form when studying commuting matrices.

\begin{theorem}\label{thm: weyr_upper_triangular}
    Let $A_1,A_2,\dots, A_k$ be commuting $n\times n$ matrices over an algebraically closed field. Then there is a $n\times n$ matrix $C$ such that $C^{-1}A_1 C$ is a Weyr matrix and the matrices $C^{-1}A_i C$ are upper triangular for all $2\leq i \leq k$
\end{theorem}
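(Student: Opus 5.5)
The plan is to first reduce to the case where $A_1$ has a single eigenvalue, then use the explicit description of the centralizer of a nilpotent basic Weyr matrix, and finally triangularize the remaining matrices by a change of basis that commutes with the Weyr matrix.

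\emph{Reduction to one eigenvalue.} Let $\lambda_1,\dots,\lambda_s$ be the distinct eigenvalues of $A_1$. Decompose $\bC^n=\bigoplus_j \ker(A_1-\lambda_j I)^n$ into generalized eigenspaces of $A_1$. Each $A_i$ commutes with $A_1$, so it commutes with every $(A_1-\lambda_j I)^n$ and therefore preserves each summand. In an adapted basis, all the $A_i$ are simultaneously block diagonal, and on each block they still commute. A block-diagonal $C$ built from one matrix $C_j$ per block then reduces the theorem to the case where $A_1$ has a single eigenvalue $\lambda$. Replacing $A_1$ by $A_1-\lambda I$ changes neither commutativity nor the Weyr/triangular conclusions, so we may assume $A_1$ is nilpotent. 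By the existence of the Weyr form (via \cref{lem: J_to_W}), we may further assume that $A_1=W$ is a nilpotent basic Weyr matrix with structure $(n_1\ge\dots\ge n_r)$. Its superdiagonal blocks are then $W_{i,i+1}=\begin{bmatrix} I_{n_{i+1}}\\ 0\end{bmatrix}$.

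\emph{Centralizer of $W$.} The key computation, and the step I expect to be the main obstacle, is to show that any $K$ commuting with $W$ has a rigid form. Write $K$ in $(n_1,\dots,n_r)$-block form. Comparing $KW$ and $WK$ block by block, starting from the last block row and the first block column, I would show:
\begin{itemize}
\item $K$ is block upper triangular;
\item each diagonal block has the form $K_{ii}=\begin{bmatrix} K_{i+1,i+1} & *\\ 0 & *\end{bmatrix}$.
\end{itemize}
The second point comes from the identity $K_{ii}W_{i,i+1}=W_{i,i+1}K_{i+1,i+1}$. Consequently all diagonal blocks are determined by $K_{11}$, and $K_{11}$ preserves the coordinate flag $V_r\subseteq\dots\subseteq V_1=\bC^{n_1}$, where $V_i$ is spanned by the first $n_i$ coordinates.

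\emph{Flag-compatible triangularization.} Apply the centralizer description to $A_2,\dots,A_k$. Block upper triangular matrices multiply blockwise on the diagonal, so the blocks $(A_2)_{11},\dots,(A_k)_{11}$ commute, and each preserves the flag $(V_i)$. I would then choose a basis of $V_1$ adapted to this flag that simultaneously triangularizes them. This is done inductively: triangularize the commuting family on $V_r$, then on $V_{r-1}/V_r$, and so on, using a common eigenvector at each step, and lift the chosen vectors back. The resulting change-of-basis matrix $Q_1$ is invertible and block upper triangular with respect to the flag. Hence its leading principal corners $Q_i$ of size $n_i$ are invertible and satisfy $Q_i=\begin{bmatrix} Q_{i+1}&*\\0&*\end{bmatrix}$.

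\emph{Conclusion.} Set $D=\diag(Q_1,\dots,Q_r)$, in block form. The corner relation gives $Q_iW_{i,i+1}=W_{i,i+1}Q_{i+1}$, so $D$ commutes with $W$, and therefore $D^{-1}WD=W$ remains a Weyr matrix. Conjugating by $D$ keeps each $A_i$ block upper triangular. Its diagonal blocks become $Q_i^{-1}(A_i)_{ii}Q_i$, which are exactly the corners of the upper triangular matrix $Q_1^{-1}(A_i)_{11}Q_1$, and corners of an upper triangular matrix are upper triangular. So every $D^{-1}A_iD$ is upper triangular. Composing with the earlier changes of basis yields the required $C$.
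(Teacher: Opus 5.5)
Your proof is correct. The centralizer computation holds: block upper triangularity follows from the injectivity of the blocks $W_{i,i+1}$, and $K_{ii}=\begin{bmatrix}K_{i+1,i+1}&*\\0&*\end{bmatrix}$. The flag-adapted simultaneous triangularization of the commuting $(1,1)$ blocks is sound, as is the fact that $\diag(Q_1,\dots,Q_r)$, built from nested corners, centralizes $W$ while making every diagonal block a corner of the upper triangular matrix $Q_1^{-1}(A_i)_{11}Q_1$.

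The paper gives no proof of this statement and simply cites \cite[Theorem 2.3.5]{OMV11}. As far as I recall, your route matches the standard argument there, which also rests on this description of the centralizer of a nilpotent Weyr matrix.
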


The above theorem allows us to put one of the commuting matrices in Weyr form and the other matrices in upper triangular form. 
The Jordan normal form does {\em not} have this property~\cite[Examples 2.3.6 and Remark 2.3.7]{OMV11}, and this is the reason why we work with the Weyr form.

%================================================================================
\subsection{Regular Matrices}\label{subsection: regular matrices prelim}
%================================================================================

\begin{definition}
    A matrix $A\in M_n(\bF)$ is said to be $k$-regular if every eigenspace of the matrix $A$ as a matrix in $M_n(\ol{\bF})$ has dimension at most $k$.
\end{definition}

1-regular matrices have very nice algebraic properties, some of which are captured by the following proposition, which can be found in \cite[Proposition 1.1.2]{OMV11}.

\begin{proposition}\label{prop:centralizer}
    The following are equivalent:
    \begin{enumerate}
        \item $A \in M_n(\bC)$ is a 1-regular matrix,
        \item $I, A, A^2,\dots, A^{n-1}$ are linearly independent, 
        \item $\mathcal{C}(A)= \bC[A]$.
    \end{enumerate} 
\end{proposition}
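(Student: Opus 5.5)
The plan is to prove the cycle of implications $(1)\Rightarrow(2)\Rightarrow(3)\Rightarrow(1)$, working throughout with the Jordan form of $A$ over $\bC$. None of the three conditions changes under similarity: linear independence of powers, the centralizer being $\bC[A]$, and eigenspace dimensions are all similarity invariant. So we may assume $A=\bigoplus_{j=1}^{k} J_j$, where $J_j$ is the direct sum of all Jordan blocks for the eigenvalue $\lambda_j$ and the $\lambda_j$ are distinct.

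For $(1)\Leftrightarrow(2)$ we use the minimal polynomial. The powers $I,A,\dots,A^{n-1}$ are linearly independent if and only if $\deg \mu_A \geq n$. Since $\mu_A$ divides the characteristic polynomial, this is equivalent to $\deg\mu_A = n$. Now $\mu_A=\prod_j (z-\lambda_j)^{s_j}$, where $s_j$ is the size of the largest Jordan block for $\lambda_j$, while the algebraic multiplicity of $\lambda_j$ is the sum $m_j$ of all its block sizes. Hence $\deg\mu_A=n$ if and only if $s_j=m_j$ for every $j$. This says each eigenvalue has exactly one Jordan block, i.e.\ every eigenspace is one-dimensional, which is 1-regularity.

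For $(1)\Rightarrow(3)$, the inclusion $\bC[A]\subseteq\cC(A)$ is trivial, so it suffices to compare dimensions. We have $\dim\bC[A]=\deg\mu_A=n$ by the previous paragraph. For the reverse bound, any $X$ commuting with $A$ preserves the generalized eigenspaces, so $X$ is block diagonal with respect to the decomposition. Each block commutes with $J_j=\lambda_jI+W_{m_j}$, hence with $W_{m_j}$. A direct entry computation shows that the matrices commuting with $W_m$ are exactly the upper triangular Toeplitz matrices, which form a space of dimension $m$. Therefore $\dim\cC(A)=\sum_j m_j=n$, and equality $\cC(A)=\bC[A]$ follows.

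For $(3)\Rightarrow(1)$ we argue by contraposition. Suppose some eigenvalue $\lambda$ has two Jordan blocks, of sizes $a\geq b$. Restricted to the invariant subspace spanned by these two blocks, $A$ acts as $\lambda I+(W_a\oplus W_b)$. Let $Y$ be the map sending the chain of the first block onto the chain of the second, with the first $a-b$ vectors going to zero. Concretely, the block $[0\mid I_b]$ in the $(2,1)$ position intertwines $W_a$ and $W_b$, so it commutes with $W_a\oplus W_b$. Extend $Y$ by zero on the other blocks to get an element of $\cC(A)$. Then $Y$ is nonzero, nilpotent, and kills the entire second block. Also $Y^2=0$.

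If $Y$ were in $\bC[A]$, say $Y=p(A)$, then $Y$ would act on each Jordan block as $p$ of that block. On the second block $p(J_b)=0$ forces $(z-\lambda)^b\mid p$. Then $p(J_a)$ is a polynomial in $W_a$ with zero coefficients in degrees below $b$, so it has the form $\sum_{i\geq b} c_iW_a^i$ on the first block. But $Y$ acts as $0$ on the first block's own coordinates, since it maps them into the second block. Hence $Y$ would be block diagonal with zero diagonal blocks, so $Y=0$, a contradiction. Thus $\cC(A)\neq\bC[A]$.

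The main obstacle is this last step: constructing the off-diagonal commuting matrix explicitly and then verifying that it is not a polynomial in $A$. The key is the block-diagonality of $p(A)$, which is cleaner to use than a dimension count. Alternatively, $\dim\cC(A)\geq n+2$ whenever some eigenvalue has two blocks, while $\dim\bC[A]<n$ by the $(1)\Leftrightarrow(2)$ argument, which gives the contradiction directly.
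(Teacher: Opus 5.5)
The paper does not prove this proposition; it quotes it from \cite[Proposition 1.1.2]{OMV11}. Your argument is a correct, self-contained substitute. The key steps check out:
\begin{itemize}
\item Reading (2) as $\deg\mu_A=n$ and comparing $s_j\le m_j$ correctly gives $(1)\Leftrightarrow(2)$.
\item In $(1)\Rightarrow(3)$, the dimension count ($\bC[A]\subseteq\cC(A)$, $\dim\bC[A]=n$, $\dim\cC(A)=\sum_j m_j=n$) is a good choice. It avoids the Chinese-remainder (Hermite interpolation) step otherwise needed to glue the blockwise polynomials $q_j(J_j)$ into a single polynomial in $A$.
\item Your intertwiner is valid, since $[0\mid I_b]\,W_a=W_b\,[0\mid I_b]$.
\end{itemize}

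Two simplifications are possible. In $(3)\Rightarrow(1)$, every $p(A)$ is block diagonal with respect to the Jordan blocks, and this alone contradicts the nonzero off-diagonal block of $Y$; the step $(z-\lambda)^b\mid p$ is unnecessary. Your ``alternative'' also does not need the unproved bound $n+2$. Block-diagonal matrices with an arbitrary upper triangular Toeplitz block on each Jordan block commute with $A$. Hence $\dim\cC(A)\ge n>\deg\mu_A=\dim\bC[A]$ whenever $A$ is not 1-regular.

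One remark on scope. Your proof relies on the Jordan form, hence on an algebraically closed field. The paper, however, applies $(1)\Rightarrow(3)$ to $B+\e R$ over $\bC(\e)$ in the proof of \cref{thm: not_1-regular}. Your argument still covers that use. Run it over $\overline{\bC(\e)}$. Then note that the linear system $\sum_{i<n}c_i(B+\e R)^i=A$ has coefficients in $\bC(\e)$ and, by (2), has a unique solution. That solution is obtained by Cramer's rule on an invertible $n\times n$ minor, so it already lies in $\bC(\e)$.
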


\begin{lemma}\label{lem:1-reg ext p2}
    For any 1-regular matrix $R\in M_n(\bC)$ and any matrix $B\in M_n(\bC)$, $R+\alpha B$ is 1-regular for all but finitely many $\alpha \in \bC$.
\end{lemma}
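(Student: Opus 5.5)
We will use the characterization of 1-regularity in \cref{prop:centralizer}: $A\in M_n(\bC)$ is 1-regular if and only if $I, A, \dots, A^{n-1}$ are linearly independent. This turns the question into the nonvanishing of a polynomial in $\alpha$.

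First we build the polynomial. Form the $n^2 \times n$ matrix
$$M(\alpha) := \big[\, vec(I) \mid vec(R+\alpha B) \mid vec((R+\alpha B)^2) \mid \cdots \mid vec((R+\alpha B)^{n-1}) \,\big].$$
Each entry of $(R+\alpha B)^k$ is a polynomial in $\alpha$ of degree at most $k$, so every entry of $M(\alpha)$ is a polynomial in $\alpha$. By \cref{prop:centralizer}, $R+\alpha B$ is 1-regular exactly when $M(\alpha)$ has full column rank $n$. This holds exactly when some $n\times n$ minor of $M(\alpha)$ is nonzero.

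Next we fix the minor. At $\alpha=0$ we have $M(0)$ built from $I, R, \dots, R^{n-1}$. Since $R$ is 1-regular, these columns are linearly independent, so some $n\times n$ minor $\Delta$ satisfies $\Delta(0)\neq 0$. Now regard $\Delta(\alpha)$ as a polynomial in $\alpha$, of degree at most $0+1+\cdots+(n-1)$. Because $\Delta(0)\neq 0$, this polynomial is not identically zero, so it has only finitely many roots.

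For every $\alpha$ outside this finite root set, $M(\alpha)$ has full column rank. Hence $I, R+\alpha B, \dots, (R+\alpha B)^{n-1}$ are linearly independent, and $R+\alpha B$ is 1-regular by \cref{prop:centralizer}.

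There is no serious obstacle in this argument. The only step needing care is that linear independence of $n$ vectors is equivalent to the nonvanishing of some maximal minor. It is also important to choose that minor at $\alpha=0$, so that a single fixed minor witnesses independence for all but finitely many $\alpha$. The same argument also gives the explicit bound $n(n-1)/2$ on the number of exceptional values of $\alpha$.
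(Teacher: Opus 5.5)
Your proof is correct and follows essentially the same route as the paper's: the paper also stacks $vec(I), vec(R+\alpha B), \dots, vec((R+\alpha B)^{n-1})$ into an $n^2\times n$ matrix, fixes a maximal minor that is nonzero at $\alpha=0$, and concludes that this minor is a nonzero polynomial in $\alpha$ with finitely many roots. Your additional observation that the exceptional set has at most $n(n-1)/2$ elements is a valid (though unneeded) refinement.
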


\begin{proof}
    Set $A(\alpha) := R +\alpha B$. 
    By \cref{prop:centralizer}, $A(\alpha)$ is 1-regular iff $I, A(\alpha), \dots, A(\alpha)^{n-1}$ are linearly independent. 
    Let $M(\alpha)$ be the following $n^2 \times n$ matrix. 
        $$M(\alpha) =
        \begin{pmatrix}
            vec(I) & vec(A(\alpha)) & vec(A(\alpha)^2) & \dots & vec(A(\alpha)^{n-1})
        \end{pmatrix}.$$
    Note that $I, A(\alpha), \dots, A(\alpha)^{n-1}$ are linearly independent iff $M(\alpha)$ is of full rank.
    Since $R = A(0)$ is 1-regular, $M(0)$ is of full rank, which implies that there is a subset $S \subset [n^2]$ such that the minor $\det M(0)_{S, [n]} \neq 0$.
    Thus, we have $\det M(\alpha)_{S, [n]}$ is a nonzero polynomial in $\alpha.$
    In particular, this implies that $\det M(\alpha)_{S, [n]}$ has finitely many roots in $\bC$. 
    Since $M(\alpha)$ is of full rank whenever $\det M(\alpha)_{S, [n]} \neq 0$, we conlcude that $A(\alpha)$ is 1-regular for all but finitely many $\alpha\in\bC$.
\end{proof}

The following technical results are used in proving \cref{thm: not_1-regular}.

\begin{proposition}\label{prop: poly_coeff}
    Let $A(\e),B(\e)\in M_n(\bC[\e])$ be such that $A(\e) \in \bC(\e)[B(\e)]$. 
    Then, there exists $p(x)=\sum_{i=0}^{n-1}a_i(\e)x^i\in\bC(\e)[x]$ with $a_i(\e)=\frac{f_i(\e)}{g(\e)}$ for all $0 \leq i \leq n-1$ with $deg_\e(f)\leq (n-1)^2d_B^2 + d_A$ and $deg_\e (g(\e))\leq n(n-1)d_B$ where $d_A:=max_{i,j}(deg_\e(A_{i,j}(\e)))$ and $d_B:=max_{i,j}(deg_\e(B_{i,j}(\e)))$.
\end{proposition}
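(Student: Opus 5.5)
We want $A(\e)=p(B(\e))$ with $p$ of degree at most $n-1$, coefficients $a_i(\e)\in\bC(\e)$ written over a common denominator $g$ of controlled degree. The plan is to turn this into a linear system over $\bC(\e)$ and apply Cramer's rule.

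\textbf{Setting up the system.} Since $A(\e)\in\bC(\e)[B(\e)]$ and $B(\e)$ satisfies its characteristic polynomial of degree $n$ (Cayley--Hamilton over the field $\bC(\e)$), we can reduce any polynomial representation modulo that characteristic polynomial. So there exist $a_0,\dots,a_{n-1}\in\bC(\e)$ with $A(\e)=\sum_{i=0}^{n-1}a_i B(\e)^i$. Vectorizing gives the linear system $M(\e)a=vec(A(\e))$, where $M(\e)$ is the $n^2\times n$ matrix with columns $vec(B(\e)^i)$ for $0\le i\le n-1$. The entries of column $i$ are polynomials of degree at most $i\,d_B$.

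\textbf{Reducing to a square system.} Let $k$ be the rank of $M(\e)$ over $\bC(\e)$. Choose a set $S$ of $k$ rows and a set $J$ of $k$ columns such that the minor $N=M(\e)_{S,J}$ is nonzero. Because the system is consistent, any solution of the square system $N a_J = vec(A)_S$, with $a_i=0$ for $i\notin J$, is a solution of the full system. Indeed, every row of the augmented matrix lies in the span of the rows indexed by $S$, since the rank is still $k$ after appending $vec(A)$.

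\textbf{Degree bounds via Cramer.} Set $g=\det N$. Each term of the Leibniz expansion is a product of one entry from each column $i\in J$, so
\[
\deg g\le d_B\sum_{i\in J}i\le d_B\sum_{i=0}^{n-1}i=\tfrac{n(n-1)}{2}d_B\le n(n-1)d_B .
\]
The numerator $f_i$ is $\det N$ with column $i$ replaced by $vec(A)_S$. Its degree is at most $d_A+d_B\sum_{j\in J\setminus\{i\}}j\le d_A+\tfrac{n(n-1)}{2}d_B$.

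This is at most $(n-1)^2d_B^2+d_A$ whenever $d_B\ge1$ and $n\ge2$, because then $\tfrac{n(n-1)}{2}\le (n-1)^2\le (n-1)^2d_B$. The degenerate cases are direct:
\begin{itemize}
\item If $n=1$, take $a_0=A/B$ when $B\neq0$, or $a_0=A$ when $B=0$. Then $g=B$ with $\deg g\le d_B$ (or $g=1$) and $f=A$ with $\deg f\le d_A$, which meets both bounds.
\item If $d_B=0$, then $M$ is constant, $g$ is a nonzero constant, and $\deg f_i\le d_A$.
\end{itemize}

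\textbf{Common denominator.} All $a_i$ share the denominator $g$, and the coordinates $i\notin J$ have $f_i=0$. This gives the required form $a_i=f_i/g$.

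\textbf{Main obstacle.} The only delicate point is the rank-deficient case, when $B(\e)$ is not 1-regular over $\bC(\e)$. The selection of a nonsingular $k\times k$ minor, together with the consistency argument above, handles it.
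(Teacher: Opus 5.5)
Your argument is correct in its main line and is essentially the paper's proof. Both write $A=\sum a_iB^i$, vectorize, extract a nonsingular square subsystem, and use Cramer's rule to get $a_i=f_i/g$ with $g$ the determinant of that subsystem.

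There are two differences from the paper:
\begin{itemize}
\item The paper uses only the powers $I,B,\dots,B^{s-1}$, with $s$ the degree of the minimal polynomial. These are linearly independent, so only rows must be selected. You keep all $n$ powers, so you need to select columns as well and to give your (correct) consistency argument.
\item Your column-by-column degree count, with degree at most $i\,d_B$ in column $i$, is sharper than the paper's uniform per-entry bound $(n-1)d_B$. It gives $\deg g\le n(n-1)d_B/2$ and $\deg f_i\le d_A+n(n-1)d_B/2$. The paper has $n(n-1)d_B$ and $(n-1)^2d_B^2+d_A$ instead.
\end{itemize}

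One slip: your $n=1$ bullet is wrong. For $n=1$ the polynomial $p(x)=a_0$ is constant, so $p(B)=a_0$ and you need $a_0=A$, not $A/B$. Moreover, $g=B$ would violate the required bound $\deg g\le n(n-1)d_B=0$.

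The bullet is unnecessary, so simply delete it. For $n=1$ your general construction already gives $M=(1)$, $g=1$ and $f_0=A$. The inequality $n(n-1)/2\le (n-1)^2$ you invoke also holds at $n=1$, since both sides are $0$, so no separate case is needed.
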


\begin{proof}
    Let $s$ be the degree of the minimal polynomial of $B(\e)$.
    \noindent We have,
    \begin{align*}
     A(\e)=p(B(\e)) &= \sum_{i=0}^{s-1} a_i(\e)(B(\e)))^i
    \end{align*}

    \noindent For $i\in[n]$, define $v_i:= vec((B(\e))^{i-1})$. Also, let $v_A:=vec(A(\e))$. 
    Then, we have the following.
    
    \begin{equation}\label{eq: coeff_relation-1}
     v_A = \begin{bmatrix}
          v_1 & v_2 & \dots & v_s
     \end{bmatrix} \begin{bmatrix}
         a_0(\e)\\
         a_1(\e)\\
         \dots \\
         a_{s-1}(\e)
     \end{bmatrix}\\
    \end{equation}
    
    We know that a solution to system $(\ref{eq: coeff_relation-1})$ has to be unique because the degree of the minimal polynomial of $B(\e)$ over $\bC(\e)$ is $s$. Therefore, there exists a subset $S\subset [n^2]$ of size $s=rank ([v_1\;\; v_2\;\; \dots \;\; v_n])$ for which the following is true.
    
     \begin{equation}
         \pi_S(v_A) = \begin{bmatrix}
              \pi_S(v_1) & \pi_S(v_2) & \dots & \pi_S(v_s)
         \end{bmatrix} \begin{bmatrix}
             a_0(\e)\\
             a_1(\e)\\
             \dots \\
             a_{s-1}(\e)
         \end{bmatrix} \\
     \end{equation}
    
    \noindent and $\begin{bmatrix}
              \pi_S(v_1) & \pi_S(v_2) & \dots & \pi_S(v_s)
         \end{bmatrix}$ is invertible, where $\pi_S$ is the projection map onto the set of coordinates $S$. 
    
    Let $M=\begin{bmatrix}
              \pi_S(v_1) & \pi_S(v_2) & \dots & \pi_S(v_s)
         \end{bmatrix}$ 
    and $a(\e)=(a_0(\e),\dots,a_{s-1}(\e))$. Since $M$ is invertible, $a(\e)= M^{-1}\pi_S(v_A)$.
    Let $C$ be the matrix of cofactors of $M$. Then, $M^{-1}= C^t/det(M)$.\medskip
    
    Since $s\leq n$, we get the following bounds. The entries of $M$ are polynomial in $\e$ with degree at most $(n-1)d_B$. Therefore, the entries of $C^t$ are polynomial in $\e$ with degree at most $(n-1)^2d_B^2$. 
    Furthermore, $det(M)$ is a polynomial in $\e$ with degree at most $n(n-1)d_B$. 
    It follows that the coefficients $a_i(\e)$ are rational in $\e$ of the form $f(\e)/det(M)$ where $f(\e)\in \bC[\e]$ with $deg_\e(f)\leq (n-1)^2d_B^2 + d_A$ and $\deg_\e (det(M))\leq n(n-1)d_B$.
\end{proof}

\begin{lemma}\label{lem: UT_diag}
    Let $B,R\in M_n(\bC)$ be upper triangular matrices such that $R$ is 1-regular. Then, for any  non-zero $f(\e)\in\bC[\e]$, there exists a diagonal matrix $D\in M_n(\bC)$ such that $B(\e):= B+\e R +\e f(\e)D$ is diagonalizable for all but finitely many $\e$.
\end{lemma}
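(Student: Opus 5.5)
The idea is to make the diagonal of $B(\e)$ consist of $n$ pairwise distinct polynomials in $\e$. Since $B(\e)$ is upper triangular, its eigenvalues are its diagonal entries. Whenever these are pairwise distinct, $B(\e)$ has $n$ distinct eigenvalues and is therefore diagonalizable. So we choose $D=\diag(d_1,\dots,d_n)$ so that the diagonal entries
\[
\beta_i(\e) := b_{ii} + \e r_{ii} + \e f(\e) d_i, \qquad i \in [n],
\]
are pairwise distinct as elements of $\bC[\e]$. Two distinct polynomials agree at only finitely many $\e$, and there are only $\binom{n}{2}$ pairs. Hence for all but finitely many $\e$ the values $\beta_i(\e)$ are pairwise distinct, which gives the claim.

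To pick the $d_i$, note that $\beta_i(\e)-\beta_j(\e) = (b_{ii}-b_{jj}) + \e\big((r_{ii}-r_{jj}) + f(\e)(d_i-d_j)\big)$.
\begin{itemize}
\item If $b_{ii}\neq b_{jj}$, the difference has nonzero constant term, so it is a nonzero polynomial whatever $D$ is.
\item Otherwise the difference is $\e$ times $(r_{ii}-r_{jj}) + f(\e)(d_i-d_j)$. Since $f\neq 0$, this last polynomial vanishes identically for at most one value of $d_i-d_j$. When $f$ is nonconstant that value can exist only if $r_{ii}=r_{jj}$, and it is $d_i=d_j$. When $f$ is a nonzero constant $c$, it is $d_i-d_j = (r_{jj}-r_{ii})/c$.
\end{itemize}
In either case each pair $(i,j)$ excludes at most one affine hyperplane $d_i-d_j=\gamma_{ij}$ in $\bC^n$. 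A finite union of hyperplanes does not cover $\bC^n$, so a generic choice of $(d_1,\dots,d_n)$ works. Concretely, one could take $d_i = N i$ for a sufficiently large integer $N$.

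There is essentially no obstacle once one accepts the simple-spectrum argument. The one point needing care is the degenerate case where $b_{ii}=b_{jj}$ and $r_{ii}=r_{jj}$, and this is exactly the case the $D$-term is there to separate. The hypothesis that $R$ is 1-regular is not needed for this statement. It presumably matters later, when $B(\e)$'s eigenvectors or its relation to $\bC[B(\e)]$ (via \cref{prop:centralizer} and \cref{lem:1-reg ext p2}) are used to bring in $A$. If one also wants $B+\e R$ itself to stay 1-regular, this follows from \cref{lem:1-reg ext p2}.
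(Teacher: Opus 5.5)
Your proof is correct and follows essentially the same route as the paper: you choose $D$ so that the diagonal entries of the upper triangular matrix $B(\e)$ are pairwise distinct polynomials in $\e$, which gives a simple spectrum for all but finitely many $\e$. The paper splits only on whether $f$ is constant, while you run a per-pair hyperplane-avoidance argument. You are also right that the 1-regularity of $R$ plays no role here, and that only the ``distinct diagonal entries implies diagonalizable'' direction is needed.
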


\begin{proof}
    $B+\e R +\e f(\e)D$ is upper triangular and hence is diagonalizable exactly when it has distinct diagonal entries. If $f(\e)$ is a non constant polynomial in $\e$, then any choice of $D$ with distinct diagonal entries makes $B(\e)$ diagonalizable and when $f(\e)$ is a constant polynomial, $D$ can be chosen such that $R+f(\e)D$ has distinct diagonal entries. 
    
    Observe that $B(\e)$ is diagonalizable for all but finitely many $\e$ because $B(\e)$ has distinct diagonal entries for all but finitely many $\e$ and thus we have proved the lemma. \footnote{One can show that $B(\e)$ remains diagonalizable for all but finitely many $\e$ even if $B$ and $R$ are not assumed to be upper triangular. This is due to our choice of $D$ and to the fact that the set of (diagonalizable) matrices with simple eigenvalues only is Zariski open.  See Remark~\ref{rem:weyr} for further comments.}
\end{proof}

%================================================================================
\subsection{Valuations and nice matrices} \label{sec:valuation}
%================================================================================

For the field $\bC(x)$, we define the function $\nu_x:\bC(x)\rightarrow \bZ\cup\{\infty\}$ as follows: for all non-zero polynomials $f(x),g(x)\in\bC[x]$, let $\nu_x(f(x)) := r$ where $r$ is the minimum of the degrees of monomials in $f(x)$ with non-zero coefficient and $\nu_x(f(x)/g(x)) := \nu_x(f(x))-\nu_x(g(x))$.
It is a standard fact that $\nu_x$ is a valuation, and thus satisfies the following properties:
\begin{enumerate}
    \item $\nu_x(a)=\infty$ iff $a=0$.
    \item $\nu_x(ab)=\nu_x(a)+\nu_x(b)$.
    \item $\nu_x(a+b)\geq min(\nu_x(a),\nu_x(b))$ with equality holding when when $\nu_x(a)\neq \nu_x(b)$.
\end{enumerate}

\begin{definition}\label{defn: nice-matrix}
    An upper triangular matrix $M=(m_{ij})_{i,j\in[n]}\in GL_n(\bC(x))$ is said to be $k$-\textit{nice} for some $k\in \bN$ if it satisfies the following properties:
    \begin{enumerate}
        \item $M$ is unitriangular (Upper triangular with '$1$'s in the main diagonal).
        \item $\nu_x(m_{i,i+l})\geq -kl$ for all $0\leq l\leq n-1$ and $i\in[n-l]$.
    \end{enumerate}

    $M$ is said to be $k$-\textit{semi-nice} if it satisfies just the second property listed above.
\end{definition}

\begin{lemma}
    Let $M$ be a $k$-nice matrix in $GL_n(\bC(\e))$. Then $M^{-1}$ is also $k$-nice.
\end{lemma}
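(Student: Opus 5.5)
The plan is to write $M = I + N$, where $N$ is strictly upper triangular, and then use the finite Neumann series
$$M^{-1} = \sum_{s=0}^{n-1} (-1)^s N^s .$$
The series terminates because $N^n = 0$. Each term $N^s$ is strictly upper triangular for $s \ge 1$, so $M^{-1}$ is upper triangular with $1$'s on the diagonal. This already gives the unitriangularity required in \cref{defn: nice-matrix}. It remains to check the valuation condition.

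The key step is to show that the set of upper triangular matrices satisfying the second property of \cref{defn: nice-matrix} (the $k$-semi-nice matrices) is closed under multiplication and addition. For addition this is immediate from the ultrametric inequality $\nu_\e(a+b) \ge \min(\nu_\e(a),\nu_\e(b))$. For products, let $P$ and $Q$ be upper triangular and $k$-semi-nice. Then
$$(PQ)_{i,i+l} = \sum_{j=i}^{i+l} P_{i,j} Q_{j,i+l}.$$
Each summand has valuation at least $-k(j-i) - k(i+l-j) = -kl$, using multiplicativity of $\nu_\e$. The ultrametric inequality then gives $\nu_\e((PQ)_{i,i+l}) \ge -kl$.

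Next, $N$ itself is $k$-semi-nice: its off-diagonal entries are those of $M$, and its diagonal entries are $0$, which have valuation $\infty$. By induction, every power $N^s$ is $k$-semi-nice. So is the identity, whose diagonal entries have valuation $0 \ge 0$ and whose off-diagonal entries are $0$. Since $M^{-1}$ is a finite signed sum of such matrices, $M^{-1}$ is $k$-semi-nice. Combined with unitriangularity, $M^{-1}$ is $k$-nice.

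The only point needing care is the index bookkeeping in the product: we need upper triangularity so that the sum runs only over $i \le j \le i+l$, which is what makes both exponent gaps nonnegative and lets them add up to exactly $l$. Beyond that, the argument is routine.
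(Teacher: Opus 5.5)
Your proposal is correct and follows essentially the same route as the paper: you show that upper triangular $k$-semi-nice matrices are closed under sums and products via the ultrametric inequality, then apply the finite Neumann series $M^{-1}=\sum_{s=0}^{n-1}(-1)^sN^s$ for $M=I+N$. Your version is slightly more careful than the paper's, since you also handle the diagonal case $l=0$ of the product and check explicitly that $N$ and $I$ are themselves $k$-semi-nice.
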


\begin{proof}
    First, we show that if $A$ and $B$ are $k$-\textit{semi-nice} matrices then $AB$ and $A+B$ are also $k$-\textit{semi-nice}.
    
    Suppose $A=(a_{ij})_{i,j\in [n]}$ and $B=(b_{ij})_{i,j\in [n]}$. Let $C:=AB$ so that $c_{ij}:=\sum_{s=1}^n a_{is}b_{sj}$. When $i<j$, $c_{ij}=\sum_{s=i}^j a_{is}b_{sj}$. Fixing $i,j,s$, $\nu_x(a_{is})\geq -k(s-i)$ and $\nu_x(b_{sj})\geq -k(j-s)$. Then, $\nu_x(a_{is}b_{sj}) \geq -(k(s-i)+k(j-s))\geq -k(j-i)$ which doesn't depend on $s$. Therefore, $\nu_x(c_{ij})=\nu_x(\sum_{s=i}^j a_{is}b_{sj})\geq -k(j-i)$. The second property of $k$-\textit{nice} matrices is satisfied, and consequently $C$ is indeed a $k$-\textit{semi-nice}.

    Let $D:=A+B$ so that $d_{ij}:= a_{ij}+b_{ij}$. Then, $\nu_x(d_{ij})\geq min(\nu_x(a_{ij}),\nu_x(b_{ij}))\geq -k(j-i)$. Therefore, the sum of two $k$-\textit{semi-nice} matrices is $k$-\textit{semi-nice}.
    
    Let us now show that $M^{-1}$ is $k$-nice.  We can write $M=I+N$ where $N$ is nilpotent. More specifically, $N^n=0$. Therefore, $(I+N)(I+ \sum_{r=1}^{n-1}(-1)^{r}N^r)= I + (-1)^{n-1}N^n=I$ and so $M^{-1}=I+ \sum_{r=1}^{n-1}(-1)^{r}N^r$. $N^r$ is $k$-\textit{semi-nice} for all $r\in[n-1]$ and is upper triangular with diagonal entries 0. Therefore $M^{-1}$ is a $k$-\textit{nice} matrix since it is $k$-\textit{semi-nice} and is unitriangular.
\end{proof}

\begin{proposition}\label{prop: k-nice_ev}
    Let $B(\e)\in \bC[\e]^{n\times n}$ be an upper triangular matrix with distinct diagonal entries $\lb_i(\e)\in\bC[\e]$ with $deg(\lb_i(\e))\leq k$ for $i\in[n]$. Then, the matrix $U(\e)=[u_1(\e),\dots u_n(\e)]$ of eigenvectors of $B(\e)$ is $k$-nice.
\end{proposition}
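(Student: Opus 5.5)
The plan is to build the eigenvector matrix explicitly by back substitution, normalized so that it is unitriangular, and then bound the valuations of its entries by induction on the distance from the diagonal.

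\textbf{Construction.} Since $B(\e)$ is upper triangular with distinct diagonal entries $\lb_1(\e),\dots,\lb_n(\e)$, these are its eigenvalues, and each eigenspace over $\bC(\e)$ is one-dimensional. For each $j\in[n]$ I will construct an eigenvector $u_j(\e)$ for $\lb_j(\e)$ with the following shape:
\begin{itemize}
\item its $j$-th coordinate equals $1$;
\item its coordinates $i>j$ equal $0$;
\item its coordinates $i<j$ are determined from the upper triangular system $(B(\e)-\lb_j(\e)I)u_j=0$.
\end{itemize}
Reading off row $i$ of this system, for $i<j$, gives
\[
(\lb_j(\e)-\lb_i(\e))\,u_{i,j} \;=\; \sum_{s=i+1}^{j} b_{i,s}(\e)\,u_{s,j}.
\]
Because $\lb_j-\lb_i\neq 0$, this recursion (solved downward from $i=j-1$ to $i=1$) determines $u_{i,j}\in\bC(\e)$ uniquely. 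The resulting matrix $U(\e)=[u_1(\e),\dots,u_n(\e)]$ is unitriangular, so it satisfies the first property of \cref{defn: nice-matrix}. Any other eigenvector matrix differs from $U$ only by column scaling, so I take this normalized $U$ as the matrix meant in the statement.

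\textbf{Valuation bound.} I will show $\nu_\e(u_{i,j})\ge -k(j-i)$ by induction on $l=j-i$. The base case $l=0$ holds because $u_{j,j}=1$ has valuation $0$. For the inductive step, combine three facts:
\begin{itemize}
\item $\lb_j-\lb_i$ is a nonzero polynomial of degree at most $k$, so its lowest-degree monomial has degree at most $k$. Hence $\nu_\e(\lb_j-\lb_i)\le k$, and $\nu_\e\big((\lb_j-\lb_i)^{-1}\big)\ge -k$.
\item Each $b_{i,s}(\e)$ is a polynomial, so $\nu_\e(b_{i,s})\ge 0$.
\item By the induction hypothesis, $\nu_\e(u_{s,j})\ge -k(j-s)\ge -k(j-i-1)$ for every $s$ with $i<s\le j$.
\end{itemize}
The ultrametric inequality applied to the sum then gives
\[
\nu_\e(u_{i,j})\;\ge\; -k-k(j-i-1) \;=\; -k(j-i).
\]
If $u_{i,j}=0$, its valuation is $\infty$ and the bound holds trivially. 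This establishes the second property of \cref{defn: nice-matrix}, so $U$ is $k$-nice.

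\textbf{Main obstacle.} The only delicate point is the direction of the inequality on $\nu_\e(\lb_j-\lb_i)$. The valuation is the minimal degree of a monomial, which is bounded above by the polynomial's degree, so the degree bound $k$ yields a lower bound of $-k$ on the valuation of the inverse. The argument also needs the eigenvalues to be distinct as elements of $\bC[\e]$, which is exactly what guarantees $\lb_j-\lb_i\neq 0$. Everything else is routine back substitution.
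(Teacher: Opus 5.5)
Your proposal is correct and follows essentially the same route as the paper: both build the eigenvector matrix by back substitution, normalized with $j$-th coordinate $1$ and zeros below, and both get the valuation bound $\nu_\e(u_{i,j})\ge -k(j-i)$ from $\nu_\e\big((\lambda_j(\e)-\lambda_i(\e))^{-1}\big)\ge -k$. Your explicit induction on $j-i$ spells out a step that the paper leaves implicit.
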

\begin{proof}
    
   Since $B(\e)$ has distinct eigenvalues and is diagonalizable, it has $n$ distinct eigenvectors. Let $u_i(\e)$ be an eigenvector of $B(\e)$ with eigenvalue $\lb_i(\e)$ for $i\in [n]$. Then, $(B(\e)- \lb_i(\e)I_n)u_i = 0$ and $B(\e)-\lb_i(\e)I_n$ is upper triangular with distinct diagonal entries and $i^{th}$ diagonal entry being zero. Define $[\beta_{c,d}]_{c,d\in[n]}:=B(\e)-\lb_i(\e) I_n$. 
   
   From back substitution, $(u_i(\e))_l=-\frac{\sum_{j=l+1}^{n}\beta_{l,j}(u_{i}(\e))_j}{\beta_{l,l}}$ for all $l\in [i-1]$, $(u_i(\e))_l = 0$ for $l>i$ and $(u_i(\e))_i=1$. Here, note that $\beta_{l,l}\neq 0$ for all $l\neq i$ because $B(\e)$ has distinct diagonal entries. Therefore, if $u_i(\e)$ is an eigenvector of eigenvalue $\lb_i(\e)$, then $\nu_\e((u_i(\e))_j)\geq -k(i-j)$ for $1\leq j \leq i$.

   Consequently, the matrix $U(\e)$ is unitriangular and by the bounds on the valuation above, $U(\e)$ is $k$-\textit{nice} as per \cref{defn: nice-matrix}.
\end{proof}

%================================================================================
\subsection{Facts on rank and border rank of tensors}
%================================================================================

In this subsection we collect various facts about the rank and border rank of tensors.
We begin with the following fact, which appears in \cite[Prop 14.45]{BCS97}.

\begin{proposition}\label{th:slices}
    For a tensor $T=[T_1,\dots,T_p]\in\bC^{m\times n \times p}$, $\Rank(T)\leq r$ iff there exists matrices $A\in\bC^{m\times r},B\in\bC^{r\times n}$ and diagonal matrices $D_1,\dots, D_p \in \bC^{r\times r}$ such that $T_k=AD_kB$ for all $k\in[p]$.
    Moreover, $T=\sum_{i=1}^r u_i \tp v_i \tp w_i$ if and only if 
    $T_k=UD_kV^T$ for all $k\in[p]$. Here, $U$ is the matrix with 
    $u_1,\ldots,u_r$ as column vectors, $V$ is the matrix with 
    $v_1,\ldots,v_r$ as column vectors, 
    and $D_k = \diag(w_{1k},\ldots,w_{nk})$.
\end{proposition}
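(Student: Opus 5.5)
The plan is to read the statement directly off the definition of tensor rank, using the slice notation $T_k$ for the $k$-th 3-slice, that is, $(T_k)_{ij}=T_{ijk}$. I will prove the ``moreover'' part first, since the first equivalence follows from it immediately.

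For the ``moreover'' part, fix vectors $u_i\in\bC^m$, $v_i\in\bC^n$, $w_i\in\bC^p$ for $i\in[r]$. Form $U$ with columns $u_1,\ldots,u_r$, $V$ with columns $v_1,\ldots,v_r$, and $D_k=\diag(w_{1k},\ldots,w_{rk})$. Here $r$ is the correct size of $D_k$, so the index ``$n$'' in the statement should read $r$. I will compute the $(a,b)$ entry of $UD_kV^T$ and obtain $\sum_{i=1}^r (u_i)_a\, w_{ik}\,(v_i)_b$. This is exactly the $(a,b,k)$ entry of $\sum_i u_i\tp v_i\tp w_i$. Two tensors are equal if and only if all their slices agree, so $T=\sum_i u_i\tp v_i\tp w_i$ holds if and only if $T_k=UD_kV^T$ for every $k\in[p]$.

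For the first equivalence, suppose $\Rank(T)\le r$. A decomposition with fewer than $r$ terms can be padded with zero rank-one terms until it has exactly $r$ terms. The ``moreover'' part then gives $T_k=AD_kB$ with $A=U$ and $B=V^T$. Conversely, suppose $A$, $B$ and diagonal matrices $D_k$ with $T_k=AD_kB$ are given. Let $u_i$ be the $i$-th column of $A$, let $v_i$ be the $i$-th row of $B$ (so that $V=B^T$), and set $w_{ik}=(D_k)_{ii}$. The same entrywise identity shows that $T=\sum_{i=1}^r u_i\tp v_i\tp w_i$, and therefore $\Rank(T)\le r$.

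No step here is a genuine obstacle. The only care needed is with the index conventions for slices and with the padding argument that allows exactly $r$ terms.
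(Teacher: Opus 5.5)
Your proof is correct. The entrywise identity $(UD_kV^T)_{ab}=\sum_{i=1}^r (u_i)_a\, w_{ik}\,(v_i)_b$, together with zero-padding to exactly $r$ terms, is the standard argument; the paper gives no proof of its own and simply cites \cite[Prop 14.45]{BCS97}. You are also right that the diagonal should read $D_k=\diag(w_{1k},\ldots,w_{rk})$, since $D_k$ is $r\times r$.
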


\begin{corollary}\label{cor: rank_n_char}
    For a tensor $T=[T_1,\dots,T_p]\in\bC^{n\times n \times p}$ with $T_1$ invertible, $\Rank(T)= n$ iff the matrices $T_2T_1^{-1},\dots, T_pT_1^{-1}$ are simultaneously diagonalizable. 
    Moreover, if we have the simultaneous diagonalization
    \begin{equation}     \label{eq:simdiagcor}
    T_kT_1^{-1}=UD_kU^{-1},\ k=2,\ldots,p
    \end{equation}
    then $T$ can be decomposed as $T=\sum_{i=1}^n u_i \tp v_i \tp w_i$
    where $u_1,\ldots,u_n$ are the columns of $U$ and $v_1,\ldots,v_n$ are the columns of $V=(U^{-1}T_1)^T$. The vectors $w_1,\ldots,w_n$
    are defined by: $w_{i1}=1$ for $i=1,\ldots,n$ 
    and $D_k = \diag(w_{1k},\ldots,w_{nk})$ for $k=2,\ldots,p$.
\end{corollary}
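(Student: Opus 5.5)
We prove both directions using \cref{th:slices} with $m=n=r$, and then read off the explicit decomposition.

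\emph{($\Leftarrow$).} Suppose $T_kT_1^{-1}=UD_kU^{-1}$ for $k=2,\ldots,p$, with $U$ invertible and $D_k$ diagonal. Set $D_1=I$, and let $V^T:=U^{-1}T_1$. Then for every $k$ we get $T_k=(T_kT_1^{-1})T_1=UD_kU^{-1}T_1=UD_kV^T$; for $k=1$ this is trivial. By the ``moreover'' part of \cref{th:slices}, $T=\sum_{i=1}^n u_i\tp v_i\tp w_i$, where the $u_i$ are the columns of $U$, the $v_i$ are the columns of $V=(U^{-1}T_1)^T$, and $D_k=\diag(w_{1k},\ldots,w_{nk})$. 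In particular $w_{i1}=1$, because $D_1=I$. Hence $\Rank(T)\le n$. Conversely, $\Rank(T)\ge\Rank(T_1)=n$: each slice $T_1=AD_1B$ is a product through an $r\times r$ matrix, so its rank is at most $r$. Thus $\Rank(T)=n$, and we have also established the stated decomposition.

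\emph{($\Rightarrow$).} Suppose $\Rank(T)=n$. By \cref{th:slices} there are $A\in\bC^{n\times n}$, $B\in\bC^{n\times n}$ and diagonal matrices $D_k$ with $T_k=AD_kB$. Since $T_1=AD_1B$ is invertible, all three factors $A$, $D_1$, $B$ must be invertible. This is the key point: square factors force invertibility. It follows that
\[
T_kT_1^{-1}=AD_kB\,B^{-1}D_1^{-1}A^{-1}=A(D_kD_1^{-1})A^{-1},
\]
and each $D_kD_1^{-1}$ is diagonal. So $A$ simultaneously diagonalizes $T_2T_1^{-1},\ldots,T_pT_1^{-1}$.

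There is no real obstacle here. The only step that needs care is the invertibility argument in ($\Rightarrow$), which rests on the fact that the inner dimension equals $n$.
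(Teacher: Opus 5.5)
Your proof is correct and follows essentially the same route as the paper. Both directions go through \cref{th:slices}, using the key fact that an invertible $T_1=AD_1B$ with square factors forces $A$, $D_1$, $B$ to be invertible, and the explicit decomposition comes from setting $D_1=I_n$ and $V^T=U^{-1}T_1$. The only difference is organizational: you merge the ($\Leftarrow$) direction with the ``moreover'' part, while the paper proves the equivalence first and then reads off the decomposition.
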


\begin{proof}
    For a tensor $T=[T_1,\dots,T_p]\in\bC^{n\times n \times p}$ with $T_1$ invertible $\Rank(T)\geq \Rank(T_1)=n$. Also, $\Rank(T)\leq n$ iff there exists matrices $A\in\bC^{n\times n},B\in\bC^{n\times n}$ and diagonal matrices $D_1,\dots, D_p \in \bC^{n\times n}$ such that $T_i=AD_iB$ for all $i\in[p]$. Since $T_1=AD_1B$ is invertible, so are $A,D_1,B$. The above criterion is then equivalent to the   existence of $A\in GL_n(\bC)$ such that $T_iT_1^{-1}=AD_iD_1^{-1}A^{-1}$ for all $i\in[p]$. 

    For the second part of the corollary, note that~(\ref{eq:simdiagcor}) implies that $T_k=UD_kU^{-1}T_1=UD_kV^T$ for $k=2,\ldots,p$. Moreover, the equality $T_k=UD_kV^T$ is also valid for $k=1$ if we define $D_1=I_n$. The conclusion therefore follows from Proposition~\ref{th:slices}.
\end{proof}

\begin{lemma}\label{lem: interpolation}
    Let $T\in\bC^{m\times n \times p}$ such that $\Brank(T)=r$. Then $\Rank(T)\leq (\edeg(T)+1)r$, where $\edeg(T)$ is the error-degree of $T$.
\end{lemma}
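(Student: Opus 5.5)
The plan is the standard polynomial interpolation argument. Write $k:=\edeg(T)$. By \cref{def:errordegree} there are tensors $T_1,\dots,T_k$ such that $T(\e):=T+\e T_1+\dots+\e^k T_k$ satisfies $\Rank(T(\e))\le r$ for all $\e$ outside a finite set $F\subset\bC$. The idea is to recover $T=T(0)$ as a linear combination of $k+1$ evaluations $T(\e_0),\dots,T(\e_k)$ at distinct points $\e_j\notin F$. Each evaluation has rank at most $r$, and rank is subadditive, so this gives $\Rank(T)\le (k+1)r$.

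The steps are as follows.

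First, choose $k+1$ distinct nonzero points $\e_0,\dots,\e_k\in\bC\setminus F$. This is possible because $F$ is finite and $\bC$ is infinite.

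Second, observe that each entry of $T(\e)$ is a polynomial in $\e$ of degree at most $k$. Lagrange interpolation at the $k+1$ nodes then gives
\[
T(\e)=\sum_{j=0}^k \ell_j(\e)\,T(\e_j), \qquad \ell_j(\e)=\prod_{i\ne j}\frac{\e-\e_i}{\e_j-\e_i}.
\]
This holds entrywise, hence as an identity of tensor-valued polynomials. Evaluating at $\e=0$ yields $T=\sum_j \ell_j(0)\,T(\e_j)$ with scalars $\ell_j(0)\in\bC$.

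Third, note that scaling does not increase rank and that $\Rank(X+Y)\le\Rank(X)+\Rank(Y)$. It follows that $\Rank(T)\le\sum_{j}\Rank(T(\e_j))\le (k+1)r$.

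There is no real obstacle here. The only point needing care is that the definition of error degree allows finitely many exceptional $\e$, and the free choice of nodes outside $F$ handles this. We also never need $0\notin F$: the value $T(0)=T$ is obtained purely by interpolation, not from a rank bound at $\e=0$.
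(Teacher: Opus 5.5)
Your proposal is correct and follows essentially the same route as the paper's proof: choose $\edeg(T)+1$ interpolation nodes outside the finite exceptional set, write $T=T(0)$ as a linear combination of the $T(\e_j)$, and conclude by subadditivity of rank. Your explicit Lagrange formula and your remark that $\e=0$ itself need not be a good point are simply a more careful write-up of the same argument.
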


\begin{proof}
    Let $d:=\edeg(T)$. From the definition of \textit{error-degree} (\cref{def:errordegree}), there exists $T(\e)\in\bC[\e]^{m\times n\times p}$ such that $T(\e)= \sum_{i=0}^d T_i\e^d$ for some tensors $T_i\in\bC^{m\times n\times p}$ and $T_0=T$. Choose $\alpha_0,\dots, \alpha_d\in\bC$ such that $\Rank(T(\alpha_i))\leq r$ for all $0\leq i\leq d$. From interpolating $T(\e)$ on $\alpha_0,\dots,\alpha_d$, we get that $T=T_0=\sum_{i=0}^d a_iT(\alpha_i)$ for some $a_i\in\bC$. Consequently, $\Rank(T)\leq \sum_{i=0}^d \Rank(T(\alpha_i))\leq (d+1)r$, which concludes our proof.
\end{proof}

\begin{remark}
    The above lemma gives us a way to get upper bounds on the rank of tensors from upper bounds on error-degrees. This is the version of \cite[Thm. 46]{dutta2025recent} for tensors.
\end{remark}

Error degree and and order of border degeneration are connected by the following inequality:

\begin{lemma} \label{lem:errorfromdegeneration}
    For any $T\in \bC^{m\times n \times p}$, we have $\edeg(T) \leq  2 \bdorder(T)$.
\end{lemma}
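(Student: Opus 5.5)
Starting from a border decomposition witnessing the degeneration order, I will rescale it and use a substitution that forces the remainder to be polynomial in $\e$ with no pole. Let $q=\bdorder(T)$ and $r=\Brank(T)$, and fix polynomial vectors $u_i(\e),v_i(\e),w_i(\e)$ and $Q(\e)$ with
\[
\sum_{i=1}^r u_i(\e)\tp v_i(\e)\tp w_i(\e)=\e^q T+\e^{q+1}Q(\e).
\]

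First I truncate. Only the coefficients of $\e^0,\dots,\e^{q}$ of $u_i,v_i,w_i$ affect the left-hand side modulo $\e^{q+1}$. So I replace each $u_i,v_i,w_i$ by its truncation to degree at most $q$. The identity still holds with a new polynomial $Q(\e)$, because the discarded terms all contribute multiples of $\e^{q+1}$. After truncation the left-hand side has degree at most $3q$, so $Q$ has degree at most $3q-(q+1)=2q-1$. Dividing by $\e^q$ gives, for every $\e\neq 0$,
\[
T(\e):=T+\e Q(\e)=\e^{-q}\sum_{i=1}^r u_i(\e)\tp v_i(\e)\tp w_i(\e),
\]
and the right-hand side has rank at most $r$, since the scalar $\e^{-q}$ can be absorbed into, say, $u_i$. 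Thus $T(\e)=T+\e T_1+\dots+\e^{2q}T_{2q}$ is a polynomial of degree at most $2q$ with constant term $T$. Its rank is at most $r$ for all $\e\neq 0$, so certainly for all but finitely many $\e$. By \cref{def:errordegree}, $\edeg(T)\le 2q$.

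The only point that needs care is the degree count. Without truncation, the degrees of $u_i,v_i,w_i$ are unbounded, and so is the degree of $Q$. The truncation is what secures the bound $2q$, and it is legitimate because the identity only constrains coefficients up to order $q$.

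For the edge case $q=0$: truncation makes the vectors constant, so the sum is a constant tensor. It equals $T+\e Q(\e)$ for all $\e$, which forces $Q=0$ and gives $\Rank T\le r$ directly, so $\edeg(T)=0$.
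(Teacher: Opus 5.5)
Your proof is correct and is essentially the paper's argument: truncate $u_i,v_i,w_i$ to degree at most $q$, note that the sum then has degree at most $3q$, and divide by $\epsilon^q$ to obtain a curve $T+\epsilon Q'(\epsilon)$ of degree at most $2q$ with rank at most $r$ for all $\epsilon\neq 0$. You also justify the truncation step, which the paper only asserts, and you treat the case $q=0$ separately; that case is already covered by the general degree count.
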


\begin{proof}
   Suppose that the vectors $u_i(\epsilon) \in \bC[\e]^m$, $v_i(\epsilon) \in \bC[\e]^n$ and  $w_i(\epsilon) \in  \bC[\e]^p$ are such that  
   \begin{linenomath}
       $$\sum_{i=1}^r u_i(\e)\tp v_i(\e)  \tp w_i(\e) = \e^k T + \e^{k+1}Q$$ 
   \end{linenomath}
   for some $Q\in \bC[\e]^{m\times n \times p}$, 
   %and $u(\e),v(\e),w(\e)\in \bC[\e]$, 
   where $r=\Brank(T)$ and $k=\bdorder(T)$. Then  there exists $Q'\in \bC[\e]^{m\times n \times p}$ and vectors  $u'_i(\epsilon) \in \bC[\e]^m$, $v'_i(\epsilon) \in \bC[\e]^n$,  $w'_i(\epsilon) \in  \bC[\e]^p$ 
   with entries of degree of at most $k$
   such that $\sum_{i=1}^r u'_i(\e)\tp v'_i(\e) \tp w'_i(\e) = \e^k T + \e^{k+1}Q'$. Then, $deg(\e^k T+\e^{k+1}Q')\leq 3k$. Let $T(\e):= T + \e Q'$. Then, $deg(T(\e))\leq 2k$ and $\Rank T(\e) \leq r$ for all $\e \neq 0$.
\end{proof}

\begin{remark} \label{rem:2bdorder}
The upper bound $2\;\bdorder(T)$ in the above lemma is actually  an upper  bound on the degree of error as defined in \cite{dutta2025recent} too, since $\Rank T(\e) \leq r$ for all $\e \neq 0$.
\end{remark}
With the above remark we can obtain bounds on the degree of error as defined in \cite{dutta2025recent} from bounds on $\bdorder(T)$. 
Definition~\ref{def:errordegree} captures most of the spirit of the definition from~\cite{dutta2025recent} but has the following advantage: it sometimes allows us to give better bounds, with a simpler proof, than if we go through the route of Lemma~\ref{lem:errorfromdegeneration} and Remark~\ref{rem:2bdorder}. See Theorem~\ref{thm: not_1-regular}, where we give a smaller bound on the error of degree 
(in the sense of Definition~\ref{def:errordegree}) than on the order of border degeneration, with a simpler proof. We leave it as an open problem whether these two notions of error degree are really distinct (the worse bound obtained with the definition from \cite{dutta2025recent} might be only an artifact of our proof method).

Observe that \cref{lem: interpolation} and \cref{lem:errorfromdegeneration} give us $\Rank(T)\leq (2\bdorder(T)+1)\Brank(T)$. This is exactly the upper bound we get from \cite[Prop 15.26]{BCS97}.

\begin{lemma} \label{lem:mult}
Let $A \in GL_m(\bC),B\in GL_n(\bC)$ and let $T \in \bC^{m\times n \times p}$ be a tensor with slices $T_1,\ldots,T_p$.
If $T' := (A \otimes B^t \otimes I_p) \cdot T$, then $\edeg(T) = \edeg(T')$ and $\bdorder(T) = \bdorder(T')$.
\end{lemma}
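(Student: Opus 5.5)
The plan is to show that the map $S \mapsto (A\otimes B^t\otimes I_p)\cdot S$ is a linear automorphism of $\bC^{m\times n\times p}$. It preserves rank, it is compatible with polynomial dependence on $\e$, and its inverse has the same form. From this I will deduce $\edeg(T')\le \edeg(T)$ and $\bdorder(T')\le\bdorder(T)$. Applying the same argument to $T = (A^{-1}\otimes (B^{-1})^t\otimes I_p)\cdot T'$ then gives the reverse inequalities. First I record what the action does on slices: $T'_k = A T_k B$ for every $k\in[p]$. On rank one tensors it acts as $u\otimes v\otimes w\mapsto (Au)\otimes(B^t v)\otimes w$. By linearity the action therefore sends a sum of $r$ rank one tensors to a sum of $r$ rank one tensors. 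So $\Rank(S')\le\Rank(S)$, and by invertibility equality holds. In particular $\Brank(T')=\Brank(T)=r$, since border rank is defined through these same decompositions; the degeneration order argument below also shows this directly.

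For the degeneration order, take an optimal decomposition
\[
\sum_{i=1}^r u_i(\e)\otimes v_i(\e)\otimes w_i(\e)=\e^q T+\e^{q+1}Q(\e)
\]
with $q=\bdorder(T)$. Applying the action, which is $\bC$-linear and hence commutes with multiplication by powers of $\e$ and acts coefficientwise on polynomials in $\e$, gives
\[
\sum_{i=1}^r (Au_i(\e))\otimes(B^t v_i(\e))\otimes w_i(\e)=\e^q T'+\e^{q+1}Q'(\e),
\]
where $Q'(\e)=(A\otimes B^t\otimes I_p)\cdot Q(\e)\in\bC[\e]^{m\times n\times p}$. The new factor vectors still have polynomial entries, so $\bdorder(T')\le q$.

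For the error degree, take $T(\e)=T+\e T_1+\dots+\e^k T_k$ with $k=\edeg(T)$ and $\Rank T(\e)\le r$ for all but finitely many $\e$. Define $T'(\e)$ as the action applied to $T(\e)$, which equals $T'+\e T_1'+\dots+\e^k T_k'$ with $T_j'$ the transformed $T_j$. Since the action preserves rank pointwise, $\Rank T'(\e)\le r$ at the same cofinite set of $\e$, and $r=\Brank(T')$. Hence $\edeg(T')\le k$.

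There is no real obstacle here. The only points needing care are checking that the action is invertible with inverse of the same form, which holds because $A$ and $B$ are invertible, and that it commutes with the $\e$-expansion, which is just linearity.
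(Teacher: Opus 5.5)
Your proposal is correct and follows the paper's proof: both show that the action preserves rank and hence border rank. They then push an optimal degeneration and an optimal error-degree curve through $(A\otimes B^t\otimes I_p)$ to get $\bdorder(T')\le\bdorder(T)$ and $\edeg(T')\le\edeg(T)$, and obtain the reverse inequalities from the inverse action $(A^{-1}\otimes (B^{-1})^t\otimes I_p)$.
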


\begin{proof}
 By \cref{th:slices}, $\Rank(T') \leq \Rank(T)$.
The slices of $T$ can be obtained from those of $T'$ by multiplication by $A^{-1}$ on the left and $B^{-1}$ on the right, 
hence $\Rank(T) \leq \Rank(T')$ as well.

For border rank, suppose $T = \lim_{\e \rightarrow 0} T(\e)$ where $\Rank T(\e) \leq r$
for every small enough $\epsilon$. Then $T'=\lim_{\e \rightarrow 0} T'(\e)$, 
where the slices of $T'(\e)$ are obtained from those of $T(\epsilon)$ by multiplication by $A$ on the left and $B$ on the right. Since $\Rank(T(\e)) = \Rank(T'(\e))$, we have $\Brank(T') \leq r$. A similar argument shows that $\Brank(T) \leq \Brank(T')$.

Continuing with the definitions of $T(\e)$ and $T'(\e)$ as above, Let $T(\e):=\frac{1}{\e^q}\sum_{i=1}^r u_i(\e)\tp v_i(\e)\tp w_i(\e)$ where $q=\bdorder(T)$ and $u_i\in\bC[\e]^m,v_i\in\bC[\e]^n,w_i\in\bC[\e]^p$. 
Then, $T'(\e)=\frac{1}{\e^q}\sum_{i=1}^r (Au_i(\e))\tp (B^t v_i(\e))\tp w_i(\e)$ and therefore $\bdorder(T')\leq \bdorder(T)$. 
Similarly, $T(\e)=(A^{-1}\tp (B^t)^{-1}\tp I_p)\circ T'(\e)$ and hence $\bdorder(T)\leq \bdorder (T')$.

Now, suppose $T(\e)=[T_1(\e),\dots,T_p(\e)]$. Then, $T'(\e)=[T'_1(\e),\dots, T'_p(\e)]$ where $T'_i(\e)=AT_i(\e)B$ for $i\in[p]$. 
Therefore, 
$\edeg(T') \leq \edeg(T)$.
The converse follows again from the invertibility of $A$ and $B$.
\end{proof}

\begin{remark}
% \label{rem: border_decomp}
    From \cref{lem:mult}, it suffices to find the error-degree and degeneration order for the tensor $T':=[I,T_2T_1^{-1},\dots, T_pT_1^{-1}]$ because $T=(I\tp T_1^t\tp I)\circ T'$ and hence $T$ and $T'$ have the same error-degree and degeneration order. 
    So, without loss of generality we assume $T_1=I$. 
\end{remark}

The following lemma will be needed in Section~\ref{sec:1reg}. 
It deals with tensors that have slices with a block-diagonal structure. 
\begin{lemma}\label{lem: block_diag}
    Let $U,V,W$ be vector spaces over $\bC$ and $T \in U\tp V\tp W$ be a tensor such that $T=S+Q$ where $S\in U_1\tp V_1\tp W$ and $Q\in U_1^{\perp}\tp V_1^{\perp}\tp W$. 
    Suppose $\Brank(T)=\Brank(S)+\Brank(Q)$. 
    Then, $\edeg(T)\leq max\left( \edeg(S),\edeg(Q)\right)$ and $\bdorder(T)\leq max\left( \bdorder(S),\bdorder(Q)\right)$.
\end{lemma}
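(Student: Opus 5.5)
The plan is to glue together optimal approximations of $S$ and $Q$, first padding them to a common exponent or degree.

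\textbf{Degeneration order.} Set $r_S=\Brank(S)$, $r_Q=\Brank(Q)$, $q_S=\bdorder(S)$, $q_Q=\bdorder(Q)$, and $q=\max(q_S,q_Q)$. Write
\[
\sum_{i=1}^{r_S} u_i(\e)\tp v_i(\e)\tp w_i(\e)=\e^{q_S}S+\e^{q_S+1}Q_S(\e).
\]
Suppose $q_S<q$. Multiply the vectors $w_i(\e)$ by $\e^{q-q_S}$. This gives a decomposition with exactly $r_S$ terms equal to $\e^{q}S+\e^{q+1}Q_S(\e)$. So every tensor of border rank $r'$ admitting degeneration order $q'$ also admits every order $q''\ge q'$ with $r'$ terms.

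Do the same for $Q$. Regard $U_1\tp V_1\tp W$ and $U_1^\perp\tp V_1^\perp\tp W$ as subspaces of $U\tp V\tp W$. Summing the two padded decompositions gives $r_S+r_Q$ rank-one terms whose sum is $\e^{q}(S+Q)+\e^{q+1}(Q_S+Q_Q)$. Since $\Brank(T)=r_S+r_Q$ by hypothesis, this is a decomposition of the required length, and it shows $\bdorder(T)\le q$.

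\textbf{Error degree.} Let $k=\max(\edeg(S),\edeg(Q))$. Take $S(\e)=S+\sum_{j\le \edeg(S)}\e^jS_j$ with $\Rank S(\e)\le r_S$ outside a finite set $F_S$. Similarly take $Q(\e)$ with exceptional set $F_Q$. Padding by zero coefficients, both have degree at most $k$.

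Set $T(\e)=S(\e)+Q(\e)$. It is polynomial of degree at most $k$ with constant term $T$. For $\e\notin F_S\cup F_Q$, subadditivity of rank gives $\Rank T(\e)\le r_S+r_Q=\Brank(T)$. Hence $\edeg(T)\le k$.

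One point needs care. The definition of $\edeg$ requires the approximating curve to lie in the same ambient space and to have rank bounded by the border rank of $T$. We must therefore embed the curves for $S$ and $Q$ (which live in $U_1\tp V_1\tp W$ and $U_1^\perp\tp V_1^\perp\tp W$) into $U\tp V\tp W$. This is harmless because rank does not increase under inclusion, as the rank-one terms simply stay rank one.

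\textbf{Main obstacle.} The only substantive issue is that the additivity hypothesis $\Brank(T)=\Brank(S)+\Brank(Q)$ is what makes the glued decomposition length-optimal. Without it the concatenation would not certify anything about $\bdorder(T)$ or $\edeg(T)$, which are defined relative to $\Brank(T)$. Beyond that, the argument is bookkeeping: padding exponents for the degeneration order, and taking the union of the finite exceptional sets for the error degree.
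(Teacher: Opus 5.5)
Your proof is correct and follows the same approach as the paper. For the degeneration order you multiply the lower-order decomposition by a power of $\e$ so that both reach the larger order, then add the two decompositions. For the error degree you add the two approximating curves, and in both cases the hypothesis $\Brank(T)=\Brank(S)+\Brank(Q)$ certifies that the glued decomposition has the required number of terms (you are also a bit more explicit than the paper about taking the union of the finite exceptional sets and about embedding both curves into $U\tp V\tp W$).
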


\begin{proof}
    Let $e_S=\edeg(S),e_Q=\edeg(Q),q_S=\bdorder(S),q_Q=\bdorder(Q)$. 
    Now, let 
    \begin{linenomath}
        $$S(\e)=\frac{1}{\e^{q_S}}\sum_{i=1}^{r_1} u_i(\e)\tp v_i(\e)\tp w_i(\e) \text{ and }  Q(\e)=\frac{1}{\e^{q_Q}}\sum_{i=1}^{r_2} u_i'(\e)\tp v_i'(\e)\tp w_i'(\e)$$
    \end{linenomath} 
    be such that $\lim_{\e\rightarrow 0}S(\e)=S$ and $\lim_{\e\rightarrow 0}Q(\e)=Q$ where $r_1=\bdorder(S),r_2=\bdorder(Q)$ and the entries of $u_i(\e),v_i(\e),w_i(\e),u'_i(\e),v'_i(\e),w'_i(\e)$ are in $\bC[\e]$. 
    Note that $S(\e)+Q(\e)$ is a border decomposition for $T$. Suppose $q_S \geq q_Q$ without loss of generality. We then get  
    \begin{linenomath}
        $$S(\e)+Q(\e)=\frac{1}{q_S}\left( \sum_{i=1}^{r_1} u_i(\e)\tp v_i(\e)\tp w_i(\e) + \e^{q_S-q_Q}\sum_{i=1}^{r_2} u'_i(\e)\tp v'_i(\e)\tp w'_i(\e)\right)$$    
    \end{linenomath}
    Therefore, $\bdorder(T)\leq max\left( \bdorder(S),\bdorder(Q)\right)$.

    For the bound on error-degree, suppose $S(\e),Q(\e)$ are such that $S(0)=S,Q(0)=Q$ and the entries of $S(\e)$ and $Q(\e)$ has $\e$-degree at most $e_S$ and $e_Q$ respectively. Then, again $S(\e)+Q(\e)$ is a border decomposition for $T$ and hence $\edeg(T)\leq max\left( \edeg(S),\edeg(Q)\right)$.
\end{proof}

For the next lemma, recall that $\nu_\e$ denotes the valuation defined in~\cref{sec:valuation}. 
\begin{lemma}\label{lem: border_bound}
    Let $T\in \bC^{m\times n \times p}, T\neq 0$ be a tensor with $\Brank(T)=r$. 
    Assume that $lim_{\e\rightarrow 0}T(\e)=T$ where
    $T(\e):=\sum_{i=1}^r u_i(\e)\tp v_i(\e) \tp w_i(\e)$ for some $u_i(\e)\in\bC(\e)^m,v_i(\e)\in\bC(\e)^n, w_i(\e)\in\bC(\e)^p$.
    %for all $i\in [r]$. 
    Then, $\bdorder(T)\leq -min_{i,j}(\nu_\e(u_i(\e)_j))-min_{i,j}(\nu_\e(v_i(\e)_j)) - min_{i,j}(\nu_\e(w_i(\e)_j))$.
\end{lemma}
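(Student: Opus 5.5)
Set $a := -\min_{i,j}\nu_\e(u_i(\e)_j)$, $b := -\min_{i,j}\nu_\e(v_i(\e)_j)$ and $c := -\min_{i,j}\nu_\e(w_i(\e)_j)$, with the convention that a minimum over all-zero vectors is $+\infty$ and such indices contribute nothing. The plan is to clear denominators in the given decomposition and then read off the degeneration order directly from its definition. Multiply each $u_i$ by $\e^{a}$, each $v_i$ by $\e^{b}$ and each $w_i$ by $\e^{c}$. Then every entry of $\tilde u_i := \e^a u_i$, $\tilde v_i := \e^b v_i$, $\tilde w_i := \e^c w_i$ is a rational function of nonnegative valuation, i.e.\ it lies in the local ring $\bC[\e]_{(\e)}$. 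Summing gives
\[
\sum_{i=1}^r \tilde u_i\otimes\tilde v_i\otimes\tilde w_i=\e^{q}T(\e),\qquad q:=a+b+c .
\]
Since $T(\e)\to T$ and $T(\e)$ has entries in $\bC(\e)$, each entry of $T(\e)$ has nonnegative valuation and constant term equal to the corresponding entry of $T$. Hence $T(\e)=T+\e Q_0(\e)$ with $Q_0$ having entries in $\bC[\e]_{(\e)}$.

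Next I would pass from rational functions to polynomials, as the definition of $\bdorder$ requires. The approach is to truncate power series. Expand every entry of $\tilde u_i,\tilde v_i,\tilde w_i$ as a power series in $\e$, which is possible since these entries are regular at $0$. Let $\hat u_i,\hat v_i,\hat w_i$ be the truncations at degree $q$. Then $\tilde u_i-\hat u_i$, and similarly for the other two factors, has valuation at least $q+1$. Expanding the trilinear product, every difference term contains at least one factor of valuation $\geq q+1$ and the other factors have valuation $\geq 0$. It follows that
\[
\sum_{i=1}^r \hat u_i\otimes\hat v_i\otimes\hat w_i=\e^q T+\e^{q+1}Q(\e)
\]
for some $Q$ with entries in $\bC[\e]_{(\e)}$. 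The left-hand side is polynomial, and $\e^qT$ is polynomial, so $\e^{q+1}Q$ is polynomial, and it has valuation $\geq q+1$. Therefore $Q\in\bC[\e]^{m\times n\times p}$. This exhibits a decomposition of the form required in the definition of degeneration order, using $r=\Brank(T)$ terms, and so $\bdorder(T)\le q$.

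The main obstacle is a sanity issue: the bound $q$ could be negative or involve $+\infty$. The infinities are handled by discarding indices $i$ whose product term is identically zero. For negativity, note that $T\neq0$, so some entry of $T(\e)$ has valuation exactly $0$. Each term $u_i\otimes v_i\otimes w_i$ has entries of valuation $\geq -(a+b+c)$. If $q<0$, all entries of $T(\e)$ would have valuation $\geq -q>0$ and so would vanish at $0$, contradicting $T\neq0$. Hence $q\geq 0$ and the statement is meaningful. The truncation step is the only place where care is needed, and it is routine once everything is viewed in the local ring $\bC[\e]_{(\e)}$.
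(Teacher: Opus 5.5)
Your proof is correct, and it clears denominators by a different route from the paper.

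\textbf{The paper's route.} It works entrywise: each coordinate is written as $f/g$ in lowest terms, and $g$ is replaced by its lowest monomial $\e^{\nu_\e(g)}$. It then asserts that the modified curve $T'(\e)$ still tends to $T$.

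\textbf{Your route.} You rescale by $\e^{a},\e^{b},\e^{c}$ and truncate the resulting power series at degree $q=a+b+c$.

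\textbf{Why yours is the one that works in general.} Write $g=\e^{s}(1+\e h)$. Replacing $g$ by $\e^{s}$ changes each rescaled factor by a term of valuation only $\geq 1$, not $\geq q+1$. That is enough to alter the coefficient of $\e^{q}$ when poles of the individual rank-one terms cancel in the sum. For example, take $W=e_1\otimes e_1\otimes e_2+e_1\otimes e_2\otimes e_1+e_2\otimes e_1\otimes e_1$ and
\[
T(\e)=\e^{-1}(e_1+\e e_2)^{\otimes 3}-\frac{1}{\e(1+\e)}\,e_1^{\otimes 3}\longrightarrow W+e_1^{\otimes 3}.
\]
The limit has border rank $2$, so the lemma's hypotheses hold with $r=2$. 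The paper's modification replaces $\frac{1}{\e(1+\e)}$ by $\frac{1}{\e}$, and the modified curve tends to $W$ instead. So the paper's step $\lim T'(\e)=\lim T(\e)$ is unjustified as written. It is harmless when all denominators are already monomials, as in \cref{prop: linear_part}.

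Your truncation discards only terms of valuation $\geq q+1$, so the coefficient of $\e^{q}$ is untouched. This is precisely the missing justification, and your argument proves the lemma as stated.

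\textbf{Two presentational points.}
\begin{itemize}
\item Establish $q\geq 0$ before you truncate at degree $q$ and before you call $\e^{q}T$ a polynomial.
\item The $+\infty$ convention is moot: $T\neq 0$ forces each of the three families $(u_i)$, $(v_i)$, $(w_i)$ to have a nonzero entry. If you do discard vanishing summands, pad with zero terms so the decomposition has exactly $r$ of them, as the definition of $\bdorder$ requires.
\end{itemize}
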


\begin{proof}
    For every $i\in [r],j\in [m]$ we do the following. Let $u_i(\e)_j:= f(\e)/g(\e)$ for some polynomials $f(\e),g(\e)\in\bC[\e]$ with $f(\e),g(\e)$ coprime and without loss of generality let the coefficient of the term of $g(\e)$ with the least degree and having a non-zero coefficient be equal to $1$. Define $u'_i(\e)_j:=f(\e)/\e^{\nu_\e(g(\e))}$ and observe that $\nu_\e(g(\e))\geq -\nu_\e(u_i(\e)_j)$. Similarly define $v'_i(\e)_j$ and $w'_i(\e)_j$ for all $i,j$ ($j\in [n]$ and $j\in [p]$ respectively). Now define $u'_i(\e)\in\bC(\e)^m,v'_i(\e)\in\bC(\e)^n, w'_i(\e)\in\bC(\e)^p$ as $u'_i(\e):=(u'_i(\e)_j)_{j\in[m]}$, $v'_i(\e):=(v'_i(\e)_j)_{j\in[n]}$ and $w'_i(\e):=(w'_i(\e)_j)_{j\in[p]}$. And now, let $T'(\e):=\sum_{i=1}^r u'_i(\e)\tp v'_i(\e) \tp w'_i(\e)$. Then, $lim_{\e\rightarrow 0}T'(\e)=lim_{\e\rightarrow 0}T(\e)=T$ which gives us $T'(\e)=T+\e Q(\e)$ for some tensor $Q(\e)\in \bC[\e]^{m\times n \times p}$. Let $t_u:=-min_{i,j}(\nu_\e(u_i(\e)_j)), t_v:=-min_{i,j}(\nu_\e(v_i(\e)_j)), t_w:=-min_{i,j}(\nu_\e(w_i(\e)_j))$ and $t=t_u + t_v +t_w$. Note that $\e^t T'(\e)=\sum_{i=1}^r \e^{t_u}u'_i(\e)\tp \e^{t_v}v'_i(\e) \tp \e^{t_w}w'_i(\e)= \e^t T + \e^{t+1}Q(\e)$ with $\e^{t_u}u'_i(\e)\in\bC[\e]^m, \e^{t_v}v'_i(\e)\in\bC[\e]^n$ and $\e^{t_w}w'_i(\e)\in\bC[\e]^p$. % If $t\leq0$, 
    If $t <0$ then $lim_{\e\rightarrow 0} T'(\e)=0$ because the terms of $T'(\e)$ would be polynomials in $\e$ with strictly positive valuation (a contradiction with the hypothesis $T \neq 0$). So, $t$ has to be non-negative. Therefore, 
    \begin{equation*}
        \bdorder(T)\leq t = -min_{i,j}(\nu_\e(u_i(\e)_j))-min_{i,j}(\nu_\e(v_i(\e)_j)) - min_{i,j}(\nu_\e(w_i(\e)_j)). \qedhere
    \end{equation*}
\end{proof}

%================================================================================
\subsection{Dimension reduction for undercomplete tensors}
%================================================================================

In this subsection we establish some technical results on the reduction of the dimensions of the ambient tensor space whenever our tensor is (border) undercomplete.
In the broader context of arithmetic circuits, this is related to the notion of ``essential variables'' of a polynomial~\cite{Carlini06,Kayal11}.
In the specific context of set-multilinear arithmetic circuits (i.e., tensors), see Lemma 5.1 in the arxiv version of~\cite{BSV21} (where dimension reduction is called ``width reduction''). The results in this subsection are based on standard techniques but we could not find exactly the same statements in the literature, so we provide complete 
proofs.

We first introduce the notion of rank factorization of a matrix. 

\begin{definition}
    Given an $m\times n$ matrix $A$ of rank at most $r$, $A=BC$ is a $r$-\textit{factorization} where $B$ is of size $m\times r$ and $C$ is of size $r\times n$.
\end{definition}

Every matrix $A$ has a $\Rank(A)$-\textit{factorization} (which is simply called a rank factorization of $A$) because if $A=\sum_{i=1}^r u_i\tp v_i$, then $A=UV^t$ where $U:=\{u_1,\dots, u_r\},V:=\{v_1,\dots,v_r\}$. From a similar line of reasoning, we can also say that every matrix $A$ has a $r$-\textit{factorization} for $r\geq \Rank(A)$. 

\begin{remark}
    Let $A$ be a matrix of rank $r$. If $A=BC$ is a $r$-\textit{factorization} of $A$, then $B$ and $C$ have to be matrices of full rank because if not, $A$ has a decomposition into a sum of $k$ rank one matrices for some $k<r$.
\end{remark}

\begin{lemma}\label{lem: special_rk_factor}
    Let $A \in M_{m \times n}(\bC)$. For all $r$ such that  
    $\Rank(A)\leq r$, $A$ has a $r$-factorization $A=A_1A_2$ where $A_1$ has full rank.
\end{lemma}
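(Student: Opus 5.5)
The plan is to start from a rank factorization of $A$ and pad it with extra columns, chosen so that the left factor reaches the largest rank its shape allows. Here ``full rank'' for the $m\times r$ matrix $A_1$ means $\Rank(A_1)=\min(m,r)$. Let $s:=\Rank(A)\leq r$. As observed just before the lemma, $A$ has a rank factorization $A=BC$ with $B\in M_{m\times s}(\bC)$ and $C\in M_{s\times n}(\bC)$. By the preceding remark, $B$ has full column rank $s$, so its columns $b_1,\dots,b_s$ are linearly independent in $\bC^m$. The degenerate case $s=0$ fits the argument below, with $B$ and $C$ empty.

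\emph{Case $r\leq m$.} Since $s\leq r\leq m$, I extend $b_1,\dots,b_s$ by vectors $b_{s+1},\dots,b_r$ so that $b_1,\dots,b_r$ remain linearly independent; this is possible by the basis extension theorem in $\bC^m$. Set $A_1:=[\,b_1\ \cdots\ b_r\,]\in M_{m\times r}(\bC)$ and let $A_2\in M_{r\times n}(\bC)$ be $C$ with $r-s$ zero rows appended below. Then $A_1A_2=BC=A$, because the extra columns of $A_1$ are multiplied only by zero rows. Moreover $\Rank(A_1)=r=\min(m,r)$.

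\emph{Case $r>m$.} I extend $b_1,\dots,b_s$ to a basis $b_1,\dots,b_m$ of $\bC^m$ and append $r-m$ zero columns. This gives $A_1:=[\,b_1\ \cdots\ b_m\ 0\ \cdots\ 0\,]\in M_{m\times r}(\bC)$, which has rank $m=\min(m,r)$. As before, $A_2$ is $C$ with $r-s$ zero rows appended, and $A_1A_2=BC=A$.

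The only point needing care is the meaning of ``full rank'' when $r>m$: then $A_1$ can have rank at most $m$, so full rank must be read as full row rank. The uses later in the paper are presumably in the regime $r\leq\min(m,n)$, where the first case gives full column rank. Otherwise the argument is just basis extension together with the block identity $[B\mid B']\begin{bmatrix}C\\0\end{bmatrix}=BC$.
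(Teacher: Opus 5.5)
Your proof is correct and is essentially the paper's argument: extend the columns of the left factor to $r$ linearly independent vectors in $\bC^m$ and pad the right factor with zero rows, so the new columns only ever multiply zeros. You start directly from a rank factorization of $A$, whereas the paper first factors the left factor of an arbitrary $r$-factorization through its rank; you also treat $r>m$ (reading full rank as rank $\min(m,r)$), a case the paper's proof tacitly excludes and which is never needed, since the lemma is only used with $r\le\min(m,n)$.
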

\begin{proof}
    Let $A=BC$ be some $r$-factorization of $A$. Suppose the $m\times r$ matrix $B$ has rank $s<r$ (If s=r, we have nothing to prove). We can get a $s$-\textit{factorization} of $B$, $B=DE$ for some matrices $D\in\bC^{m\times s}$ and $E\in\bC^{s\times r}$. Then there is a full rank matrix $D'\in\bC^{m\times r}$ such that $D'$ restricted to the first $s$ columns is the matrix $D$. We can construct $D'$ from $D$ by extending the set of column vectors of $D$ to a basis of some $r$-dimensional space $V\subseteq\bC^m$. Now, note that the linear map defined by the matrix $E$, say $L_E:\bC^r\rightarrow\bC^s$, has a lift to a linear map $L_{E'}:\bC^r\rightarrow\bC^r$ such that $L_{E'}=\iota \circ L_E$ where $\iota$ is the standard inclusion of $\bC^s\subseteq\bC^r$ which takes a vector $v=(v_1,v_2,\dots,v_s)$ to $\iota(v)=(v_1,v_2,\dots,v_s, 0, \dots, 0)\in\bC^r$. We get the matrix $E'$ corresponding to $L_{E'}$ w.r.t the standard basis by adding $(r-s)$ rows of $0$s to $E$. Then, we have a $r$-\textit{factorization}, $B=D'E'$, of $B$ which gives a $r$-\textit{factorization}, $A=D'(E'C)$, of $A$ with $D'$, a full rank matrix (we take $A_1:=D$ and $A_2:=E'C$).
\end{proof}

\begin{lemma}\label{lem: dim_shift}
    Let $T= [T_1,T_2,\dots,T_p] \in\bC^{m\times  n\times p}$ be such that $\Brank(T)\leq r\leq min(m,n)$. Then there is a matrix $A\in M_{m\times r}(\bC)$ with full rank and a tensor $S= [S_1,S_2,\dots,S_p] \in\bC^{r\times  n\times p}$ such that $T= (A\tp I\tp I)S$.
\end{lemma}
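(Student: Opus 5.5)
The plan is to pass to the column space of $T$. Let $V \subseteq \bC^m$ be the span of the column spaces of the slices $T_1,\dots,T_p$, i.e.\ the image of the flattening $T_{(1)}\colon \bC^{n}\otimes\bC^{p}\to\bC^m$. If $\dim V \le r$, we are done directly. In that case choose $A\in M_{m\times r}(\bC)$ of full rank $r$ whose columns span a subspace containing $V$; this is possible since $r\le m$. Every $T_k$ then factors as $T_k = A S_k$ for some $S_k\in\bC^{r\times n}$, because each column of $T_k$ lies in the column space of $A$. Setting $S=[S_1,\dots,S_p]$ gives $T=(A\otimes I\otimes I)S$. This step could also be phrased through \cref{lem: special_rk_factor}, applied to the $m\times np$ flattening matrix $[T_1\mid\cdots\mid T_p]$, which has rank at most $r$. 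That lemma yields an $r$-factorization $[T_1\mid\cdots\mid T_p]=A_1A_2$ with $A_1\in\bC^{m\times r}$ of full rank, and splitting $A_2$ into $p$ blocks of $n$ columns gives the $S_k$.

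The real content is therefore the claim that the flattening rank $\Rank([T_1\mid\cdots\mid T_p])$ is at most $\Brank(T)\le r$. I would argue this by semicontinuity. First, if $T'=\sum_{i=1}^{r} u_i\otimes v_i\otimes w_i$, then every slice $T'_k=\sum_i (w_i)_k\, u_i v_i^T$ has columns in $\mathrm{span}(u_1,\dots,u_r)$. Hence the flattening of $T'$ has rank at most $r$. Next, by the definition of border rank, $T$ is the limit as $\e\to 0$ of tensors $T(\e)=\e^{-q}\sum_{i=1}^r u_i(\e)\otimes v_i(\e)\otimes w_i(\e)$ with $\e\neq 0$, each of rank at most $r$. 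So the flattening of $T(\e)$ has all $(r+1)\times(r+1)$ minors equal to zero for every small $\e\neq 0$. These minors are polynomial (hence continuous) in the entries, so they also vanish at the limit $T$. This gives $\Rank([T_1\mid\cdots\mid T_p])\le r$.

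Combining the two steps proves the lemma. I do not expect a serious obstacle. The only point needing care is that the border-rank definition is algebraic, written with $\e^q T+\e^{q+1}Q(\e)$. To stay algebraic rather than topological, I would instead note that the flattening of $\e^qT+\e^{q+1}Q(\e)$, viewed over $\bC[\e]$, has rank at most $r$. Each $(r+1)$-minor is then the zero polynomial in $\e$, and its lowest-order coefficient $\e^{q(r+1)}\cdot(\text{minor of }T)$ must vanish, giving the same conclusion.
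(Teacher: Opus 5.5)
Your proposal is correct and follows essentially the same route as the paper. The paper also bounds the rank of the $m\times np$ flattening $[T_1\mid\cdots\mid T_p]$ by $r$ as a limit of flattenings whose images lie in the span of $u_1(\e),\dots,u_r(\e)$, and then invokes \cref{lem: special_rk_factor} to obtain the full-rank factor $A$. Your extra argument via the lowest-order coefficient of the $(r+1)$-minors is a valid, purely algebraic version of the paper's step that the limit also has rank at most $r$.
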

\begin{proof}
    Since $\Brank(T)\leq r$, there exists $T(\e)$ such that $lim_{\e\rightarrow 0}T(\e)=T$ and $T_i(\e)=U(\e)D_i(\e)V(\e)^t$ for some matrices $U(\e),V(\e),D_i(\e)$ for all $i\in[p]$, where $U(\e),V(\e)$ are rank $r$ matrices and $D_i$ is a size $r$ diagonal matrix for all $i\in [p]$. Let $\widehat{T}(\e):= (T_1(\e),\dots,T_p(\e))$ be the $m \times np$ matrix of slices of $T(\e)$.
    We also define the $n \times mp$ matrix: $\widehat{T^t}(\e):= (T_1(\e))^t,\dots,(T_p(\e))^t)$.

    The rank of $\widehat{T(\e)}$ is at most $r$ because $U(\e)$ has rank at most $r$ and $im(\widehat{T}(\e))\subseteq im(U(\e))$. Therefore, $lim_{\e\rightarrow 0}\widehat{T}(\e)=\widehat{T}$ also has rank at most $r$.
    %$r<m$ which implies that $\Rank(\widehat{T})\leq r$. 
    Due to this and \cref{lem: special_rk_factor}, there is a rank $r$ matrix $A$ of shape $m\times r$ and matrices $S_i$ such that $AS_i=T_i$ for all $i\in [p]$.
\end{proof}

\begin{corollary}\label{cor: dim_shift}
    Let $T= [T_1,T_2,\dots,T_p] \in\bC^{m\times  n\times p}$ be such that $\Brank(T)\leq r\leq min(m,n)$. Then each $T_i$ can be written as $T_i=AZ_iB$ for matrices $A\in M_{m\times r}(\bC)$ and $B\in M_{r\times n}(\bC)$, $rk(A)=rk(B)=r$.
\end{corollary}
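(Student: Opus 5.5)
We apply \cref{lem: dim_shift} twice, once on each side. This produces the factorization $T_i = AZ_iB$ with $A$ and $B$ of full rank $r$.

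First, since $\Brank(T)\leq r\leq \min(m,n)$, \cref{lem: dim_shift} gives a full rank matrix $A\in M_{m\times r}(\bC)$ and a tensor $S=[S_1,\dots,S_p]\in\bC^{r\times n\times p}$ with $T_i = AS_i$ for all $i\in[p]$.

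Second, we want to apply the same lemma to the transposed slices of $S$. For this we need $\Brank(S)\leq r$, not just $\Brank(T)\leq r$. The cleanest way is to take a left inverse $A^+\in M_{r\times m}(\bC)$ of $A$, which exists because $A$ has full column rank, so $A^+A=I_r$. Then $S_i = A^+T_i$, that is, $S = (A^+\tp I\tp I)T$. Applying a linear map to the first factor does not increase border rank: if $T=\lim T(\e)$ with $\Rank T(\e)\leq r$, then $S=\lim (A^+\tp I\tp I)T(\e)$, and by \cref{th:slices} each approximating tensor still has rank at most $r$. Hence $\Brank(S)\leq r$.

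Now consider the tensor $S'\in\bC^{n\times r\times p}$ with slices $S_i^t$. Transposing the first two factors preserves border rank, so $\Brank(S')\leq r\leq\min(n,r)$. By \cref{lem: dim_shift} there is a full rank $B'\in M_{n\times r}(\bC)$ and matrices $Z_i'\in M_{r\times r}(\bC)$ with $S_i^t = B'Z_i'$. Setting $B:=B'^t\in M_{r\times n}(\bC)$ and $Z_i := Z_i'^t$ gives $S_i = Z_iB$. Therefore
\[
T_i = AZ_iB \quad\text{for all } i\in[p],
\]
with $\operatorname{rk}(A)=\operatorname{rk}(B)=r$.

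The only point needing care is the hypothesis check for the second application: the border rank bound has to pass from $T$ to $S$ through the left inverse $A^+$, and this is where \cref{th:slices} is used. The rest is bookkeeping.
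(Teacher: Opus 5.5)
Your proof is correct and follows the paper's argument: apply \cref{lem: dim_shift} once to get $A$ and $S$, then again on the second coordinate (which you make precise by transposing the slices) to get $B$ and $Z$. The one addition is that you justify $\Brank(S)\leq r$ through the left inverse $A^+$ and $S=(A^+\tp I\tp I)T$, which the paper only asserts; the paper uses the same device, with the Moore--Penrose inverse, in the proof of \cref{thm: undercomplete}.
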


\begin{proof}
    Using \cref{lem: dim_shift}, we get a matrix $A$ and a tensor $S$ satisfying the conditions of the lemma. Note that $\Brank(S)\leq r$ because $\Brank(T)\leq r$. We can then use the same lemma with respect to the second tensor coordinate instead on the tensor $S\in \bC^{r\times n \times p}$ to get a matrix $B'\in M_{n\times r}(\bC)$ with full rank and a tensor $Z= [Z_1,Z_2,\dots,Z_p] \in\bC^{r\times  r\times p}$ such that $S= (I\tp B'\tp I)Z$. Setting $B:=(B')^t$, we are done.
\end{proof}

\begin{proposition}\label{thm: undercomplete}
    Let $T\in [T_1,T_2,\dots,T_p] \in\bC^{m\times n\times p}$. Then $\Brank(T)\leq r\leq min(m,n)$ if and only if there exists rank $r$ matrices, $A\in M_{m\times r}(\bC),B\in M_{r\times n}(\bC)$ such that $T_i=AZ_iB$ for some $Z_i\in M_r(\bC)$ for all $i\in[p]$ and the tensor $Z:=[Z_1, Z_2, \dotsm Z_p]$ has $\Brank(Z)\leq r$
\end{proposition}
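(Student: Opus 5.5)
The statement is an ``if and only if'', and I would prove the two directions separately.

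For the forward direction, assume $\Brank(T)\leq r\leq\min(m,n)$. \cref{cor: dim_shift} immediately gives rank $r$ matrices $A\in M_{m\times r}(\bC)$ and $B\in M_{r\times n}(\bC)$ with $T_i=AZ_iB$ for every $i\in[p]$. The real work is to show that $Z=[Z_1,\dots,Z_p]$ also has $\Brank(Z)\leq r$. Since $A$ has full column rank, it has a left inverse $A^+\in M_{r\times m}(\bC)$ with $A^+A=I_r$. Since $B$ has full row rank, it has a right inverse $B^+\in M_{n\times r}(\bC)$ with $BB^+=I_r$. Then $Z_i=A^+T_iB^+$ for all $i$, that is, $Z=(A^+\tp (B^+)^t\tp I)\,T$.

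Border rank does not increase under such linear maps on each factor. To see this, take a border decomposition $T(\e)=\sum_{j\le r}u_j(\e)\tp v_j(\e)\tp w_j(\e)\to T$. Applying the maps gives $\sum_j A^+u_j(\e)\tp (B^+)^tv_j(\e)\tp w_j(\e)$. This is a sum of $r$ rank-one tensors, and it converges to $Z$ by continuity. Equivalently, in the polynomial form of the definition: from $\e^qT+\e^{q+1}Q(\e)=\sum_j u_j\tp v_j\tp w_j$ we get the same identity for $Z$ with $Q$ replaced by its image under the maps. Either way, $\Brank(Z)\leq r$.

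For the reverse direction, assume $T_i=AZ_iB$ with $\Brank(Z)\leq r$. Then $T=(A\tp B^t\tp I)\,Z$, and the same argument shows that applying this linear map (it need not be invertible) to a border decomposition of $Z$ gives one of $T$ with at most $r$ terms. Hence $\Brank(T)\leq r$. The condition $r\leq\min(m,n)$ is automatic here, because $A\in M_{m\times r}(\bC)$ has rank $r$, which forces $r\leq m$, and likewise $B$ having rank $r$ forces $r\leq n$.

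There is no serious obstacle in either direction. The only point needing care is the forward direction: existence of the one-sided inverses uses exactly the full-rank property that \cref{cor: dim_shift} provides. I would also note that \cref{lem:mult} as stated covers only invertible maps, so I would spell out the monotonicity of border rank under arbitrary linear maps rather than invoke it.
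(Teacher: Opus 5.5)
Your proposal is correct and follows essentially the same route as the paper: obtain full-rank factors $A,B$ from the dimension-reduction result, pull a border decomposition of $T$ back to $Z$ via a left inverse of $A$ and a right inverse of $B$, and push a decomposition of $Z$ forward via $A\otimes B^t\otimes I$ for the converse. The paper takes Moore--Penrose inverses specifically, but any one-sided inverses suffice. Your citation of the corollary (which supplies both $A$ and $B$) rather than the lemma, and your remark that $r\le\min(m,n)$ is automatic in the reverse direction, are slightly more precise than the written proof.
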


\begin{proof}
    If $\Brank(T) \leq r$, \cref{lem: dim_shift} shows that there are matrices $A,B$ and $Z_i$ such that $T_i = A Z_i B$ for all $i \in [p]$. 
    Since $A,B^t$ have rank $r$, there exists matrices $A^+, (B^t)^+$ such that $A^+A = I_r$ and $(B^t)^+B^t=I_r$. 
    We will take  $A^+, (B^t)^+$ equal to the Moore-Penrose inverses
    of $A$ and $B^t$. This is a convenient choice because the Moore-Penrose inverse of a matrix $M$ satisfies the property $(M^t)^+=(M^+)^t$. In particular, $((B^t)^+)^t=B^+$ is a right inverse of $B$.
    
    Given a tensor $T(\e)$ converging to $T$, we can construct a tensor $Z(\e):=(A^+\tp (B^t)^+\tp I)(T(\e))$. Then $lim_{\e\rightarrow 0} Z(\e)=Z$ since $Z=(A^+\tp (B^t)^+\tp I)(T)$. Here, we used the fact that $((B^t)^+)^t$ is a right inverse of $B$. We conclude that $\Brank(Z)\leq \Brank(T)\leq r$, which proves one direction of the theorem.

    For the other direction, suppose $T_i=AZ_iB$ for some $A\in M_{m\times r}(\bC),B\in M_{r\times n}(\bC)$ and $Z_i\in M_r(\bC)$ for all $i\in[p]$ and the tensor $Z:=[Z_1, Z_2, \dots, Z_p]$ has $\Brank(Z)\leq r$. Then $T= (A\tp B^t\tp I)(Z)$. So again using the same argument as before, suppose $Z(\e)$ is a rank $r$ tensor over $C(\e)$ that converges to $Z$. Then, $T(\e):=(A\tp B^t\tp I)(Z(\e))$ converges to $T$ as $\e\rightarrow 0$. Furthermore, since $rk(T(\e))\leq rk(Z(\e))\leq r$, we have $\Brank(T)\leq \Brank(Z)\leq r $.
\end{proof}

\begin{corollary}\label{cor: undercomplete decompositions}
    Let $T\in [T_1,T_2,\dots,T_p] \in\bC^{m\times n\times p}$. Then $\underline{rk}(T)= r\leq min(m,n)$ if and only if there exists rank $r$ matrices, $A\in M_{m\times r}(\bC),B\in M_{r\times n}(\bC)$ such that $T_i=AZ_iB$ for some $Z_i\in M_r(\bC)$ for all $i\in[p]$ and the tensor $Z:=[Z_1, Z_2, \dotsm Z_p]$ has $\Brank(Z)= r$. Furthermore:
    \begin{itemize}
    \item[(i)] $Z(\e)$ is a tensor approximating $Z$ and $\Rank(Z(\e))=r$ for $\e$ small enough if and only if  $T(\e):= (A\tp B^t\tp I)(Z(\e))$ is a tensor approximating $T$ with $rank(T(\e))=r$ for small enough $\e$.
    \item[(ii)] $\edeg(T) = \edeg(Z)$ and $\bdorder(T) = \bdorder(Z)$.
    \end{itemize}
\end{corollary}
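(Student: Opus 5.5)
The main equivalence follows from \cref{thm: undercomplete} applied twice, once at level $r$ and once at level $r-1$, together with \cref{cor: dim_shift}; parts (i) and (ii) then follow by transporting approximating curves through the linear maps $A\tp B^t\tp I$ and $A^+\tp (B^t)^+\tp I$.

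For the equivalence, first suppose $\Brank(T)=r\leq\min(m,n)$. Then \cref{cor: dim_shift} (equivalently \cref{thm: undercomplete}) gives rank $r$ matrices $A,B$ and $Z$ with $T_i=AZ_iB$ and $\Brank(Z)\leq r$. If $\Brank(Z)\leq r-1$, then the other direction of \cref{thm: undercomplete}, applied with $r-1\leq\min(m,n)$ (more simply, $\Brank(T)\leq\Brank(Z)$ since $T=(A\tp B^t\tp I)(Z)$), would give $\Brank(T)\leq r-1$, a contradiction. Hence $\Brank(Z)=r$. Conversely, suppose such $A,B,Z$ exist with $\Brank(Z)=r$. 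Then $\Brank(T)\leq\Brank(Z)=r$. We also have $Z=(A^+\tp (B^t)^+\tp I)(T)$, because $A^+A=I_r$ and $BB^+=I_r$. Border rank cannot increase under a linear map applied factorwise, so $r=\Brank(Z)\leq\Brank(T)$.

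For (i), write $\Phi=A\tp B^t\tp I$ and $\Psi=A^+\tp (B^t)^+\tp I$, so that $\Psi\circ\Phi=\mathrm{id}$ on $\bC^{r\times r\times p}$. If $Z(\e)\to Z$ with $\Rank Z(\e)=r$, then $T(\e)=\Phi(Z(\e))\to T$ and $\Rank T(\e)\leq r$. In the other direction, $\Rank Z(\e)=\Rank\Psi(T(\e))\leq\Rank T(\e)$, so $\Rank T(\e)\geq r$ and therefore $\Rank T(\e)=r$. For the converse statement, note that $T(\e)=\Phi(Z(\e))$ by definition, so $Z(\e)=\Psi(T(\e))$. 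It converges to $\Psi(T)=Z$, and its rank is squeezed to $r$ by the same two inequalities.

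For (ii), everything rests on the fact that $\Phi$ and $\Psi$ are constant linear maps applied factorwise, and such maps preserve polynomiality in $\e$ and do not increase degrees. Given a curve $Z(\e)=Z+\e Z_1+\dots+\e^kZ_k$ of rank $\leq r$ at all but finitely many $\e$, the curve $\Phi(Z(\e))$ is a curve of the same degree through $T$ with rank $\leq r$, so $\edeg(T)\leq\edeg(Z)$. Applying $\Psi$ to a curve for $T$ gives $\edeg(Z)\leq\edeg(T)$. For the degeneration order, take $\sum_i u_i\tp v_i\tp w_i=\e^qZ+\e^{q+1}Q$ and map it to $\sum_i Au_i\tp B^tv_i\tp w_i=\e^qT+\e^{q+1}\Phi(Q)$. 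The reverse inequality uses $\Psi$ together with $\Psi(T)=Z$. Both directions use exactly $r$ summands, and this is legitimate because $\Brank(T)=\Brank(Z)=r$ by the first part.

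The only point requiring care is the identity $\Psi\circ\Phi=\mathrm{id}$, which needs both $A^+A=I_r$ and $(B^t)^+B^t=I_r$. Full column rank of $A$ and of $B^t$ is exactly what guarantees these one-sided inverses, so there is no real obstacle.
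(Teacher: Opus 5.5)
Your proposal is correct and follows essentially the same route as the paper. The forward direction uses \cref{thm: undercomplete} together with $\Brank(T)\leq\Brank(Z)$, the converse uses $Z=(A^+\tp (B^t)^+\tp I)(T)$, and parts (i) and (ii) push approximating curves and border decompositions through $A\tp B^t\tp I$ and its one-sided inverse. Your explicit identity $\Psi\circ\Phi=\mathrm{id}$ is a slightly cleaner justification of the equality $\Rank T(\e)=\Rank Z(\e)$, which the paper simply asserts from the full rank of $A$ and $B^t$; also, in the converse the condition $r\leq\min(m,n)$ holds because $A$ and $B$ have rank $r$.
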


\begin{proof}
    To first prove the $(\then)$ direction, suppose $\Brank (T)=r$. Then from \cref{thm: undercomplete}, we get a tensor $Z=[Z_1, \dots , Z_p]$ of shape $r\times r \times p$ such that $T_i=AZ_iB$ for all $i\in[p]$ where $A$ and $B$ are full rank matrices of size $m\times r$ and $r\times n$ respectively. Moreover, $\Brank (Z)\leq r$.
    If $\Brank (Z)\leq r-1$, then there exists some tensor $Z(\e)$ converging to $Z$ with $Z(\e)$ has rank $r-1$. This gives us a tensor $T(\e):= (A\tp B^t\tp I)(Z(\e))$ that converges to $T$ which has rank $r-1$. This is a contradiction since $\Brank (T)=r$. Therefore, $\Brank (Z)=r$ too.

    To prove the % $(\Leftarrow)$ 
    converse direction, given $T$ and $Z$ satisfying the conditions of the theorem, observe that $T= (A\tp B^t\tp I)(Z)$ implies $Z = (A^+\tp (B^t)^+\tp I)(T)$ where $A^+$ and $(B^t)^+$ are the  Moore-Penrose inverses of $A$ and $B^t$ respectively. Then, from the same argument as in the $(\then)$ direction, it follows that $\Brank(T)=r$ since $\Brank(Z)=r$.

    Now for the second part of the corollary, for any $T(\e)$ and $Z(\e)$ approximating $T$ and $Z$ such that $T(\e)= (A\tp B^t\tp I)(Z(\e))$, $rank(T(\e))=rank(Z(\e))$ for all $\e$ small enough because $A$ and $B^t$ are full rank matrices (of rank $r$). Therefore, item (i) follows.

    Let $k=\edeg(Z)$ and let $Z(\epsilon)$ be a corresponding degree $k$ approximating tensor. Then from~(i) we have that 
    $T(\e)=(A\tp B^t\tp I)(Z(\e))$ is an approximating tensor for $T$
    of degree at most $k$, with $rank(T(\e))=r$ for small enough $\e$.
    This shows that $\edeg(T) \leq \edeg(Z)$, and likewise we have 
    $\edeg(Z) \leq \edeg(T)$ since $Z = (A^+\tp (B^t)^+\tp I)(T)$.
    A similar argument shows that $T$ and $Z$ have same order of degeneration.
\end{proof}

%================================================================================
\section{Approximately Simultaneously Diagonalizable (ASD) matrices}
\label{sec:asd}
%================================================================================

\begin{definition}
    $A_1,A_2,\dots, A_m\in M_n(\bC)$ are said to be \textit{approximately simultaneously diagonalizable (ASD)} if for every $\e>0$, there exist simultaneously diagonalizable matrices $A_i(\e), i\in[m]$ such that $\norm{A_i(\e)-A_i}<\e$ for each $i\in [n]$ where we take the norm to be the Frobenius norm.\footnote{We could have used other matrix norms, but we chose the Frobenius norm for simplicity (the ASD property is in fact independent of the choice of the norm).}
\end{definition}

\begin{proposition}\label{thm: brank_n_char}
    For $T = [T_1,\dots,T_p] \in\bC^{n\times n \times p}$ with $T_1$ invertible, we have $\Brank(T)= n$ iff the matrices $T_2T_1^{-1},\dots, T_pT_1^{-1}$ are approximately simultaneously diagonalizable.
\end{proposition}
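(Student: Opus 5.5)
We reduce to the case $T_1=I$ and argue both directions through the slice characterization of rank (\cref{cor: rank_n_char}). By \cref{lem:mult}, the tensor $T' := [I, T_2T_1^{-1},\dots,T_pT_1^{-1}]$ satisfies $T=(I\tp T_1^t\tp I)\cdot T'$. The proof of that lemma also shows that $\Brank(T)=\Brank(T')$. So it suffices to show that $\Brank(T')=n$ if and only if $M_k:=T_kT_1^{-1}$, $k=2,\dots,p$, are ASD. Note that $\Brank(T')\ge n$ always holds. Indeed, the flattening $\widehat{T'}=(I,M_2,\dots,M_p)$ has rank $n$, and flattening rank is lower semicontinuous, as in the proof of \cref{lem: dim_shift}. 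Hence the statement amounts to: $\Brank(T')\le n$ iff the $M_k$ are ASD.

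For ($\Leftarrow$), suppose the $M_k$ are ASD. For each $j$, pick simultaneously diagonalizable matrices $M_k^{(j)}$ with $\|M_k^{(j)}-M_k\|<1/j$. Set $T^{(j)}:=[I,M_2^{(j)},\dots,M_p^{(j)}]$. Its first slice is invertible, and its normalized slices are simultaneously diagonalizable, so \cref{cor: rank_n_char} gives $\Rank(T^{(j)})=n$. Since $T^{(j)}\to T'$ in the Euclidean topology, $T'$ lies in the Euclidean closure of the rank-$\le n$ tensors. This closure equals the set of border rank $\le n$ tensors, by the standard equivalence of the topological and algebraic definitions \cite[Chapter 20.6]{BCS97}. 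Hence $\Brank(T')\le n$.

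For ($\Rightarrow$), suppose $\Brank(T')\le n$. By the same equivalence, there is a sequence (or curve) $T(\e)\to T'$ with $\Rank T(\e)\le n$. Because $\det$ is continuous, the first slice $T_1(\e)$ is invertible for small $\e$. Then $\Rank T(\e)\ge n$, hence $\Rank T(\e)=n$. Now \cref{cor: rank_n_char} says that the matrices $T_k(\e)T_1(\e)^{-1}$ are simultaneously diagonalizable. Matrix inversion is continuous near $I$, so $T_k(\e)T_1(\e)^{-1}\to M_k$. Therefore every Frobenius neighborhood of $(M_2,\dots,M_p)$ contains a simultaneously diagonalizable tuple, which is precisely the ASD property.

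The only delicate point is the appeal to the equivalence between the paper's polynomial definition of border rank and the Euclidean closure characterization. We use it in both directions: to pass from a sequence to border rank in ($\Leftarrow$), and from border rank to an actual convergent family in ($\Rightarrow$). We cite it rather than reprove it. Everything else is continuity plus \cref{cor: rank_n_char}.
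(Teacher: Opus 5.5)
Your proof is correct and follows essentially the same route as the paper. The lower bound on border rank comes from the first slice staying invertible (or having full flattening rank). The forward direction takes an approximating family, applies \cref{cor: rank_n_char}, and passes to the limit using continuity of inversion. The reverse direction combines the topological/algebraic equivalence from \cite{BCS97} with \cref{cor: rank_n_char}.

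The only differences are cosmetic. You apply that equivalence to the closure of the rank-$\le n$ tensors, which is exactly what \cite{BCS97} proves and slightly cleaner than the paper's use of it to extract a polynomial curve of simultaneously diagonalizable matrices. In the forward direction you do not need the equivalence at all, since the paper's polynomial definition of border rank directly supplies the family $T' + \epsilon Q(\epsilon)$, which has rank at most $n$ for $\epsilon \neq 0$.
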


\begin{proof}
    Let $T\in\bC^{n\times n \times p}$ be a tensor with $T_1$ invertible. 
    Notice that for any  $T(\e)$ such that $lim_{\e\rightarrow 0}T(\e)=T$, $T_1(\e)$ is invertible and hence $\Rank(T(\e))\geq n$ for all but finitely many $\e$. Therefore, $\Brank(T)\geq n$. Note that for our tensors $T$, $\Brank(T)\leq n$ iff $\Brank(T)=n$ because $T_1$ is invertible.
    
    Now, $\Brank(T)= n$ implies that there exists $T(\e)\in\bC[\e]^{m\times n \times p}$ with $lim_{\e\rightarrow 0}T(\e)=T$ and $\Rank(T(\e))=n$ for all but finitely many $\e$. But, from \cref{cor: rank_n_char}, this implies that the matrices $T_i(\e)T_1(\e)^{-1},\; i\in[p]$, are simultaneously diagonalizable. Also $lim_{\e\rightarrow 0}T_i(\e)T_1(\e)^{-1}=T_iT_1^{-1}$. Hence, the matrices $T_iT_1^{-1},\; i\in[p]$, are ASD.

    For the other direction, suppose $T_2T_1^{-1},\dots, T_pT_1^{-1}$ are ASD. 
    Let $A_i:=T_iT_1^{-1}$. 
    By the equivalence between topological degeneration and algebraic degeneration (\cite[Sections 20.6 and 20.7]{BCS97}), there exists an $\e$-family of matrices, $A_i(\e)\in\bC[\e]^{m\times n}$, such that $A_i(\e),\; i\in[p]$ are simultaneously diagonalizable for all but finitely many $\e$. 
    From $\cref{cor: rank_n_char}$, this implies that $\Rank(T(\e))=n$ where $T(\e):=[A_1(\e)T_1,A_2(\e)T_1,\dots, A_p(\e)T_1]$. 
    Furthermore, $lim_{\e\rightarrow 0}T(\e)=T$. 
    Therefore, $\Brank(T)\leq n$ which implies $\Brank(T)=n$.
\end{proof}

\noindent We next state the Motzkin-Taussky theorem, from \cite[Theorem 5]{MT1955}.

\begin{theorem}[Motzkin-Taussky]\label{Thm: Motzkin-Taussky}
    Every pair of complex commuting matrices has the ASD property.
\end{theorem}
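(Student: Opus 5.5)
The Motzkin-Taussky theorem as stated here (every commuting pair $A,B\in M_n(\bC)$ is ASD) we plan to prove via a direct perturbation argument in the spirit of Guralnick, using the tools collected in the preliminaries. The easy direction (ASD implies commuting) is not needed; we only need to construct, for commuting $A,B$, a family $A(\e),B(\e)$ converging to $A,B$ that is simultaneously diagonalizable for all small nonzero $\e$.

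\textbf{Step 1: reduce to a 1-regular matrix.} First we find a 1-regular $R$ commuting with both $A$ and $B$; failing that, a 1-regular $R$ commuting with $A$ suffices, after which we work with the pair $(A, B+\e R)$. The paper's later Lemma (1-regular extension) is what we want. If we must prove it here, we would take $A$ in Jordan form, and on each generalized eigenspace, whose Jordan blocks have sizes $k_1\ge k_2\ge\dots$, build a nilpotent matrix commuting with $A|_{\text{that eigenspace}}$ that is a single Jordan block. Then we give different eigenspaces distinct scalars, making $R$ 1-regular and in $\cC(A)$. \textbf{This is the main obstacle}: producing, for a nilpotent $N$ with several blocks, a regular nilpotent matrix in $\cC(N)$. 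We would try to construct it as $N+S$, with $S$ a "shift" chaining the last basis vector of one Jordan chain to the top of the next. Commutation then has to be verified, and the Weyr form may be the convenient coordinate system in which to do this.

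\textbf{Step 2: triangularize and perturb.} Since $A$, $B$, $R$ pairwise need not all commute, we restrict to what is needed. By \cref{lem:1-reg ext p2}, $B_\e:=B+\e R$ is 1-regular for all but finitely many $\e$, and it commutes with $A$ whenever $R\in\cC(A)$. By \cref{prop:centralizer}, $A\in\cC(B_\e)=\bC[B_\e]$, say $A=p_\e(B_\e)$. Now we perturb $B_\e$ to a diagonalizable matrix. We apply \cref{thm: weyr_upper_triangular} to the commuting family $\{A,B\}$ (and to $R$ if $R$ commutes with $B$) to make them upper triangular simultaneously. Then \cref{lem: UT_diag} gives a diagonal $D$ such that $B'(\e)=B+\e R+\e f(\e)D$ has distinct eigenvalues for almost all $\e$.

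\textbf{Step 3: transport $A$ and conclude.} The new matrix $B'(\e)$ has distinct eigenvalues, so it is 1-regular. We set $A'(\e):=p_\e(B'(\e))$, taking the coefficients of $p_\e$ from \cref{prop: poly_coeff}, whose rational dependence on $\e$ is controlled. The matrices $A'(\e)$ and $B'(\e)$ are simultaneously diagonalizable, since any polynomial in a matrix with simple spectrum is diagonalized by the same eigenbasis. We then need $A'(\e)\to A$. To get this, we choose $f(\e)$ with high enough valuation, e.g.\ $f(\e)=\e^{N}$ with $N$ exceeding the pole order of the coefficients of $p_\e$. Then $A'(\e)-p_\e(B_\e)=O(\e)$ and $B'(\e)\to B$, which gives ASD. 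Controlling these pole orders is where the degree bounds of \cref{prop: poly_coeff} are used. It is the secondary difficulty after Step 1.
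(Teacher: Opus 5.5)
The paper does not prove this statement; it quotes it from \cite{MT1955}. Your argument is nonetheless essentially the explicit construction in the proof of \cref{thm: not_1-regular}. That is Guralnick's idea: take a 1-regular $R\in\mathcal{C}(A)$, write $A=p_\epsilon(B+\epsilon R)$ with $p_\epsilon\in\bC(\epsilon)[x]$, then perturb $B+\epsilon R$ diagonally. Read on its own, that construction does prove Motzkin--Taussky without circularity. Your Step 3 is a legitimate simplification. Damping the diagonal term by $\epsilon^{N+1}$, with $N$ at least the pole order at $0$ of the coefficients $a_i(\epsilon)$, gives $A'(\epsilon)-A=\sum_i a_i(\epsilon)\bigl(B'(\epsilon)^i-B_\epsilon^i\bigr)=O(\epsilon)$, because every word in that difference contains a factor $\epsilon^{N+1}D$. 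The paper instead uses $\epsilon g(\epsilon)D$, with $g$ the common denominator, so that $A(\epsilon)$ is polynomial in $\epsilon$. That is needed for the error-degree bound but not for ASD. For ASD you only need the $a_i$ to be rational, not the degree bounds of \cref{prop: poly_coeff}.

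The real problem is your self-contained plan for Step 1, which cannot work. Suppose a regular nilpotent $X$ of size $m$ commutes with a nilpotent $K$. Then $K\in\mathcal{C}(X)=\bC[X]$ by \cref{prop:centralizer}, so $K=X^k u(X)$ with $u(X)$ invertible. Hence $\dim\ker K^j=\min(kj,m)$, and all Jordan blocks of $K$ have size $\lfloor m/k\rfloor$ or $\lceil m/k\rceil$. For $K=W_3\oplus W_1$ no such $X$ exists, so no chaining shift $S$ makes $K+S$ a single Jordan block inside $\mathcal{C}(K)$. The missing observation is that $R$ need not have a single eigenvalue on each generalized eigenspace of $A$. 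With $A$ in Jordan form, take $R=A+cM$, where $M$ is the scalar $j-1$ on the $j$-th Jordan block (\cref{lem:1-reg ext}). $M$ commutes with $A$, and for generic $c$ every Jordan block of $R$ carries its own eigenvalue, so $R$ is 1-regular. This step is easy; it is not the main obstacle.

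Also mind the order in Step 2. \cref{lem: UT_diag} needs $B$ and $R$ upper triangular in the same basis. Triangularizing $\{A,B\}$ after $R$ has been chosen does not triangularize $R$ unless $R$ commutes with $B$. Do it as the paper does: first put $A$ in Weyr form with $B$ upper triangular (\cref{thm: weyr_upper_triangular}), then take $R=A+cP^{-1}MP$, which is upper triangular. Alternatively, drop triangularity. For exponent $N\ge 1$, $\epsilon^{-(N+1)}\bigl(B+\epsilon R+\epsilon^{N+1}D\bigr)\to D$ as $|\epsilon|\to\infty$. So the discriminant of the characteristic polynomial is a nonzero polynomial in $\epsilon$, and the spectrum is simple for all but finitely many $\epsilon$.
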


%===============================================================
\subsection{The \texorpdfstring{$E_{n,1}$}{(n,1)} perturbation}\label{subsection: special perturbation}
%===============================================================

In this subsection, we show that we can get simultaneously diagonalizable perturbations which are linear in $\e$, at the cost of having simultaneous diagonalizations which are not polynomial in $\e$. 
This will help us in later sections to get better bounds on the error degree of tensors, when compared to the degeneration order.

We begin with a simple lemma.

\begin{lemma}\label{lem:matrix_words}
    Let $M,N$ be two matrices with $NM^kN=0$ for $0 \leq k \leq n-2$ (where $M^0 = I$). 
    Then, $\displaystyle (M+N)^i=M^i +\sum_{0 \leq k \leq i-1}M^{k}NM^{i-1-k}$ for all $i\in[n]$. The expression is hence linear in $N$.
\end{lemma}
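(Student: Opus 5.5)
The plan is to expand $(M+N)^i$ as a sum over words and show that every word using $N$ at least twice is zero. Then the only survivors are the word $M^i$ and the words with exactly one $N$, which are precisely the terms $M^{k}NM^{i-1-k}$ for $0 \leq k \leq i-1$. I will write $(M+N)^i=\sum_{w} w$, where $w$ ranges over the $2^i$ products $X_1X_2\cdots X_i$ with each $X_j\in\{M,N\}$. The words are kept as ordered products, since $M$ and $N$ need not commute.

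Next I isolate the words with at least two $N$'s. For such a word, I pick two consecutive occurrences of $N$ (consecutive meaning no other $N$ between them). The factors strictly between them are then all $M$'s, so the word contains the contiguous subword $NM^kN$ for some $k\geq 0$. The total length is $i\leq n$, and this subword has length $k+2$, so $k\leq i-2\leq n-2$. The hypothesis $NM^kN=0$ for $0\leq k\leq n-2$ therefore kills the subword, and by associativity the whole word is zero.

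It follows that
\[
(M+N)^i=M^i+\sum_{0\leq k\leq i-1}M^kNM^{i-1-k},
\]
which is visibly linear in $N$. For $i=1$ the statement is trivial, and the argument above covers all $i\in[n]$ uniformly; alternatively one can run an induction on $i$ using the same vanishing of $NM^kN$.

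The only delicate point is the index bookkeeping: checking that the gap $k$ between two consecutive $N$'s never exceeds $n-2$. This is exactly where the restriction $i\leq n$ is used.
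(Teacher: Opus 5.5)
Your proof is correct and follows the same approach as the paper: expand $(M+N)^i$ into ordered words and observe that any word containing two or more occurrences of $N$ contains a factor $NM^kN$, which vanishes by hypothesis (the case $k=0$ covering $N^2=0$). Your version is slightly more careful than the paper's one-line argument, since you explicitly check that the gap satisfies $k \leq i-2 \leq n-2$, which is where the restriction $i \leq n$ is used.
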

\begin{proof}
    The expansion of $(M+N)^i$ only contains words of the form $M^{i_1}N^{j_1}M^{i_2}N^{j_2}\dots M^{i_s}N^{j_s}$. 
    But since $NM^kN=0$ and $N^2=0$, the degree of $N$ in these monomials is at most $1$. 
\end{proof}
We will apply this lemma with $N=E_{n,1}$, the elementary basis matrix with a 1 in row $n$ and column 1 (as defined at the beginning of~\cref{sec:prelim}).

% We tackled the 1-regular case in the previous section, but here, we give a degree 1 perturbation that is ASD which comes with the cost of having eigenvalues that are not polynomial or even rational in $\e$ (and as a consequence, the corresponding tensor decomposition might not even have rational terms which we'll see in the next section).

\begin{proposition}\label{prop:1-reg_pert_linear}
    Let $A_1,A_2,\dots, A_p$ be matrices in $M_n(\bC)$ such that $A_i=p_i(A_1)$ for polynomials $p_i(x)\in \bC[x]$ for $2\leq i \leq p$ with $deg(p_i)\leq n-1$. 
    Then, we have simultaneously diagonalizable $\e$-perturbations of $A_1,A_2,\dots, A_p$, namely $A_1(\e), A_2(\e),\dots, A_p(\e)$ with entries being linear in $\e$.
\end{proposition}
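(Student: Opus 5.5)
The plan is to reduce to the case where $A_1$ is a companion matrix, and there to use the perturbation $A_1+\e E_{n,1}$ together with \cref{lem:matrix_words}. Similarity $A\mapsto PAP^{-1}$ with a constant $P$ preserves linearity in $\e$, simultaneous diagonalizability, and the relations $A_i=p_i(A_1)$. So it suffices to treat a convenient conjugate.

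\textbf{Step 1 (replace $A_1$ by a 1-regular generator).} I will not apply the $E_{n,1}$ trick to the Jordan form $J$ of $A_1$. If $J$ has several blocks, the superdiagonal of $J$ has zeros, the relevant cofactor vanishes, and $J+\e E_{n,1}$ need not have distinct eigenvalues.

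Instead, let $J=\bigoplus_j(\lambda_j I+W_{n_j})$ and set $R:=\bigoplus_j\bigl((\lambda_j+c_j)I+W_{n_j}\bigr)$, with constants $c_j$ chosen so that the $\lambda_j+c_j$ are pairwise distinct. Then $R$ is 1-regular. By Hermite interpolation (the Chinese remainder theorem in $\bC[x]$ modulo the coprime factors $(x-\lambda_j-c_j)^{n_j}$) there is $q\in\bC[x]$ with $q(R)=J$. Hence $A_1$ and every $A_i=p_i(A_1)$, after conjugation, are polynomials in $R$. Reducing modulo the minimal polynomial of $R$, which has degree $n$ by \cref{prop:centralizer}, every $A_i$ equals $r_i(R)$ with $\deg r_i\le n-1$; this includes $i=1$.

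\textbf{Step 2 (companion form and the perturbation).} Since $R$ is 1-regular it is similar to the companion matrix $C$ of its characteristic polynomial $\chi$. I take the convention with ones on the superdiagonal and $-(\text{coefficients of }\chi)$ in the last row. Set $C(\e):=C+\e E_{n,1}$ and, conjugating back, define $A_i(\e):=P\,r_i(C(\e))P^{-1}$ for all $i$ (including $i=1$).

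These are polynomials in the single matrix $C(\e)$, so they are simultaneously diagonalizable whenever $C(\e)$ is diagonalizable. Moreover $A_i(0)=A_i$.

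\textbf{Step 3 (linearity).} I apply \cref{lem:matrix_words} with $M=C$ and $N=E_{n,1}$. The hypothesis is $E_{n,1}C^kE_{n,1}=(C^k)_{1,n}E_{n,1}=0$ for $k\le n-2$. This holds because the first row of $C^k$ is $e_{1+k}^T$ for $k\le n-1$, by induction using the shift structure of the first $n-1$ rows. Hence $C(\e)^k$ is linear in $\e$ for all $k\le n-1$, and since $\deg r_i\le n-1$, each $A_i(\e)$ has entries linear in $\e$.

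\textbf{Step 4 (diagonalizability, the main point to check).} $C(\e)$ is again a companion matrix, and its characteristic polynomial is $\chi(x)-\e$ up to sign convention. This polynomial has a repeated root only if $\e$ lies in the finite set of critical values of $\chi$, that is, values $\chi(z)$ with $\chi'(z)=0$. So for all but finitely many $\e$ the matrix $C(\e)$ has $n$ distinct eigenvalues and is diagonalizable, which concludes the proof.

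The obstacle is Step 1, namely recognizing that one must pass to a 1-regular generator before perturbing. Once in companion form, Steps 3 and 4 are routine checks.
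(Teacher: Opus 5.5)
Your argument is correct, but it takes a different route from the paper's. The paper does not perturb the whole Jordan matrix by a single $E_{n,1}$. Instead it perturbs each Jordan block separately, $J_k\mapsto J_k+\e E_{n_k,1}$. Each block then has $n_k$ distinct eigenvalues $\lambda_k+\omega\,\e^{1/n_k}$ (with $\omega^{n_k}=1$) for every $\e\neq 0$, and the block-diagonal assembly is diagonalizable even if eigenvalues of different blocks coincide. So your observation about $J+\e E_{n,1}$ is right, but passing to a 1-regular generator is not forced.

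What your detour through $R=\bigoplus_j((\lambda_j+c_j)I+W_{n_j})$ and the companion matrix buys is a single global structure. Every $A_i$ becomes $r_i(C)$ for one $n\times n$ companion matrix with $\deg r_i\le n-1$, which is exactly the range where \cref{lem:matrix_words} gives linearity, and all $A_i(\e)$ are polynomials in the one matrix $C(\e)$. In the blockwise version, linearity on a block of size $n_k$ requires first reducing $p_i$ modulo $(x-\lambda_k)^{n_k}$, a step the paper leaves implicit. Taken literally, $p_i(J_k(\e))$ can pick up $\e^2$ terms when $\deg p_i\ge n_k+1$; for example, $(W_2+\e E_{2,1})^3$ has an entry $\e^2$. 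Your construction avoids this issue entirely.

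In exchange, the paper's perturbation is diagonalizable for all $\e\neq 0$, with explicit eigenvalues. Yours is diagonalizable only off the finite set of critical values of $\chi$, which may contain $0$ and other points. That is still enough for the ASD property and for the error-degree bound, since \cref{def:errordegree} tolerates finitely many exceptional values of $\e$.
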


\begin{proof}
    We first give a proof for the case where $A_1$ is a single Jordan block.
    
    \textbf{Special Case:} $A_1$ is a single Jordan block of eigenvalue $\alpha\in\bC$.
    
    In this case $A_1=W+\alpha I$. 
    W.l.o.g., we can assume $\alpha=0$ (i.e, it is enough to find a perturbation of $W$ and $A_2$ that simultaneously diagonalizes them since $A_i=p_i(A_1)=p_i(W+\alpha I)=p'(W)$ for a different polynomial $p'$ of degree at most $n-1$).
    
    We then consider the following perturbation: $W(\e) := W + \e \cdot E_{n,1}$. 
    Note that the minimal polynomial of $W(\e)$ is $f_{W(\e)}(x)= x^n+ (-1)^{n-1}\e$. Therefore, $W(\e)$ has distinct roots, which are $\lb_k(\e):= \e^{1/n}e^{
    (2k+1)\iota\pi/n}= (-\e)^{1/n}\omega_n^k$ for $0\leq k \leq n-1$ where $\omega_n=e^{2ki\pi/n}$. Therefore, $W(\e)$ is diagonalizable. Let $A_i(\e):=p_i(W(\e))$ for $i\in[p]$. We are now left to prove that $A_i(\e)$ is linear in $\e$ for all $i\in[p]$. Since $A_i(\e)=p(W+\e N)$ where $N:=E_{n,1}$, and since $N^2=0$ and $NW^kN=0$ for all $k\leq n-2$, from  \cref{lem:matrix_words}, $\deg_\e(A_i(\e)) \leq 1$.

    \textbf{General Case:} $A_1$ is a Jordan matrix with multiple Jordan blocks.

    In this case, since $A_i=p_i(A_1)$ for all $i\in [p]$, the matrices $A_1,\dots A_p$ have a matching block diagonal structure. Suppose $A_1= diag(J_1,J_2,\dots J_s)$ for some Jordan blocks $J_1,\dots,J_s$. Then, $A_i=p_i(A_1)= diag(p_i(J_1),\dots, p_i(J_s))$. From case 1, we know that there exists linear $\e$ perturbations of $J_k$ and $p_i(J_k)$, namely $J_k(\e)$ and $p_i(J_k(\e))$, for all $i$ and any fixed $k\in[s]$ that are simultaneously diagonalizable. Define $A_1(\e):= diag(J_1(\e),\dots, J_s(\e))$ and $A_i(\e):=p_i(A_1(\e))=diag(p_i(J_1(\e)),\dots, p_i(J_s(\e)))$. Then, $A_1(\e),A_2(\e),\dots, A_p(\e)$ are simultaneously diagonalizable perturbations linear in $\e$.
\end{proof}

\section{Degree of Error and Order of Degeneration}\label{sec:border}
%================================================================================

In this section we prove our main border rank and debordering results.
We begin each subsection by first establishing the \say{almost square} tensor case, i.e. where our tensor is 3-generic in $\bC^{n \times n \times p}$ or in $\bC^{n \times n \times 3}$. 
Once we have proven our technical results (\cref{thm:border1reg,thm: not_1-regular}), we show how our main theorems (\cref{thm: general m not n,thm: linear part m not n}) are derived from them.
In \cref{sec:over} we extend our main results to the overcomplete case.

\subsection{The 1-regular case} \label{sec:1reg}

We begin this subsection assuming that we are given a $(n,3)$-generic tensor $T \in \bC^{n \times n \times p}$ where $T_2 T_1^{-1}$ is 1-regular, which is where our technical contribution comes in quite explicitly.
We then deduce \cref{thm: linear part m not n} from our main technical results.

We begin with a special case.

\begin{proposition}\label{prop: linear_part}
    Let $T\in \bC^{n\times n \times p}$ be a tensor such that $T_1$ is invertible. 
    Suppose that $T_iT_1^{-1}$ is a polynomial in $T_2T_1^{-1}$ for all $i\in [p]$.
    %and $T_iT_1^{-1}$, $T_jT_1^{-1}$ commute for all $i,j\in[p]$. 
    Then, $\Brank(T)=n$, $\edeg(T) \leq 1$ and $\bdorder(T)\leq n-1$. 
    Furthermore, $\Rank(T)\leq 2n$.
\end{proposition}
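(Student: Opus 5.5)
By the Remark following \cref{lem:mult}, we may replace $T$ by $T'=[I,A_2,\dots,A_p]$ with $A_i:=T_iT_1^{-1}$, so we assume $T_1=I$. Set $A:=A_2$, and let $p_i$ be polynomials with $A_i=p_i(A)$; we take $p_1=1$ and $p_2=x$. After reducing modulo the minimal polynomial of $A$, we may assume $\deg p_i\le n-1$.

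Since every $A_i$ is a polynomial in $A$, we can conjugate simultaneously by a matrix $P$ that puts $A$ in Jordan form $J=\mathrm{diag}(J_1,\dots,J_s)$. This conjugation is compatible with the tensor: the slices of $(P^{-1}\otimes P^t\otimes I)\cdot T$ are $P^{-1}A_iP$, so \cref{lem:mult} lets us assume $A=J$. We then have two tasks.

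\textbf{Error degree.} We apply \cref{prop:1-reg_pert_linear}. It gives perturbations $A_i(\e)=p_i(J(\e))$, linear in $\e$, that are simultaneously diagonalizable for all $\e\neq 0$. Here $J(\e)$ replaces each block $J_k=\lambda_kI+W$ by $J_k+\e E_{n_k,1}$. There is one gap: the construction in that proposition is blockwise, and the blocks must not share eigenvalues, since each $A_i(\e)$ should be a polynomial in a single diagonalizable matrix. This does not actually matter. Each block's perturbation is diagonalizable, so $J(\e)$ is diagonalizable. Moreover, every $p_i(J(\e))$ is diagonal in any eigenbasis of $J(\e)$, so the $A_i(\e)$ are simultaneously diagonalizable. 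By \cref{cor: rank_n_char}, the tensor $T(\e)=[I,A_2(\e),\dots,A_p(\e)]$ has rank $n$ for $\e\neq0$. It is linear in $\e$ and satisfies $T(0)=T$. Hence $\Brank(T)\le n$. Since $T_1$ is invertible, $\Brank(T)\ge n$, so $\Brank(T)=n$ and $\edeg(T)\le1$. Then \cref{lem: interpolation} gives $\Rank(T)\le 2n$.

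\textbf{Degeneration order.} A rational decomposition is needed here, so we instead perturb each block $W_{n_k}$ to $W_{n_k}+\e^{n_k}E_{n_k,1}$. Its characteristic polynomial is $x^{n_k}\pm\e^{n_k}$, so its eigenvalues are $\zeta\e$ for the $n_k$ distinct roots $\zeta$. These eigenvalues are polynomial (indeed linear) in $\e$ and distinct for $\e\ne0$.

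The eigenvectors can be written explicitly. For $W+\e^{m}E_{m,1}$ with eigenvalue $\mu=\zeta\e$, the vector $u=(1,\mu,\mu^2,\dots,\mu^{m-1})^t$ is an eigenvector: $Wu$ shifts the coordinates up, and the last coordinate equals $\e^m u_1=\mu^m/\zeta^m\cdot(\pm1)$, which is consistent. So the eigenvector matrix $U(\e)$ is block diagonal, with Vandermonde blocks $V_k(\e)$ whose entries are polynomials in $\e$ of degree at most $n_k-1$. We write $V_k(\e)=V_k^{(0)}\,\mathrm{diag}(1,\e,\dots,\e^{n_k-1})$, where $V_k^{(0)}$ is the constant Vandermonde matrix in the roots $\zeta$, which is invertible.

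This gives $U^{-1}=\mathrm{diag}(1,\e^{-1},\dots)\,(V_k^{(0)})^{-1}$, so $\nu_\e(U^{-1})\ge-(n-1)$, while $U$ itself is polynomial. The diagonal factors $D_i(\e)=p_i(\text{eigenvalues})$ are polynomial. Using the decomposition of \cref{cor: rank_n_char} with $u_j$ the columns of $U$, $v_j$ the rows of $U^{-1}$, and $w_j$ the diagonal entries, \cref{lem: border_bound} yields $\bdorder(T)\le 0+(n-1)+0=n-1$.

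The main obstacle is this valuation bookkeeping for $U^{-1}$. In particular, we must check that the perturbed blocks belonging to distinct Jordan blocks with equal $\lambda_k$ still give a valid simultaneous diagonalization. This holds because the diagonalization is blockwise.
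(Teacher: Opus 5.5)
Your route is the paper's: normalize to $T_1=I$ with $A_2$ in Jordan form, obtain $\edeg(T)\le 1$ from the $E_{n_k,1}$ perturbation of \cref{prop:1-reg_pert_linear}, and obtain $\bdorder(T)\le n-1$ from the $\e^{n_k}E_{n_k,1}$ perturbation, its Vandermonde eigenvectors and \cref{lem: border_bound}. You handle all Jordan blocks at once with a block-diagonal eigenvector matrix, instead of reducing to a single block via \cref{lem: block_diag}. This is a harmless and slightly cleaner variant, since it does not need additivity of border rank over blocks. The degeneration-order half is correct up to an order slip. Row $j$ of $V_k(\e)$ carries $\e^{j-1}$, so $V_k(\e)=\mathrm{diag}(1,\e,\dots,\e^{n_k-1})\,V_k^{(0)}$ and $V_k(\e)^{-1}=(V_k^{(0)})^{-1}\,\mathrm{diag}(1,\e^{-1},\dots,\e^{-(n_k-1)})$. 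This gives the same bound $\nu_\e\ge-(n-1)$.

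The error-degree half, however, contains a false step: $A_i(\e)=p_i(J(\e))$ need not be linear in $\e$. The issue you flagged (blocks sharing an eigenvalue) is indeed harmless; the real problem is degree. \cref{lem:matrix_words} only makes $(W_{n_k}+\e E_{n_k,1})^j$ linear for $j\le n_k$. In fact $(W_{n_k}+\e E_{n_k,1})^{n_k}=\e I$, so the next power is $\e W_{n_k}+\e^2E_{n_k,1}$. Reducing $p_i$ modulo the minimal polynomial of $A$ only guarantees $\deg p_i\le n-1$, which can exceed the block sizes. For example, $A=\mathrm{diag}(W_2,\,I_2+W_2)$ is 1-regular with minimal polynomial $x^2(x-1)^2$. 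So for $A_3=A^3$ you must keep $p_3=x^3$, and the first block of $p_3(J(\e))$ is $\e W_2+\e^2E_{2,1}$. (The general case in the paper's proof of \cref{prop:1-reg_pert_linear} has the same slip, and the paper's proof of this proposition inherits it.)

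The fix is to reduce blockwise. Let $q_{i,k}$ be the remainder of $p_i$ modulo $(x-\lambda_k)^{n_k}$, so $\deg q_{i,k}\le n_k-1$ and $q_{i,k}(J_k)=p_i(J_k)$. Let the $k$-th diagonal block of $A_i(\e)$ be $q_{i,k}(J_k(\e))$. This is linear in $\e$ by \cref{lem:matrix_words} and tends to $A_i$. It is no longer a single polynomial in $J(\e)$, but it is still diagonalized by the block-diagonal eigenbasis of $J(\e)$ --- exactly your blockwise remark. So \cref{cor: rank_n_char}, $\Brank(T)=n$ and \cref{lem: interpolation} then go through as you wrote.
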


\begin{proof}
    Let $A_i=T_{i}T_1^{-1}$ for all $i\in [p]$.
    By \cref{lem:mult} we can assume that $T_1=I$ without loss of generality and that our tensor is given by $T := [I, A_2, \dots, A_p]$ with $A_2$ in Jordan normal form.
    
    From our hypothesis, let $p_i(x)\in\bC[x]$ be such that $A_i=p_i(A_2)$. 
    By Cayley-Hamilton, we can take $p_i$ with $deg(p_i)\leq n-1$ for all $i\in [p]$.
    Then, \cref{prop:1-reg_pert_linear} applies and we get linear perturbations $A_i(\e) := A_i + \e \cdot B_i=p_i(A_2(\e))$, with $B_i \in \bC^{n \times n}$, that are simultaneously diagonalizable. 
    Let $T(\e):=[I,A_2(\e),A_3(\e),\dots, A_{p-1}(\e)]$. 
    % The matrices $T_2(\e)T_1^{-1},\dots, T_p(\e)T_1^{-1}$ are simultaneously diagonalizable and consequently $T_2T_1^{-1},\dots, T_pT_1^{-1}$ are ASD. 
    Therefore, $\Brank(T)=n$ from \cref{thm: brank_n_char}. 
    Furthermore, $T(\e)$ has entries linear in $\e$, so $\edeg(T) \leq 1$. 
    From \cref{lem: interpolation}, we get $\Rank(T)\leq 2n$. 
    % We will now prove the upper bound on $\bdorder(T)$.

    \paragraph{Bounding the degeneration order:}
    \makeatletter \global\@newlistfalse\makeatother
    % We will prove the upper bound by finding a "good" border decomposition and then use \cref{lem: border_bound}.  
    Since $A_i = p_i(A_2)$, we know that $A_i$ has the same Jordan block structure as $A_2$. 
    Thus, by \cref{lem: block_diag} it is enough to assume that $A_2$ is a simple Jordan block (i.e. $A_2 = \lambda I +  W_{n}$ for some $\lambda \in \bC$), since by the above paragraph the restriction of $T$ to each single simple Jordan block of $A$ with dimension $d_i$ will have border rank $d_i$. 
    
    % For the general case, if $A_2$ is any matrix, there exists $C\in M_n(\bC)$ such that $C^{-1}A_2C$ is in Jordan normal form. From \cref{lem:mult}, the tensors $T$ and $T':= (C^{-1}\tp C^t\tp I_p)\circ T$ have the same degeneration order. Also, $T'_2(T'_1)^{-1}=C^{-1}A_2C$ is in Jordan canonical form. Furthermore, from \cref{lem: block_diag}, we can assume that $C^{-1}A_2C$ is a Jordan block because $A_i=p_i(A_2)$ and hence $A_i$s for $i\in[p]$ have block diagonal structure matching that of $A_2$. Therefore, from the previous cases, $\bdorder(T)=\bdorder(T')\leq n-1$.

    % Firstly, suppose that $A_2$ is a nilpotent Jordan block. 
    Since we are assuming $A_2 = \lambda I + W$, let $\widehat{A_2}(\e)=A_2 + (-\e)^n E_{n,1}$. 
    Hence, $det(xI-\widehat{A_2}(\e))=(x-\lambda)^n - \e^n$. 
    Therefore, the eigenvalues of $\widehat{A_2}(\e)$ are $\lb_k(\e):= \lambda + \omega_n^k \cdot \e$ for $0\leq k \leq n-1$, where $\omega_n=e^{2i\pi/n}$.
    Note that $\widehat{A_2}(\e)$ is diagonalizable since it has $n$ distinct eigenvalues. 
    Let $\widehat{A_i}(\e):=p_i(\widehat{A_2}(\e))$ for $i\in[p]$.

    Let $\alpha \in\bC[\e]$ be one of the eigenvalues of $\widehat{A_2}(\e)$ and $u_\alpha(\e):=(u_1,\dots,u_n)\in\bC(\e)^{n}$ be an eigenvector of eigenvalue $\alpha$. 
    Then, $(\widehat{A_2}(\e)-\alpha I)u_\alpha(\e)=0$. 
    This gives,

    \begin{equation}
        \begin{bmatrix}
            u_2 - (\alpha-\lb) u_1 \\
            u_3 - (\alpha-\lb) u_2 \\
            \vdots \\
            u_n - (\alpha-\lb) u_{n-1}\\
            (-\e)^n u_1 -(\alpha-\lb) u_n
        \end{bmatrix} = \vec{0}
    \end{equation}
    
    Setting $u_1=1$, we get $u_i= (\alpha - \lb)^{i-1}$ for all $i\in[n]$. 
    Therefore, $u_{\lb_k(\e)}=(1, \omega_n^k \e,\dots, (\omega_n^k \e)^{n-1})$. 
    Let $U(\e)=[u_{\lb_1}(\e),\dots,u_{\lb_n}(\e)]$ be the matrix of eigenvectors of $\widehat{A_2}(\e)$. 
    Then, we have 
    \begin{linenomath}
        $$\widehat{A_i}(\e) = p_i(\widehat{A_2}(\e)) = U \cdot p_i( \diag(\lambda_1(\e), \dots, \lambda_n(\e)) ) \cdot U^{-1}$$
    \end{linenomath}
    which by Corollary~\ref{cor: rank_n_char} yields the border-decomposition
    % \footnote{$T(\e)$ does not have a decomposition into vectors $u_i(\e),v_i(\e),w_i(\e)$ with rational entries in $\e$. Therefore, we cannot get a bound on degeneration order from this. But, we can rectify this through a change of parameters from $\e$ to $\e^{1/n}$.} of $T$. 

    \begin{equation}\label{eq: new_pert}
        T(\e)=\sum_{i=1}^n u_i(\e)\tp v_i(\e)\tp w_i(\e)
    \end{equation}

    \noindent where $[v_1(\e) \cdots v_n(\e)] := U(\e)^{-t}$ and $w_i(\e) := [1, \lambda_i(\e), \dots, p_p(\lambda_i(\e))]^t$.
  % \PK{The above computation exploits a connection between eigenvalues / eigenvectors and tensor decompositions. Is this something that we have explained in the paper already? If so, we should probably point to that here.} 
    We then have the following lower bounds on the $\e$-valuations of $u_i(\e),v_i(\e),w_i(\e)$,
    \begin{equation}\label{eq: valuations}
        min_{i,j}(\nu_\e(u_i(\e)_j))\geq 0,\;\;\; min_{i,j}(\nu_\e(v_i(\e)_j))\geq -(n-1),\;\;\; min_{i,j}(\nu_\e(w_i(\e)_j))\geq 0
    \end{equation}

    \noindent \cref{lem: border_bound} together with \cref{eq: valuations} and \cref{eq: new_pert} gives us $\bdorder(T)\leq n-1$.
\end{proof}
%This was the last paragraph of the above proof
% Now, consider the case when $A_2$ is a Jordan block. 
%     Suppose $A_2$ is not nilpotent. 
%     Say $A_2=J+ bI$ for some $b\in\bC$ and some nilpotent Jordan block $J$. Let $J(\e):=J+\e E_{n,1}$, note that $A_2(\e)=J(\e)+bI$. Then, $J(\e)$ and $A_2(\e)$ have the same eigenvectors. After changing parameters from $\e$ to $\e^{1/n}$ as we did previously, we still get $min_{i,j}(\nu_\e(w_i(\e^n)_j))\geq 0$ even when $A_2$ is not nilpotent. 
%     Therefore, $\bdorder(T)\leq n-1$.

\begin{theorem} \label{thm:border1reg}
    Let $T\in \bC^{n\times n \times p}$ be a tensor such that $T_1$ is invertible and $T_2T_1^{-1}$ is 1-regular. 
    Then, $T_iT_1^{-1}$, $T_jT_1^{-1}$ commute for all $i,j\in[p]$ if and only if $\Brank(T)=n$. 
    Furthermore, when $\Brank(T)=n$, $\edeg(T) \leq 1$ and $\bdorder(T)\leq n-1$.
\end{theorem}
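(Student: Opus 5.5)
The plan is to reduce everything to \cref{prop: linear_part} via the centralizer characterization of 1-regular matrices (\cref{prop:centralizer}). Throughout, write $A_i := T_iT_1^{-1}$ for $i \in [p]$, so that $A_1 = I$. By \cref{lem:mult} we may replace $T$ by $[I, A_2, \dots, A_p]$; this changes neither the border rank, nor $\edeg$, nor $\bdorder$.

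For the direction ``$\Brank(T)=n \Rightarrow$ commutativity'' we use \cref{thm: brank_n_char}. It says that $\Brank(T)=n$ implies that $A_2,\dots,A_p$ are ASD. So there are simultaneously diagonalizable matrices $A_i(\e)$ converging to $A_i$. Simultaneously diagonalizable matrices commute, so $A_i(\e)A_j(\e)-A_j(\e)A_i(\e)=0$ for every $\e$. Since the commutator is a continuous function of the entries, passing to the limit gives $A_iA_j=A_jA_i$ for all $i,j$. The case $i=1$ or $j=1$ is trivial because $A_1=I$.

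For the converse, assume the $A_i$ pairwise commute. Then each $A_i$ lies in the centralizer $\cC(A_2)$. Since $A_2$ is 1-regular, \cref{prop:centralizer} gives $\cC(A_2)=\bC[A_2]$, so $A_i = p_i(A_2)$ for some polynomial $p_i$. This holds for every $i\in[p]$, including $i=1$ with $p_1 = 1$ and $i=2$ with $p_2 = x$. Hence $T$ satisfies exactly the hypothesis of \cref{prop: linear_part}. That proposition yields $\Brank(T)=n$, $\edeg(T)\le 1$ and $\bdorder(T)\le n-1$. This establishes the equivalence, and it also establishes the ``furthermore'' bounds: whenever $\Brank(T)=n$, the first direction gives commutativity, and the second direction then gives the bounds.

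I expect no step to be a real obstacle, since the substantive work is already contained in \cref{prop: linear_part} and \cref{prop:centralizer}. The only care needed is in the limit argument for the necessity direction. The ASD perturbations should be taken from \cref{thm: brank_n_char} itself, or directly from a rank-$n$ approximating family $T(\e)$ combined with \cref{cor: rank_n_char}. For $\e$ small, $T_1(\e)$ is invertible and $T_i(\e)T_1(\e)^{-1}\to A_i$, so commutativity passes to the limit as described.
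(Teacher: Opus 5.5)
Your proof is correct and follows essentially the same route as the paper. Sufficiency uses $\cC(A_2)=\bC[A_2]$ from \cref{prop:centralizer} to feed \cref{prop: linear_part}, which gives $\Brank(T)=n$ together with the bounds on $\edeg$ and $\bdorder$. Necessity passes the commutativity of the simultaneously diagonalizable matrices $T_i(\e)T_1(\e)^{-1}$ to the limit. The paper does this directly from a rank-$n$ approximating family via \cref{cor: rank_n_char} rather than through \cref{thm: brank_n_char}, which is the alternative you yourself mention and amounts to the same argument.
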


\begin{proof}
    Let $A_i=T_{i+1}T_1^{-1}$ for all $i\in [p-1]$. Since $A_1$ is 1-regular, from \cref{prop:centralizer}, we can write $A_i=p_i(A_1)$ for polynomials $p_i(x)\in\bC[x]$ with $deg(p_i)\leq n-1$ for all $i\in [p]$. Then, by \cref{prop: linear_part}, $\Brank(T)=n$, $\edeg(T) \leq 1$ and $\bdorder(T)\leq n-1$. 
    This proves the direction $(\then)$ and the bounds on the error-degree and degeneration order.

    To prove the direction ($\Leftarrow$), suppose $\Brank(T)=n$. 
    Then, there exists $T(\e)=\sum_{i=1}^n u_i(\e)\tp v_i(\e)\tp w_i(\e)$ such that $lim_{\e\rightarrow 0}T(\e) = T$. This implies that there exist matrices $U(\e),V(\e)$ and $D_i(\e)$ for all $i\in p$ such that $T_i(\e)=U(\e)D_i(\e)V(\e)^t$ for each $i\in [p]$.
    %PK change below:
    %and $U(\e), V(\e)$ are invertible and all 
    Here, the $D_i(\e)$ are diagonal matrices and $T_1(\e)$ is invertible for small enough $\e$ since $T_1$ is invertible.
    Then, since $T_i(\e)T_1(\e)^{-1}=U(\e)D_i D_1^{-1}U(\e)^{-1}$, $T_i(\e)T_1(\e)^{-1}$, $T(\e)_jT_1(\e)^{-1}$ commute for all $i,j\in[p]$. 
    Therefore, their limits also commute and thus we've proved the theorem.
\end{proof}

As a corollary, we obtain our second main theorem on the error degree and order of degeneration in the undercomplete case.

\linearErrorMain*

\begin{proof}
    \cref{thm:border1reg} implies $\Brank(Z)=r$, $\edeg(Z) \leq 1$ and $\bdorder(Z)\leq r-1$. 
    % Now, let $T(\e)= (A \tp B^t \tp I)(Z(\e))$. 
    From \cref{cor: undercomplete decompositions} we have $\Brank(T) = \Brank(Z)$, $\edeg(T) = \edeg(Z)$ and $\bdorder(T) = \bdorder(Z)$. 
    % From \cref{cor: undercomplete decompositions}, $\Rank T(\e) = r$ for all $\e\neq 0$ \RO{we don't guarantee this, right? Maybe we can modify \cref{cor: undercomplete decompositions} to say $\edeg$ and $\bdorder$ are preserved?}\PK{I agree! We already have the preservation of $\edeg$ and $\bdorder$ in Lemma 2.23 in a slightly diffenrent setting (when $A$ and $B$ are invertible), but this needs to be done for Corollary 2.33 as well.} and $lim_{\e\rightarrow 0}T(\e) = T$, thus $\edeg(T) \leq 1$. 
    % Furthermore, $\bdorder(T)\leq \bdorder(Z) \leq r-1$.
\end{proof}

\subsection{The case where \texorpdfstring{$T_iT_1^{-1}$}{TiT1inverse}  is not 1-regular} \label{sec:not1reg}
%================================================================================

%PK A version of \cref{lem:1-reg ext} for the case of matrices $A$ in Jordan form instead of the Weyr form appears as part of the proof of \cite[Theorem~7.6.1, p~342]{OMV11}.
In this section, we drop the the assumption that any $T_2 T_1^{-1}$ is 1-regular, in order to prove \cref{thm: general m not n}. 
The price to be paid for this is that we can only work with tensors of format $m \times n \times 3$, because the proof requires the Motzkin-Taussky theorem.

\begin{proposition}\label{thm: not1reg_main}
    Let $T\in\bC^{n\times n\times 3}$ be a tensor such that $T_1$ is invertible.
    Then, $T_2T_1^{-1}$, $T_3T_1^{-1}$ commute if and only if $\Brank(T)=n$.
\end{proposition}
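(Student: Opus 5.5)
The statement follows by combining \cref{thm: brank_n_char} with the Motzkin--Taussky theorem (\cref{Thm: Motzkin-Taussky}). Throughout, set $A_2 := T_2T_1^{-1}$ and $A_3 := T_3T_1^{-1}$. By \cref{thm: brank_n_char}, applied with $p=3$, we have $\Brank(T)=n$ if and only if the pair $A_2, A_3$ is ASD. The proposition therefore reduces to the claim that a pair of matrices is ASD if and only if the two matrices commute.

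\textbf{($\Rightarrow$) Commuting implies border rank $n$.} Suppose $A_2$ and $A_3$ commute. By \cref{Thm: Motzkin-Taussky}, the pair $(A_2,A_3)$ is ASD. The reverse direction of \cref{thm: brank_n_char} then gives $\Brank(T)=n$. Note that this argument is non-constructive and yields no bound on the error degree or degeneration order. Obtaining such bounds is the real work later in the section, which uses the $1$-regular matrix $R$ and the Weyr form. For the present statement, no bounds are needed.

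\textbf{($\Leftarrow$) Border rank $n$ implies commuting.} I would copy the argument of the converse in \cref{thm:border1reg}, which never used $1$-regularity. Suppose $\Brank(T)=n$. Take an approximation $T(\e)\to T$ with
\[
T_i(\e)=U(\e)D_i(\e)V(\e)^t,
\]
where each $D_i(\e)$ is diagonal. Since $T_1$ is invertible, $T_1(\e)$ is invertible for all sufficiently small $\e$. Hence $U(\e)$, $D_1(\e)$ and $V(\e)$ are invertible $n\times n$ matrices, and
\[
T_i(\e)T_1(\e)^{-1}=U(\e)\,D_i(\e)D_1(\e)^{-1}\,U(\e)^{-1}.
\]
These matrices are simultaneously diagonalizable, so they commute for each such $\e$. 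Taking limits, $T_i(\e)T_1(\e)^{-1}\to T_iT_1^{-1}$, and the commutator is continuous. Therefore $A_2A_3=A_3A_2$.

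\textbf{Main obstacle.} The only genuinely nontrivial ingredient is Motzkin--Taussky itself, which we are allowed to cite. The one point requiring care is the invertibility of $U(\e)$ in the ($\Leftarrow$) direction. It follows because $T_1(\e)=U(\e)D_1(\e)V(\e)^t$ is invertible and all three factors are $n\times n$, so each factor must be invertible.
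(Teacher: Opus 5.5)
Your proof is correct and follows the paper's argument: the Motzkin--Taussky theorem combined with \cref{thm: brank_n_char}. The only difference is in the direction from border rank $n$ to commutation. There you spell out the continuity argument from the converse of \cref{thm:border1reg}, whereas the paper cites Motzkin--Taussky as an ``iff''. Your version is slightly more precise, since \cref{Thm: Motzkin-Taussky} as stated only gives commuting $\Rightarrow$ ASD.
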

\begin{proof}

    From \cref{Thm: Motzkin-Taussky} we have that $T_2T_1^{-1},T_3T_1^{-1}$ commute iff they are ASD and from \cref{thm: brank_n_char}, $T_2T_1^{-1},T_3T_1^{-1}$ are ASD iff $\Brank(T)=n$ which completes the proof.
\end{proof}

\begin{lemma}\label{lem:1-reg ext}
    Let $A \in \bC^{n \times n}$. 
    There exists $R \in \bC^{n \times n}$ that is 1-regular and commutes with $A$. Furthermore, if $A$ is in Jordan form or in Weyr form, there exists a choice of $R$ that is upper triangular.
\end{lemma}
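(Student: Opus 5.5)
We reduce to the case where $A$ is in Jordan form. Write $A = CJC^{-1}$ with $J$ a Jordan matrix. If $R_J$ is 1-regular and commutes with $J$, then $R := C R_J C^{-1}$ commutes with $A$. It is also 1-regular, since similarity preserves the dimensions of eigenspaces (equivalently, it preserves the linear independence of $I, R, \dots, R^{n-1}$ from \cref{prop:centralizer}). So it suffices to treat $A = J$. For the Weyr-form statement we use \cref{lem: J_to_W}. If $M = P^{-1}JP$ is the Weyr form with $P$ a permutation matrix, then $P^{-1}R_JP$ commutes with $M$ and is 1-regular. Upper triangularity, however, is not automatically preserved under conjugation by $P$, so this case needs a separate check (see the last paragraph).

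The Jordan case comes next. Group the Jordan blocks of $J$ by eigenvalue, $J = \diag(J^{(1)}, \dots, J^{(k)})$, where $J^{(t)}$ collects all blocks with eigenvalue $\mu_t$. Within the $t$-th group, with blocks $\mu_t I_{d_1} + W_{d_1}, \dots, \mu_t I_{d_s} + W_{d_s}$, define
\[ R^{(t)} := \diag(c_{t,1} I_{d_1} + W_{d_1}, \dots, c_{t,s} I_{d_s} + W_{d_s}), \]
and set $R := \diag(R^{(1)}, \dots, R^{(k)})$, with all scalars $c_{t,j}$ pairwise distinct over all $t, j$. Each block of $R$ is a polynomial (of degree at most $1$) in the corresponding Jordan block of $J$, so $R$ commutes blockwise with $J$. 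Moreover, $R$ is a direct sum of single Jordan blocks with pairwise distinct eigenvalues, so each eigenspace is one-dimensional and $R$ is 1-regular. Since $R$ is upper triangular, this already gives the ``furthermore'' clause for the Jordan form.

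The main obstacle is the Weyr-form version with $R$ upper triangular. The permutation $P$ of \cref{lem: J_to_W} interleaves the Jordan blocks: inside a basic Weyr block it reorders basis vectors by height, taking first the first vector of each Jordan chain, then the second vector of each chain, and so on. Conjugating a block $cI + W_d$ by $P$ sends each superdiagonal $1$ (which maps chain vector $j+1$ to chain vector $j$) to a position lying in the row of level $j$ and the column of level $j+1$. Since the levels are ordered increasingly, that entry lies above the diagonal, and the diagonal scalars stay on the diagonal. So $P^{-1}RP$ is upper triangular, and in fact block upper triangular with respect to the Weyr structure. We would verify this directly by writing $P$ explicitly. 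Alternatively, one can construct $R$ directly in Weyr coordinates: keep the superdiagonal blocks $M_{i,i+1}$ of the basic Weyr matrix and replace its diagonal blocks $\lambda I_{n_i}$ by a diagonal matrix whose entries are distinct across different chains but constant along each chain. One then checks that the result commutes with $M$ and is 1-regular. Since the permutation argument is cleaner, we would carry that one out in full.
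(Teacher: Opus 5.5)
Your proof is correct and matches the paper's construction. Both take $R$ to be the Jordan matrix plus a diagonal shift that is constant on each Jordan block and separates the eigenvalues of different blocks; the paper uses $R = A + cM$ with $M = \diag(0\cdot I_{i_1}, I_{i_2}, \ldots, (s-1)I_{i_s})$ and generic $c$, and then conjugates by the permutation from \cref{lem: J_to_W}. The only difference is the Weyr step, where the explicit tracking of $P$ you propose is unnecessary. Since $R - J$ is diagonal, $P^{-1}RP$ is the upper triangular Weyr matrix $P^{-1}JP$ plus a diagonal matrix, hence upper triangular, and this holds without knowing the structure of $P$.
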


\begin{proof} 
    We first show that for every Jordan matrix $A$, we have an upper triangular,  1-regular matrix $R$ that commutes with $A$.
    This appears as part of the proof of \cite[Theorem~7.6.1, p~342]{OMV11}.
    We give a proof below for the sake of completeness, and then show that
    this implies the same result for the Weyr form.
    
    Suppose that $A$ has the form, 

    \begin{equation}
        A:=\begin{bmatrix}
            B_1 & 0 & \dots & 0\\
            0 & B_2 & \dots & 0\\
            \vdots & \dots & \dots & \vdots\\
            0 & 0 & \dots & B_s
        \end{bmatrix}
    \end{equation}
    where $B_j$ is a Jordan block of size $i_j$ with eigenvalue $\lb_j$.

    \noindent Define the matrix $M$,

    \begin{equation}
        M:=\begin{bNiceArray}{c|[end=2]c|[start=2, end=3]c|[start=3,end=3]c|[start=5]c}[margin]
            0_{i_1,i_1} & & & &\\
            \Hline[end=2]
             & I_{i_2} & & & \\
             \Hline[start=2,end=3]
             & & 2I_{i_3} & & \\
             \Hline[start=3,end=3]
             & & & \ddots & \\
             \Hline[start=5]
             & & & & (s-1)I_{i_s}
        \end{bNiceArray}
    \end{equation}

    Take $R:=A+cM$ for some fixed $c\in\bC$. Then, for some choice of $c$, $R$ will be a Jordan matrix with distinct eigenvalues $\lb'_j:=\lb_j + (j-1)c$. 
    Therefore $R$ is 1-regular. Since $M$ commutes with $A$, $R$ commutes with $A$ too. 
    Due to our choice of $M$, $R$ is upper-triangular since $A$ is upper triangular.

    Now let $A'$ be the Weyr matrix associated with $A$. Then $A'$ is upper triangular and $A'=P^{-1}AP$ for some permutation matrix $P$ (from \cref{lem: J_to_W}). Then, $R':=P^{-1}RP$ is 1-regular and commutes with $A'$. Furthermore, $P^{-1}RP=P^{-1}(A+cM)P=A'+cP^{-1}MP$. Since $M$ is diagonal, so is $P^{-1}MP$ and therefore $R'$ is upper triangular.
\end{proof}

\begin{theorem}\label{thm: not_1-regular}
    Let $T=[T_1, T_2, T_3]\in \bC^{n \times n \times 3}$ be a tensor such that $T_1$ is invertible and $T_2T_1^{-1},$ $T_3T_1^{-1}$ commute. 
    Then, $\Brank(T)=n$, $T$ has error-degree at most $(n-1)^3 + (n-1)^2$ and the border degeneration order of $T$ is at most $2(n-1)^3 + 3(n-1)^2 + 3(n-1)$.
\end{theorem}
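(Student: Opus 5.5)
We reduce to a statement about one pair of commuting matrices and then make Guralnick's argument for Motzkin--Taussky quantitative. By \cref{lem:mult} we may assume $T_1 = I$ and write $A := T_2$, $B := T_3$, so $AB = BA$. Then $\Brank(T) = n$ follows directly from \cref{thm: not1reg_main}, and it remains to produce explicit simultaneously diagonalizable perturbations $A(\e), B(\e)$ of controlled degree. Conjugation by an invertible matrix is harmless by \cref{lem:mult}. So we apply \cref{thm: weyr_upper_triangular} to the commuting pair $(A,B)$ and assume $A$ is in Weyr form and $B$ is upper triangular. By \cref{lem:1-reg ext} we then pick an upper triangular $1$-regular $R$ that commutes with $A$.

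\textbf{Construction.} The matrix $B + \e R$ commutes with $A$, since both $B$ and $R$ do. By \cref{lem:1-reg ext p2} it is $1$-regular for all but finitely many $\e$. Hence, by \cref{prop:centralizer} applied over $\bC(\e)$, we can write $A = p_\e(B+\e R)$ with $p_\e(x) = \sum_{i=0}^{n-1} a_i(\e) x^i$. By \cref{prop: poly_coeff} with $d_A = 0$ and $d_B = 1$, we get $a_i = f_i/g$ with $\deg f_i \le (n-1)^2$ and $\deg g \le n(n-1)$. We may also take $g = \det M$ from that proof.

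We now perturb further, setting
\[
B(\e) := B + \e R + \e\, g(\e)\, D,
\]
with $D$ diagonal chosen by \cref{lem: UT_diag}, so that $B(\e)$ is upper triangular with distinct diagonal entries and hence diagonalizable for all but finitely many $\e$. We then define $A(\e) := p_\e(B(\e))$, which commutes with and is simultaneously diagonalizable with $B(\e)$. The factor $g$ is chosen precisely to kill denominators. Each difference $B(\e)^i - (B+\e R)^i$ is a polynomial matrix divisible by $\e g(\e)$, so
\[
A(\e) - A = \sum_i \frac{f_i}{g}\bigl(B(\e)^i - (B+\e R)^i\bigr)
\]
is a polynomial matrix divisible by $\e$. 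Thus $T(\e) := [I, A(\e), B(\e)]$ is a polynomial curve through $T$ with $\Rank T(\e) \le n$ for all but finitely many $\e$, by \cref{cor: rank_n_char}.

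\textbf{Error degree.} The error degree bound then comes from bookkeeping. $B(\e)$ has degree at most $1 + n(n-1)$. The degree of $A(\e)$ is at most $\deg f_i$ plus $(n-1)$ times the degree of $B(\e)$, after the cancellation of $g$. We will have to arrange this count, possibly by expanding $B(\e)^i - (B+\e R)^i$ more carefully, so that it lands on $(n-1)^3 + (n-1)^2$.

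\textbf{Degeneration order.} We use the explicit decomposition from \cref{cor: rank_n_char} together with \cref{lem: border_bound}. The diagonal entries of $B(\e)$ are polynomial eigenvalues of degree at most $k := 1 + \deg g$. By \cref{prop: k-nice_ev} the eigenvector matrix $U(\e)$ is $k$-nice, and by the lemma on inverses, $U^{-1}$ is $k$-nice as well. So the $u$- and $v$-factors have valuations at least $-k(n-1)$ and $0$ respectively, up to the factor $T_1 = I$. The $w$-factors are $(1, p_\e(\lambda_i(\e)), \lambda_i(\e))$, whose valuations are at least $-\nu_\e(g) \ge -\deg g$. Summing these three contributions via \cref{lem: border_bound} should give a bound of the stated shape $2(n-1)^3 + 3(n-1)^2 + 3(n-1)$.

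\textbf{Main obstacle.} We expect the difficulty to be the exact degree and valuation accounting. This includes getting the sharp constants, and in particular bounding the valuation of $p_\e(\lambda_i)$ by something better than naive degree counting. It also includes checking that the choice of $D$ keeps the diagonal entries of $B(\e)$ pairwise distinct as polynomials.
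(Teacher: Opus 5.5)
Your plan is the paper's proof step for step. The shared ingredients are: Weyr form for $A$ with $B$ upper triangular, an upper triangular $1$-regular $R$ commuting with $A$, and $A=p_\e(B+\e R)$ with coefficients $f_i/g$ from \cref{prop: poly_coeff}. Then come the perturbation $B(\e)=B+\e R+\e g(\e)D$ with $A(\e)=p_\e(B(\e))$, and $k$-nice eigenvector matrices fed into \cref{lem: border_bound}. What you leave open is only bookkeeping and needs no sharper estimate; distinctness of the diagonal of $B(\e)$ is exactly what \cref{lem: UT_diag} gives. However, one valuation you state is wrong.

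\textbf{Error degree.} Keep $G:=\deg g$ as a variable throughout. The matrix $B(\e)^i-(B+\e R)^i$ has degree at most $i(1+G)$ and is divisible by $\e g$. So $f_i\cdot(B(\e)^i-(B+\e R)^i)/g$ has degree at most $(n-1)^2+(n-1)(1+G)-G=(n-1)^2+(n-1)+(n-2)G$. Only now substitute $G\le n(n-1)$; this gives exactly $(n-1)^3+(n-1)^2$. If you instead combine the bound $\deg B(\e)\le 1+n(n-1)$ with the actual value of $\deg g$, you only get $(n-1)^3+2(n-1)^2+(n-1)$.

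\textbf{Degeneration order.} Your claim that the $v$-factors have valuation at least $0$ does not follow from what you cite. We have $V=(U^{-1}T_1)^t$, and $U^{-1}$ is only $k$-nice, so its entries have valuation at least $-k(n-1)$, exactly as for $U$. Make this correction and use the naive bound $\nu_\e(p_\e(\lambda_i(\e)))\ge-\nu_\e(g)\ge-n(n-1)$; nothing sharper is needed. Then \cref{lem: border_bound} gives $2k(n-1)+n(n-1)\le 2(n(n-1)+1)(n-1)+n(n-1)=2(n-1)^3+3(n-1)^2+3(n-1)$, which is precisely the paper's count.

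Your asymmetric valuations can in fact be obtained by rescaling $u_i\mapsto\e^{-k(n-i)}u_i$ and $v_i\mapsto\e^{k(n-i)}v_i$. The scalars cancel in each rank-one term, so the decomposition is unchanged. This gives $u$-valuations at least $-k(n-1)$ and $v$-valuations at least $0$, hence the better bound $k(n-1)+n(n-1)$. As written, though, that step is unjustified.
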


\begin{proof}
    Note that $\Brank(T)=n$ from \cref{thm: not1reg_main}. %
    Now, to find an explicit perturbation $T(\e)$ of $T$, we will reduce to the case where one of the slices is 1-regular. 
    By applying an invertible change of coordinates and by \cref{thm: weyr_upper_triangular}, without loss of generality we can assume that $A:=T_2T_1^{-1}$ is in its Weyr form and $B:=T_3T_1^{-1}$ is upper triangular. From \cref{lem:1-reg ext}, there exists an upper triangular matrix $R$ that is 1-regular and commutes with $A$.

    From \cref{lem:1-reg ext p2}, for all but finitely many $\e\in\bC$, $R+(1/\e) B$ is 1-regular and thus $B'(\e):=B+\e R$ is 1-regular. Since $A$ commutes with both $B$ and $R$, we have that $A$ commutes with $B+\e R$ which is 1-regular in $\bC(\e)$. Therefore $A=p(B'(\e))=p(B+\e R)$ for some polynomial %PK $p\in \ol{\bC(\e)}[x]$,
    $p\in \bC(\e)[x]$ with $deg(p)\leq n-1$ from \cref{prop:centralizer}.

    \noindent \textbf{Bounding the error-degree:} 
    Summarising what we have so far,
    \begin{align*}
     A=p(B'(\e))=p(B+\e R) &= \sum_{i=0}^{n-1} a_i(\e)(B+\e R)^i
    \end{align*}

    We also know that $B+\e R$ is upper triangular. Then, using $\cref{prop: poly_coeff}$ with $d_A=0$ and $d_B=1$, we get that $a_i(\e)=f_i(\e)/g(\e)$ for some $f_i(\e)\in\bC[\e]$ for all $i\in[n-1]$ and $g(\e)\in\bC[\e]$ with $deg_\e(f_i)\leq (n-1)^2d_B^2 + d_A=(n-1)^2$ and $deg_\e (g(\e))\leq n(n-1)d_B=n(n-1)$.
    
    Define $B(\e):=B+\e R + \e g(\e)D$ and $A(\e):=p(B(\e))$ for some diagonal matrix $D\in M_n(\bC)$ such that $R+ g(\e)D$ has distinct diagonal entries. From \cref{lem: UT_diag}, $B(\e)$ is diagonalizable for all but finitely many $\e$. Therefore, $A(\e)$ and $B(\e)$ are simultaneously diagonalizable and we can write
    \begin{equation} \label{eq:simdiag}
        A(\e) = U(\e) D_2(\e) U(\e)^{-1},\;\; B(\epsilon) = U(\e) D_3(\e) U(\e)^{-1}
    \end{equation}
    where $D_2(\e)$, $D_3(\e)$ are diagonal. 
    We will be computing these matrices, and the matrix of eigenvectors $U(\e)$ later to prove the upper bound on the degeneration order the same way we did in \cref{prop: linear_part}.
    
    We can then define $T(\e):= [T_1, A(\e)T_1, B(\e)T_1]$. Note that for all but finitely many $\e$ we have $T(\e)$ has rank $n$, due to the matrices $I,A(\e),B(\e)$ being simultaneously diagonalizable.\footnote{This is where we use our notion of error degree, see \cref{def:errordegree}.}
    Note that $A(\e)= p(B(\e))$ can be expanded as follows:
    % \begin{align}
    %     B(\e)^i &= \sum_{k=0}^i \binom{i}{k}(B+\e R)^k(\e g(\e)D)^{i-k} \nonumber\\
    %             &= (B+\e R)^i + \e g(\e)D\left( \sum_{k=0}^{i-1}\binom{i}{k}(B+\e R)^k(\e g(\e)D)^{i-k-1}\right) \label{eq: expansion}
    % \end{align}
    \begin{equation}\label{eq: expansion}
        B(\e)^i = \sum_{k=0}^i \binom{i}{k}(B+\e R)^k(\e g(\e)D)^{i-k} 
        = (B+\e R)^i + \e g(\e)D Q_i(\e) 
    \end{equation}

    % Taking $Q_i(\e):= \sum_{k=0}^{i-1}\binom{i}{k}(B+\e R)^k(\e g(\e)D)^{i-k-1}$, we have the following equality,
    Where $Q_i(\e) \in M_n(\bC[\e])$ satisfies $\deg_\e(Q_i(\e))\leq (n-2)*\deg_\e (\e g(\e))\leq (n(n-1)+1)(n-2)$.
    Thus, we have:
    \begin{align}
        p(B(\e)) &= \sum_{i=0}^{n-1} a_i(\e) B(\e)^i = \sum_{i=0}^{n-1} a_i(\e) \left( (B+\e R)^i + \e                                                  g(\e)D*Q_i(\e) \right) \nonumber\\
         &= p(B+\e R) + \sum_{i=0}^{n-1}a_i(\e)g(\e)\left(\e D*Q_i(\e)\right) \nonumber\\
        &= A +\e D \sum_{i=0}^{n-1} a_i(\e)g(\e)Q_i(\e) \label{eq: expansion_2}
    \end{align}
    
    Note that $a_i(\e)g(\e)\in \bC[\e]$ and $Q_i(\e)\in M_n(\bC[\e])$. 
    Therefore $T(\e) \in M_n(\bC[\e])$.
    From the above expressions, we also have $\deg_\e(a_i(\e)g(\e)) = \deg(f_i(\e)) \leq (n-1)^2$.
    Therefore,
    \begin{align}
        deg(p(B(\e))) &\leq 1 + \max_i\left(\deg_\e((a_i(\e)g(\e))\right) + \max_i(\deg_\e(Q_i(\e)))\\
                      &\leq 1 + (n-1)^2 + (n(n-1)+1)(n-2) \nonumber\\
                      &= 1 + (n-1)^2 + \left((n-1)^2 + (n-1) +1\right)\left((n-1) - 1\right) \nonumber\\
                      &= (n-1)^3 + (n-1)^2
    \end{align}

    % To guarantee convergence of $A(\e)$ to $A$, note that we need $k\geq \nu_\e(g(\e)) +1=s+1$ because $A(\e)= A + \e^k(r(\e))Q$ \RO{explain this - this is an explicit calculation} for some rational function $r(\e)$ with the denominator $g(\e)$ and $Q\in\bC^{n\times n}$ \RO{$Q \in \bC[\e]^{n \times n}$}. 
    % We can choose $k=s+1$. 
    % Then the entries of $A(\e)$ will be rational in $\e$ with numerator degree at most \RO{missing a $+ (n-1)^2$ from the coefficients of $a_i(\e)$?}\SG{Yes, I think a "$+ (n-1)^2$" was missing which I've added}$k(n-1) + (n-1)^2= (s+1)(n-1) + (n-1)^2\leq (n+1)(n-1)^2 + n-1$ and denominator $g(\e)$. 
    % Since we also know that the entries of $A(\e)$ are just polynomials in $\e$ \RO{not really, right? Becasue now we have the perturbation by $\e^k D$}, the degree of these polynomial entries in $\e$ will be at most $(n+1)(n-1)^2 + n-1 -deg_\e(g(\e))\leq (n-1)^3 + 2(n-1)^2 + (n-1)$. \SG{I made a change in the final expression that we get from the last inequality (the LHS of the inequality is still the same)}\RO{only can show $\leq n^3$?}. %\PK{At this point, we should note that we have already proved the bound in the theorem's statement on the error degree as defined in~\cite{dutta2025recent}}
    This completes the proof of the upper bound on the degree of error.
    % So far we have not used the fact that $B$ and $R$ are upper triangular, 
    % but these two properties will be used in the remainder of the proof (see also Remark~\ref{rem:weyr}).

   \noindent \textbf{Bounding the border degeneration order:} Since $B(\e):= B+\e R +\e g(\e) D$ is upper triangular, its eigenvalues are polynomials in $\bC[\e]$ with degree at most $k \leq \deg_\e(g(\e)) + 1 \leq n(n-1)+1$. Let the eigenvalues of $B(\e)$ be $\lb_1(\e),\dots \lb_n(\e)$. %PK (which are also the diagonal entries of $B(\e)$ in the same order as listed). 
   These are also the diagonal entries of $B(\e)$ in the same order as listed. From \cref{prop: k-nice_ev}, $B(\e)$ has $n$ distinct eigenvectors and its matrix of eigenvectors $U(\e)$ is $k$-nice, which implies that $U(\e)^{-1}$ is also $k$-\textit{nice}.
   Recall that we defined $T(\e)$ as the tensor $[T_1, A(\e)T_1, B(\e)T_1]$. So far we have simultaneously diagonalised the matrices $T_i(\e)T_1^{-1}= U(\e)D_i(\e)U(\e)^{-1}$ for $i\in [3]$: see~\cref{eq:simdiag} for $i=2,3$, and for $i=1$ we have $D_i(\e)=I_n$. %and $D_i(\e)$ being the matrix with eigenvalues of $T_i(\e)T_1^{-1}$ in its diagonal, giving us 
   Hence $T_i(\e)=U(\e)D_i(\e)U(\e)^{-1}T_1$. 
   Choosing $V(\e):=T_1^tU(\e)^{-t}$ as per 
   Corollary~\ref{cor: rank_n_char}, we obtain $T_i(\e):=U(\e)D_i(\e)V(\e)^t$ which gives us a valid border decomposition. 
   From what we have proved so far, the entries of $U(\e),V(\e)$ are rational functions in $\e$ with valuation at least $(-k(n-1))$.

    % Recall that the diagonal entries of $D_i(\e)$ are exactly the eigenvalues of $T_i(\e)T_1^{-1}$ corresponding to the eigenvectors represented by the columns of $U(\e)$. The matrix $T_i(\e)T_1^{-1}$ is equal to $A(\e)$ when $i=2$ and $B(\e)$ when $i=3$. 
    Also, $A(\e)=p(B(\e))$, hence $D_2(\e)=p(D_3(\e))$. Therefore, the matrix of vectors $W(\e):=[w_1(\e), w_2(\e),\dots , w_n(\e)]$ is
    $$W(\e)=\begin{bmatrix}
                1 & 1 & \dots & 1\\
                (D_2(\e))_{1,1} & (D_2(\e))_{2,2} & \dots & (D_2(\e))_{n,n}\\
                (D_3(\e))_{1,1} & (D_3(\e))_{2,2} & \dots & (D_3(\e))_{n,n}\\
            \end{bmatrix}=\begin{bmatrix}
        1 & 1 & \dots & 1\\
        p(\lb_1(\e)) & p(\lb_2(\e)) & \dots & p(\lb_n(\e))\\
        \lb_1(\e) & \lb_2(\e) & \dots & \lb_n(\e)
    \end{bmatrix}$$

    For a rational function $q(x):=f(x)/g(x)$, $f,g$ coprime, let $den_x(q):= deg_x(g)$. From the work so far, we deduced that $\nu_\e(M_{i,j})\geq -k(n-1)$ for all $i,j\in [n]$, for the choices of matrix $M=U(\e)$ and $M=V(\e)$. Additionally, the vectors $w_i(\e)$ are of the form $(1, p(\lb_i(\e)),\lb_i(\e))$ and so $den_\e((w_i(\e))_j)=0$ for $j\in\{1,3\}$ and $den_\e((w_i(\e))_2)=den_\e(p(\lb_i(\e)))\leq n(n-1)$ because the coefficients of the polynomial $p\in\bC(\e)[x]$ have denominator degree at most $n(n-1)$. 
    Therefore from \cref{lem: border_bound}, $\bdorder(T)$, is bounded above by
    \begin{align*}
        \bdorder(T)&\leq min_{i,j}(\nu_\e(u_i(\e)_j))+min_{i,j}(\nu_\e(v_i(\e)_j)) + min_{i,j}(\nu_\e(w_i(\e)_j))\\
        &\leq 2k(n-1) + n(n-1)\\
        &\leq 2n(n-1)^2 + (n+2)(n-1) = 2(n-1)^3 + 3(n-1)^2 + 3(n-1) \qedhere
    \end{align*}
%\PK{Define formally the "classical error degree" before the theorem, and add this result in the Theorem's statement. Also let's try to find a better name than "classical error degree".%perhaps "approximation degree"?
% [BCS97] has the name "degeneration of order $q$ (of the unit tensor)."}
 \end{proof}

\begin{remark}\label{rem:weyr}
 The Weyr normal form is only needed for the second part of Theorem~\ref{thm: not_1-regular} (on the degeneration order). As pointed out in the proof, for the bound on the degree of error we do not need to make sure that $R$ and $B$ are upper triangular. As a result, the Weyr normal form can be replaced by the Jordan normal form for this part of the theorem.
\end{remark}
 
% %================================================================================
% \subsection{Undercomplete decomposition: the case of border rank \texorpdfstring{$r \leq \min(m,n)$}{r less than min(m,n)}}
% \label{sec:undercomplete}
% %================================================================================

% Here we extend some of the main results of Section~\ref{sec:border} to the case of a tensor of format $m \times n \times p$ and border rank $r\leq \min(m,n)$. 
% In particular, we obtain in Theorem~\ref{thm: linear part m not n} a linear perturbation under a 1-regularity assumption (thereby generalizing Theorem~\ref{thm:border1reg}).
% And we obtain in Theorem~\ref{thm: general m not n} a cubic upper bound on the degree of error for tensors with 3 slices, thereby generalizing Theorem~\ref{thm: not_1-regular}.

As a corollary, we obtain our first main theorem, on the border degeneration degree as well as error degree bounds for the undercomplete case.

\approximateMain*

\begin{proof}
    From \cref{thm: not_1-regular}, we can say that $\Brank(Z)=r$ and there exists a $Z(\e)$ converging to $Z$ with entries linear in $\e$. Let $T(\e)= (A \tp B^t \tp I)(Z(\e))$ again. Then, from \cref{cor: undercomplete decompositions}, $T(\e)$ is a tensor of rank $r$ for all $\e\neq 0$ and with entries polynomial in $\e$ with degree at most $(r-1)^3 + 2(r-1)^2 + (r-1)^2$ and $lim_{\e\rightarrow 0}T(\e) = T$. Furthermore, the degree of border degeneration is at most $2(r-1)^3 + 3(r-1)^2 + 3(r-1)$.
\end{proof}

\subsection{Overcomplete decomposition: the case  \texorpdfstring{$r \geq \min(m,n)$}{rgeqn}} \label{sec:over}

In this section we give bounds on the degree of error and on the order of border degeneration in the overcomplete setting
 (border rank $r \geq n$ for tensors of format $n \times n \times p$).
We begin with a characterization of border rank by commuting extensions, in the same style as the characterization of tensor rank in~\cite{koi24overcomplete}. First, we need to introduce a few notations.

Let $C(r,p)$ be the variety of commuting tuples $(Z_1,\ldots,Z_p)$ of matrices of size $r$. We denote by $D(r,p) \subseteq C(r,p)$ the set of tuples of matrices $(Z_1,\ldots,Z_p)$ of size $r$ that are simultaneously diagonalizable. 
Also, denote by $\pi_{r,p,n}$ the projection on the top left blocks of these matrices: more precisely, 
 $\pi_{r,p,n}(Z_1,\ldots,Z_p)=(A_1,\ldots,A_p)$ where each $A_i$ is the top left block of $Z_i$ of size $n$.
 In other words, $(Z_1,\ldots,Z_p)$ is a diagonalizable commuting extension of $(A_1,\ldots,A_p)$.
\begin{theorem} \label{th:bordercommute}
Let $T \in {\bC}^{n \times n \times p}$ be a tensor with a first slice $T_1$ that is invertible.
For any $r \geq n$, the three following properties are equivalent:
\begin{itemize}
\item[(i)] $T$ has border rank at most $r$.
\item[(ii)] The $p$-tuple $(I_n,T_2T_1^{-1},\ldots,T_pT_1^{-1})$ belongs to  $\overline{\pi_{r,p,n}(D(r,p))}$.
\item[(iii)] The $(p-1)$-tuple $(T_2T_1^{-1},\ldots,T_pT_1^{-1})$ belongs to  $\overline{\pi_{r,p-1,n}(D(r,p-1))}$.
\end{itemize}
\end{theorem}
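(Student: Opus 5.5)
We prove the equivalences by first establishing a rank-level version and then passing to closures. By \cref{lem:mult} we may replace $T$ by $T'=[I_n,A_2,\ldots,A_p]$ with $A_i=T_iT_1^{-1}$, since $T'$ has the same rank and border rank as $T$.

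\textbf{Rank-level statement.} For a tensor $S=[S_1,\ldots,S_p]\in\bC^{n\times n\times p}$ we claim $\Rank(S)\le r$ if and only if $(S_1,\ldots,S_p)\in\pi'(D(r,p))$. Here $\pi'$ denotes the map sending a tuple $(Z_1,\ldots,Z_p)$ to the tuple of products $(PZ_1Q,\ldots,PZ_pQ)$, with $P=[I_n\;0]$ and $Q=P^t$; this is exactly $\pi_{r,p,n}$.

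For the forward direction, \cref{th:slices} gives $S_k=UD_kV^t$ with $U,V\in\bC^{n\times r}$. We extend $U$ and $V$ to $r\times r$ matrices $\tilde U,\tilde V$ whose top $n$ rows are $U$ and $V$ and which are invertible. Since $r\ge n$, we can first perturb the top rows generically so that $U$ has rank $n$, and then add complementary rows. We set $Z_k=\tilde U D_k\tilde U^{-1}$.

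There is a snag: the top-left block of $Z_k$ is $UD_k(\tilde U^{-1})_{\text{cols }1..n}$, not $UD_kV^t$. So we instead need $\tilde U^{-1}$ to have first $n$ columns equal to $V^t$. That is, we need an invertible $\tilde U$ with $\tilde U_{[1..n],:}=U$ and $(\tilde U^{-1})_{:,[1..n]}=V^t$, which forces $UV^t=I_n$. This holds precisely when $S_1=I_n$ and $D_1=I$ (after rescaling), and that is our situation for $T'$.

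Concretely, with $UV^t=I_n$ we complete $U$ by rows spanning a complement of the annihilator condition. We choose the bottom block $X\in\bC^{(r-n)\times r}$ with $XV=0$ of full rank $r-n$ and $\ker X\cap\ker U=0$; equivalently, the rows of $X$ span a complement of $\mathrm{row}(U)$ inside $(\mathrm{col}\,V)^{\perp}$. Such an $X$ exists since $V$ has full column rank $n$. Then $\tilde U=\begin{bmatrix}U\\X\end{bmatrix}$ is invertible and $\tilde U\,[V\;Y]=I_r$ for a suitable $Y$.

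The converse is immediate: if $Z_k=\tilde UD_k\tilde U^{-1}$, then $PZ_kQ=(P\tilde U)D_k(\tilde U^{-1}Q)$, which has rank at most $r$ by \cref{th:slices}.

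\textbf{(i)$\Leftrightarrow$(ii).} Border rank at most $r$ means $T'$ lies in the closure of tensors of rank at most $r$, by the topological characterization \cite[20.6]{BCS97}. For small $\e$ the approximating tensors $T'(\e)$ have invertible first slice near $I_n$. Multiplying by $T'_1(\e)^{-1}$ (\cref{lem:mult}) gives tensors with first slice $I_n$ and rank at most $r$, still converging to $T'$. By the rank-level claim these lie in $\pi_{r,p,n}(D(r,p))$, so $T'$ lies in the closure, which is (ii).

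Conversely, a sequence in $\pi_{r,p,n}(D(r,p))$ converging to $(I_n,A_2,\ldots,A_p)$ gives tensors of rank at most $r$ converging to $T'$. This direction needs no normalization.

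\textbf{(ii)$\Leftrightarrow$(iii).} This is the step I expect to be the main obstacle, because of the first coordinate. From (ii) to (iii) we simply drop the first matrix, since the projection of $D(r,p)$ to its last $p-1$ factors lies in $D(r,p-1)$ and projections are continuous.

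From (iii) to (ii) we take $Z_1=I_r$, which is simultaneously diagonalizable with anything and projects to $I_n$. So if $(Z_2,\ldots,Z_p)\in D(r,p-1)$, then $(I_r,Z_2,\ldots,Z_p)\in D(r,p)$, and the limit claim follows by continuity.

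The delicate part is really in (i)$\Rightarrow$(ii): we must ensure the normalization $T'_1(\e)=I_n$, i.e.\ $UV^t=I_n$, so that a genuine commuting extension with top-left corner exactly the slices exists. I would handle this carefully via the complement construction for $X$ above.
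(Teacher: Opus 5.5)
Your outline is right, and it differs from the paper mainly in being self-contained. Where the paper cites \cite[Proposition 15]{koi24overcomplete} to get a diagonalizable commuting extension of the normalized approximants, you build it by hand: complete $U$ with rows $X$ spanning the annihilator of $\mathrm{col}(V^t)$, so that $\tilde U V^t = \begin{bmatrix} I_n\\ 0\end{bmatrix}$ and $V^t$ is automatically the first $n$ columns of $\tilde U^{-1}$. By working with rank at most $r$ and padding decompositions with zero terms, you also avoid the paper's separate treatment of border rank $<r$. The steps (ii)$\Leftrightarrow$(iii) and (ii)$\Rightarrow$(i) are fine.

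There is, however, a gap exactly at the step you call delicate. Multiplying $T'(\epsilon)$ by $T'_1(\epsilon)^{-1}$ gives $U D_1 V^t = I_n$ with $D_1 = \mathrm{diag}(w_{11}(\epsilon),\ldots,w_{r1}(\epsilon))$, not $UV^t = I_n$. Your ``rescaling'' ($V^t \mapsto D_1 V^t$, $D_k \mapsto D_k D_1^{-1}$) needs every $w_{i1}(\epsilon) \neq 0$, and an arbitrary approximating decomposition need not satisfy this. If some $w_{i1}=0$, rescaling the other terms yields $UV^t = I_n + \sum_{i:\,w_{i1}=0} u_i v_i^t$. No rescaling of that decomposition removes the extra sum, for instance when exactly one term has $w_{i1}=0$ and $u_iv_i^t\neq 0$. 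So your rank-level claim is only proved for decompositions with $D_1$ invertible. That is exactly the hypothesis under which the paper invokes Koiran's proposition.

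The missing idea is the one the paper uses. Since only the limit matters, first perturb the first coordinates $w_{i1}$ of each approximant by nonzero amounts, small relative to $\|u_i\|\,\|v_i\|$ for that approximant. Then all $w_{i1}$ become nonzero, while the perturbed tensors still have rank at most $r$, still converge to $T'$, and still have invertible first slice. After that, your normalization, rescaling and extension go through.

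Also tidy the linear algebra:
\begin{itemize}
\item The conditions are $XV^t=0$ and $\tilde U\,[V^t\;Y]=I_r$.
\item The relevant fact is that $V^t$ has rank $n$, which follows from $UV^t = I_n$.
\item The rows of $X$ should form a basis of the whole annihilator of $\mathrm{col}(V^t)$; this meets $\mathrm{row}(U)$ trivially because $UV^t=I_n$.
\item Delete the first attempt (generically perturbing $U$), since it changes the tensor.
\end{itemize}
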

\begin{proof}
Since the equivalence of (ii) and (iii) is clear, we focus on the equivalence between (i) and (ii) in this proof.
We assume without loss of generality that $T_1=I_n$ since multiplying all slices by $T_1^{-1}$ does not change the border rank (\cref{lem:mult}).

Consider first the case where $(I_n,T_2T_1^{-1},\ldots,T_pT_1^{-1}) = (I_n,T_2,\ldots,T_p)$ 
 belongs to the projection $\pi_{r,p,n}(D(r,p))$.
Let $(I_r,Z_2,\ldots,Z_p)$ be a diagonalizable commuting extension of size $r$ of this tuple.
The tensor $T'$ with slices  $(I_r,Z_2,\ldots,Z_p)$ has rank $r$ 
by \cref{cor: rank_n_char}.
Since $T$ is a subtensor of $T'$, we conclude that $T$ has rank at most $r$.
It follows immediately that (ii) implies (i). Indeed, if (ii) holds then $T$ belongs to the closure of the set of tensors of rank at most $r$ (as we have just shown), i.e., $T$ has border rank at most $r$.

Conversely, assume that $T$ has border rank {exactly} $r$. We can write $T=\lim_{\epsilon \rightarrow 0} T(\epsilon)$, where $T(\epsilon)$ has a decomposition
$$T(\epsilon)=\sum_{i=1}^r u_i(\epsilon) \otimes v_i(\epsilon) \otimes w_i(\epsilon) .$$
For every small enough $\epsilon$, $\Rank T(\epsilon) = r$  (otherwise, $T$ would be of border rank $<r$) and 
$T_1(\epsilon)$ is invertible since $\lim_{\epsilon \rightarrow 0} T_1(\epsilon) = T_1=I_n$.
We will assume without loss of generality that the first coordinate $w_{i1}(\epsilon)$ of $w_i(\epsilon)$  is nonzero for all $i$ and all $\epsilon$ (slightly perturb  $w_{i1}(\epsilon)$ if necessary). 
We can therefore apply~\cite[Proposition 15]{koi24overcomplete} to $T(\epsilon)$ with $A=T_1(\epsilon)$ (this choice of $A$ is possible since  $w_{i1}(\epsilon) \neq 0$). %Proposition~3.2 
Proposition 15 yields a commuting diagonalizable extension $(I_r,Z_2(\epsilon),\ldots,Z_p(\epsilon))$ 
of size $r$ for the tuple $(I_n,T_1^{-1}(\epsilon)T_2(\epsilon),\ldots,T_1^{-1}(\epsilon)T_p(\epsilon))$. 
In other words, this tuple belongs to $\pi_{r,p,n}(D(r,p))$. We conclude that (ii) holds by taking the limit $\epsilon \rightarrow 0$.

It remain to treat the case where $\rho = \Brank T < r$. We have just shown that  
%$(I_n,T_2T_1^{-1},\ldots,T_pT_1^{-1})$
$(I_n,T_2,\ldots,T_p)$  belongs to  $\overline{\pi_{\rho,p,n}(D(\rho,p))}$. 
We can add $r-\rho$ rows and columns of 0's to the matrices of size $\rho$ in these $p$-tuples, and we conclude that
%$(I_n,T_2T_1^{-1},\ldots,T_pT_1^{-1})$ 
$(I_n,T_2,\ldots,T_p)$ also belongs to  $\overline{\pi_{r,p,n}(D(r,p))}$. 
\end{proof}

\begin{remark} \label{rem:closed}
The closure $\overline{D(r,p)}$ of  $D(r,p)$  is an irreducible component of $C(r,p)$, sometimes known as its ``principal component''~\cite{jelisiejew22}.
Note that the set $\overline{\pi_{r,p,n}(D(r,p))}$ appearing in~\cref{th:bordercommute} is equal to $\overline{\pi_{r,p,n}(\overline{D(r,p)})}$.
It is therefore of interest to determine for which values of $r$, $p$ and $n$ the projection $\pi_{r,p,n}(\overline{D(r,p)})$ is closed. In such a case, 
the set $\overline{\pi_{r,p,n}(D(r,p))}$ can be replaced by $\pi_{r,p,n}(\overline{D(r,p)})$ in Theorem~\ref{th:bordercommute}.
We will return to this question shortly (in \cref{cor:projclosed}) when treating tensors of format $n \times n \times 3$.
\end{remark}

\begin{corollary} \label{cor:closure}
Let $T \in {\bC}^{n \times n \times 3}$ be a tensor with a first slice $T_1$ that is invertible.
For any $r \geq n$, the two following properties are equivalent:
\begin{itemize}
\item[(i)] $T$ has border rank at most $r$.
\item[(ii)] The pair $(T_2T_1^{-1},T_3T_1^{-1})$ belongs to  $\overline{\pi_{r,2,n}(C(r,2))}$.
\end{itemize}
\end{corollary}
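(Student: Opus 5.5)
The plan is to derive \cref{cor:closure} from \cref{th:bordercommute} (with $p=3$, using the equivalence of (i) and (iii) there). That theorem says $T$ has border rank at most $r$ iff $(T_2T_1^{-1},T_3T_1^{-1})\in\overline{\pi_{r,2,n}(D(r,2))}$. It therefore suffices to show
\[
\overline{\pi_{r,2,n}(D(r,2))}=\overline{\pi_{r,2,n}(C(r,2))}.
\]

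The inclusion $\subseteq$ is immediate. Simultaneously diagonalizable pairs commute, so $D(r,2)\subseteq C(r,2)$; applying $\pi_{r,2,n}$ and taking closures gives the inclusion.

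For $\supseteq$ we use the Motzkin--Taussky theorem (\cref{Thm: Motzkin-Taussky}). It says every commuting pair $(Z_2,Z_3)$ is ASD, that is, $C(r,2)\subseteq\overline{D(r,2)}$ (Euclidean closure). The projection $\pi_{r,2,n}$ is linear, hence continuous, so
\[
\pi_{r,2,n}(C(r,2))\subseteq\pi_{r,2,n}\bigl(\overline{D(r,2)}\bigr)\subseteq\overline{\pi_{r,2,n}(D(r,2))}.
\]
Taking closures on the left gives the reverse inclusion.

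The only delicate point is to be consistent about which closure is meant. If the closure in \cref{th:bordercommute} is the Zariski closure, we note that $\pi_{r,2,n}(D(r,2))$ is constructible (the image of a constructible set under a polynomial map, by Chevalley), so its Euclidean and Zariski closures coincide. The argument above then works in either topology. As a sanity check, the case $r=n$ recovers \cref{thm: not1reg_main}: here $\pi$ is the identity and $C(n,2)$ is closed, so (ii) just says the two matrices commute.
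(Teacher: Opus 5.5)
Your proposal is correct and follows essentially the paper's route. The paper likewise combines \cref{th:bordercommute} with Motzkin--Taussky, in the form $\overline{D(r,2)}=C(r,2)$, and with the observation in \cref{rem:closed} that $\overline{\pi_{r,p,n}(D(r,p))}=\overline{\pi_{r,p,n}(\overline{D(r,p)})}$, which is exactly your continuity argument; your constructibility remark on Zariski versus Euclidean closure is a harmless clarification the paper leaves implicit.
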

\begin{proof}
By \cref{Thm: Motzkin-Taussky}, $\overline{D(r,2)}=C(r,2)$. 
The result follows from \cref{th:bordercommute} and \cref{rem:closed}.
\end{proof}
%\PK{We now have the better bound $r-1$ instead of $(r-1)^2$ in the next theorem.}
\begin{theorem} \label{th:over1regular}
    Let $T\in \bC^{n\times n \times p}$ be a tensor with a first slice $T_1$ that is invertible.
   Assume that the $(p-1)$-tuple $(T_2T_1^{-1},\ldots,T_pT_1^{-1})$ belongs to $\pi_{r,p-1,n}(C(r,p-1))$, and let $(Z_2,\ldots,Z_p)$ be the corresponding commuting extension. % of $(T_2T_1^{-1},\ldots,T_pT_1^{-1})$.
   If $Z_2$ is 1-regular then $T$ has border rank at most $r$, and there is a line of tensors $T(\epsilon) = T + \epsilon T'$ with $T' \in \bC^{n\times n \times p}$ such that $\Rank (T(\epsilon)) \leq r$ for all but finitely many $\e$.
   %\begin{itemize}
 %  \item[(i)] $T=\lim_{\epsilon \rightarrow 0} T(\epsilon)$.
 %  \item[(ii)] For all but finitely many $\epsilon$, $\Rank (T(\epsilon)) \leq r$.
  % \end{itemize}
   In particular, if $\Brank T = r$ then $\edeg(T) \leq 1$ and $\bdorder(T) \leq r-1$. %$(r-1)^2$.
\end{theorem}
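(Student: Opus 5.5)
Without loss of generality $T_1=I_n$ (by \cref{lem:mult}, which preserves rank, border rank, $\edeg$ and $\bdorder$). Let $A_i=T_iT_1^{-1}$ and let $(Z_2,\ldots,Z_p)$ be the given commuting extension of size $r$, so $A_i$ is the top-left $n\times n$ block of $Z_i$. The plan is to apply the undercomplete machinery to the square tensor $Z=[I_r,Z_2,\ldots,Z_p]\in\bC^{r\times r\times p}$ and then restrict back to $T$.

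\emph{Perturbing $Z$.} Since $Z_2$ is 1-regular and the $Z_i$ commute, \cref{prop:centralizer} gives $Z_i=p_i(Z_2)$ with $\deg p_i\le r-1$. We apply \cref{prop: linear_part} (equivalently \cref{thm:border1reg}) to $Z$, after conjugating $Z_2$ to Jordan form by some $C$. This yields $Z(\e)=Z+\e Z'$ with slices $I_r$ and the simultaneously diagonalizable matrices $Z_i(\e)=Cp_i(J(\e))C^{-1}$, where $J(\e)$ is the $E_{n,1}$-perturbed Jordan form, blockwise. Hence $\Rank Z(\e)\le r$ for all but finitely many $\e$ by \cref{cor: rank_n_char}. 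Conjugating by the constant $C$ preserves linearity in $\e$.

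\emph{Restricting to $T$.} Let $P=[I_n\ 0]\in\bC^{n\times r}$. Then $T=(P\otimes P\otimes I)\cdot Z$ has slices $PZ_iP^t$, which are exactly $A_i$ (with $I_n$ for $i=1$). Define $T(\e):=(P\otimes P\otimes I)\cdot Z(\e)=T+\e T'$ with $T'$ the top-left blocks of $Z'$. Applying the linear maps $P$ does not increase rank (\cref{th:slices}), so $\Rank T(\e)\le\Rank Z(\e)\le r$ for all but finitely many $\e$. Since $T(\e)\to T$, this gives $\Brank T\le r$. If $\Brank T=r$, \cref{def:errordegree} then gives $\edeg(T)\le 1$.

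\emph{Degeneration order.} Use instead the perturbation from the second half of the proof of \cref{prop: linear_part}, namely $\widehat{J}(\e)=J+(-\e)^{d}E_{d,1}$ on each Jordan block of size $d\le r$. This gives a decomposition $Z(\e)=\sum_{i=1}^r u_i\otimes v_i\otimes w_i$ with $\nu_\e(u)\ge0$, $\nu_\e(v)\ge-(r-1)$ and $\nu_\e(w)\ge0$. Blocks are handled together as in \cref{lem: block_diag}, and the constant change of basis $C$ does not affect valuations. Projecting by $P$ gives $T(\e)=\sum_i Pu_i\otimes Pv_i\otimes w_i$ with the same valuation bounds, and $T(\e)\to T$. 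When $\Brank T=r$, \cref{lem: border_bound} then gives $\bdorder(T)\le r-1$.

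The main obstacle I expect is this last step. The definition of $\bdorder$ requires a decomposition with exactly $\Brank(T)=r$ terms, which our decomposition has. The point to verify is that the limit of the projected decomposition is $T$ and not merely that it approximates it. This holds because $Z(\e)\to Z$ entrywise and projection is continuous, so it should go through.
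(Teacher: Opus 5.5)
Your proposal is correct and is essentially the paper's proof. You normalize $T_1=I_n$, apply \cref{thm:border1reg} (via \cref{prop: linear_part}) to the extension $Z=[I_r,Z_2,\ldots,Z_p]$, and restrict both the linear perturbation and the order-$(r-1)$ degeneration of $Z$ to the top-left $n\times n$ blocks; this is exactly the paper's ``$T$ is a subtensor of $Z$'' step. One small caution on your explicit formula: when $Z_2$ has several Jordan blocks, $p_i(J(\e))$ with a single global $p_i$ of degree up to $r-1$ need not be linear in $\e$ (e.g.\ a $1\times 1$ block $\lambda+\e$ with $p_i(x)=x^2$), so on each block of size $d_k$ you should use $p_i$ reduced modulo $(x-\lambda_k)^{d_k}$, which keeps the perturbation linear and simultaneously diagonalizable (the same reduction is implicitly needed in the general case of \cref{prop:1-reg_pert_linear}).
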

\begin{proof}
By Lemma~\ref{lem:mult}, we can assume without loss of generality that $T_1=I_n$. 
Let $Z \in \bC^{r \times r \times p}$ be the tensor with slices $I_r,Z_2,\ldots,Z_p$. Since $Z_2$ is 1-regular, by Theorem~\ref{thm:border1reg}, $\Brank Z = r$ and $Z$ has error degree at most 1. Hence there is a line of tensors  $Z(\epsilon) \in \bC^{n\times n \times p}$ such that
%  \begin{itemize} 
 %  \item[(i)]
  $Z=\lim_{\epsilon \rightarrow 0} Z(\epsilon)$, and
  % \item[(ii)] 
 $\Rank (Z(\epsilon)) \leq r$  for all but finitely $\epsilon$. We obtain
 %the conclusions of the theorem 
 the result on the degree of error by projecting this line on the $p$ top-left blocks of size $n$.\footnote{Even though $\Brank Z = r$, it might be the case that $\Brank T < r$. In this case, we cannot quite conclude that $T$ has error degree at most 1.}
  % \end{itemize}
  For the degeneration order, we apply \cref{thm:border1reg}:
  %\cref{prop: linear_part}: 
  %Theorem~\ref{thm:epsilon-order}: 
  $Z$ has degeneration order
  at most $r-1$, 
  %$(r-1)^2$, 
  and this applies also to $T$ since it is a subtensor of $Z$.
\end{proof}

\begin{theorem} \label{th:overnot1regular}
  Let $T\in {\bC}^{n\times n \times 3}$ be a tensor with a first slice $T_1$ that is invertible.
  Assume that the pair $(T_2T_1^{-1},T_3T_1^{-1})$ has a commuting extension of size $r$.
  Then $\Brank T \leq r$ and:
  \begin{itemize}
\item[(i)]  There exist tensors $X_i\in {\mathbb C}^{n\times n \times 3}$ such that $T(\epsilon):= T + \epsilon X_1 + \dots + \epsilon^kX_k$ has rank at most $r$ for all but finitely many $\epsilon$. One may take here  $k \leq (r-1)^3 + (r-1)^2.$

\item[(ii)] There are vectors  $u_i, v_i, w_i$ with polynomial entries such that $$\sum_{i=1}^r u_i(\e)\otimes v_i(\e) \otimes w_i(\e) = \e^{\ell} T + \e^{\ell+1}Q$$ for some $Q\in {\mathbb C}[\e]^{n\times n \times 3}$.  Here one may take: $$\ell \leq 2(r-1)^3 + 3(r-1)^2 + 3(r-1).$$
   \end{itemize}
   In particular, if  %$\Brank T = r$ 
   the border rank of $T$ is equal to $r$ then $T$ has degree of error at most $(r-1)^3 + (r-1)^2$ and order of degeneration 
   at most $2(r-1)^3 + 3(r-1)^2 + 3(r-1).$
\end{theorem}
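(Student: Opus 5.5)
We mirror the proof of \cref{th:over1regular}, replacing \cref{thm:border1reg} by \cref{thm: not_1-regular}. By \cref{lem:mult} we may assume $T_1 = I_n$. Let $(Z_2,Z_3)$ be a commuting extension of size $r$ of $(T_2,T_3)$, so $T_2, T_3$ are the top-left $n\times n$ blocks of the commuting matrices $Z_2, Z_3$. Let $Z\in\bC^{r\times r\times 3}$ be the tensor with slices $I_r, Z_2, Z_3$. Its first slice is invertible and $Z_2 I_r^{-1}, Z_3 I_r^{-1}$ commute, so \cref{thm: not_1-regular} applies to $Z$. This gives $\Brank Z = r$, an error-degree bound $(r-1)^3+(r-1)^2$, and a degeneration-order bound $2(r-1)^3+3(r-1)^2+3(r-1)$.

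For (i), \cref{thm: not_1-regular} provides a polynomial curve $Z(\e) = Z + \e Y_1 + \dots + \e^k Y_k$ with $k \le (r-1)^3+(r-1)^2$ and $\Rank Z(\e)\le r$ for all but finitely many $\e$. We let $T(\e)$ be the tensor whose slices are the top-left $n\times n$ blocks of the slices of $Z(\e)$. In tensor language, $T(\e) = (P\otimes P\otimes I_3)Z(\e)$, where $P$ is the coordinate projection $\bC^r\to\bC^n$. This is a restriction, so $\Rank T(\e)\le \Rank Z(\e)\le r$ outside the same finite set. Moreover $T(0)$ is the projection of $Z$, which is $T$, because the top-left block of $I_r$ is $I_n$. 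The $X_i$ are then the projections of the $Y_i$, and the degree bound is preserved. Since $T(\e)\to T$, this also gives $\Brank T\le r$.

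For (ii), we take the explicit identity $\sum_{i=1}^r u_i(\e)\otimes v_i(\e)\otimes w_i(\e) = \e^{\ell}Z + \e^{\ell+1}Q_Z$ with $\ell \le 2(r-1)^3+3(r-1)^2+3(r-1)$, given by the degeneration-order part of \cref{thm: not_1-regular}. Applying $P\otimes P\otimes I_3$ to both sides replaces $u_i, v_i$ by their first $n$ coordinates, which are still polynomial vectors. It also turns the right-hand side into $\e^{\ell}T + \e^{\ell+1}(P\otimes P\otimes I)Q_Z$, which is the desired identity with exactly $r$ terms.

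For the final sentence, suppose $\Brank T = r$. Then the curve from (i) has rank at most $r = \Brank T$ outside finitely many points, so by \cref{def:errordegree} we get $\edeg(T)\le k$. Likewise (ii) is a degeneration of order $\ell$ with $r = \Brank T$ terms, so $\bdorder(T)\le\ell$. The case $\Brank T<r$ does not transfer, because the definitions require exactly $\Brank T$ summands. This is why the conclusion is stated only under $\Brank T = r$. There is no real obstacle here: the only points to check are that restriction preserves the rank bound and the polynomial structure, and that the first slice restricts to $I_n$.
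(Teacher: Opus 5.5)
Your proposal is correct and follows the paper's proof essentially step for step. You normalize to $T_1=I_n$, apply \cref{thm: not_1-regular} to the $r\times r\times 3$ tensor with slices $I_r,Z_2,Z_3$, and get (i) and (ii) by restricting the resulting polynomial curve and degeneration identity to the top-left $n\times n$ blocks; your caveat that the bounds on $\edeg(T)$ and $\bdorder(T)$ only follow when $\Brank T = r$ is exactly the one the paper makes.
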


\begin{proof}
We assume without loss of generality that $T_1=I_n$. Let $(Z_2,Z_3)$ be the commuting extension of $(T_2,T_3)$.
Let $Z \in \bC^{r\times r \times 3}$ be the tensor with slices $I_r,Z_2,Z_3$. 
By \cref{thm: not_1-regular}, $Z$ has border rank $r$, and degree of error at most $(r-1)^3 + (r-1)^2.$
This means that there exists $k \leq (r-1)^3 + (r-1)^2$ and tensors $Y_1,\ldots,Y_k$ such that $Z(\epsilon):= Z + \epsilon Y_1 + \dots + \epsilon^kY_k$ has rank at most $r$ for all but finitely many $\epsilon$.
The slices of the  $X_i$ in item (i) of the theorem are the top-left blocks of size $n$ of the slices of the $Y_i$.
 
Moreover, since $Z$ has degeneration order $$\ell \leq 2(r-1)^3 + 3(r-1)^2 + 3(r-1),$$ there are vectors $u'_i, v'_i, w_i$ with polynomial entries such that $$\sum_{i=1}^r u'_i(\e)\otimes v'_i(\e) \otimes w_i(\e) = \e^{\ell} Z + \e^{\ell+1}Q'$$ for some $Q'\in {\mathbb C}[\e]^{r\times r \times 3}$.  
The slices of tensor $Q$ in item (ii) are the top-left blocks of size $n$ of the slices of $Q'$, and the vectors $u_i,v_i$ are made of the first~$n$ entries of $u'_i,v'_i$.
Like in \cref{th:over1regular}, we obtain a bound on the degree of error of $T$ (and on its degeneration order) only in the case $\Brank T =r$.
\end{proof}

It would be quite interesting to find out for which values of $r$ and $n$ the projections $\pi_{r,2,n}(C(r,2))$  of pairs of commuting matrices of size $r$ on their top-left blocks of size $n$ are closed. 
Indeed, for such integers the statement of \cref{th:overnot1regular} can be strengthened and simplified:
 
\begin{corollary}\label{cor:projclosed}
  Let $T\in \bC^{n\times n \times 3}$ be a tensor with a first slice $T_1$ that is invertible.
  Let $r = \Brank T$, and suppose that the projection $\pi_{r,2,n}(C(r,2))$ is closed.
  Then,  $T$ has degree of error at most $(r-1)^3 + (r-1)^2$ and degeneration order
   at most $2(r-1)^3 + 3(r-1)^2 + 3(r-1).$
\end{corollary}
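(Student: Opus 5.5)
The plan is to combine \cref{cor:closure} with \cref{th:overnot1regular}. As usual, \cref{lem:mult} lets us assume $T_1=I_n$; we then need to show that $(T_2,T_3)$ has a commuting extension of size $r=\Brank T$. Note that $r\geq n$, because $T_1$ is invertible and therefore $\Brank T\geq \Rank T_1 = n$ (a limit of tensors whose first slice is invertible has rank at least $n$). This puts us in the range of \cref{cor:closure}.

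First, since $\Brank T\leq r$, \cref{cor:closure} places the pair $(T_2,T_3)$ in $\overline{\pi_{r,2,n}(C(r,2))}$. Second, the hypothesis says that $\pi_{r,2,n}(C(r,2))$ is closed, so this closure is the projection itself. Hence $(T_2,T_3)=\pi_{r,2,n}(Z_2,Z_3)$ for some commuting pair $(Z_2,Z_3)$ of $r\times r$ matrices, i.e.\ $(T_2T_1^{-1},T_3T_1^{-1})$ has a commuting extension of size $r$. Third, we apply \cref{th:overnot1regular}, whose ``in particular'' clause covers exactly the case $\Brank T=r$. It gives degree of error at most $(r-1)^3+(r-1)^2$ and order of degeneration at most $2(r-1)^3+3(r-1)^2+3(r-1)$.

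There is one point to check, and I expect it to be the only nontrivial one. The reduction $T_1=I_n$ must be compatible with the statement of \cref{cor:closure}, which is phrased in terms of $T_iT_1^{-1}$; it is, since \cref{lem:mult} preserves border rank, error degree and degeneration order. Closedness is used only to pass from the closure to the actual projection. The rest is a direct chaining of the previous results, so the corollary is essentially immediate.
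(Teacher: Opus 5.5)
Your proposal is correct and follows the paper's proof. You use \cref{cor:closure} to place the pair in the closure of the projection, use closedness to drop the closure, and apply the ``in particular'' clause of \cref{th:overnot1regular} with $\Brank T = r$. Your explicit check that $r \geq n$, which \cref{cor:closure} requires, is a detail the paper leaves implicit. The reduction to $T_1 = I_n$ is harmless but unnecessary, since both cited results are already stated in terms of $T_iT_1^{-1}$.
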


\begin{proof}
By \cref{cor:closure} the pair $(T_2T_1^{-1},T_3T_1^{-1})$ belongs to  $\overline{\pi_{r,2,n}(C(r,2))}$, and to $\pi_{r,2,n}(C(r,2))$ since we are assuming that this projection is closed.
The result follows directly from \cref{th:overnot1regular}.
\end{proof}

%=======================================================================
\bibliographystyle{alpha}
\bibliography{bib}
%=======================================================================

\end{document}